\newtheorem{thm}{Theorem}[section]
\newtheorem{co}[thm]{Corollary}
\newtheorem{lem}[thm]{Lemma}
\newtheorem{conj}[thm]{Conjecture}
\newtheorem{assumption}[thm]{Assumption}
\newtheorem{pr}[thm]{Proposition}
\newtheorem{definition}[thm]{Definition}
\newenvironment{de}{\begin{definition}\rm}{\end{definition}}
\newtheorem{example}[thm]{Example}
\newtheorem{remark}[thm]{Remark}
\newenvironment{rem}{\begin{remark}\rm}{\end{remark}}
\newtheorem{algorithm}[thm]{Algorithm}
\newcommand{\renyi}{R\'enyi\ }
\newcommand{\epl}{\varepsilon}
\title{R\' enyi Entropy Rate of Stationary Ergodic Processes~\footnote{A preliminary version~\cite{wu17} of this work has been presented in IEEE ISIT 2017.}~\thanks{This work is supported by the Research Grants Council of the Hong Kong Special Administrative Region, China, under Project 17301017 and Project 17304121, the National Natural Science Foundation of China, under Project 61871343 and 61902380, and the Beijing Nova Program, under Grant Z201100006820061.}}
\author{Chengyu Wu\textsuperscript{1}, Yonglong Li\textsuperscript{2}, Li Xu\textsuperscript{3}, Guangyue Han\textsuperscript{4} \\[2ex]
\textsuperscript{1}The University of Hong Kong, \textit{chengyuw@connect.hku.hk}\\
\textsuperscript{2} National University of Singapore, \textit{elelong@nus.edu.sg}\\
\textsuperscript{3} Chinese Academy of Sciences, \textit{lixu@ict.ac.cn }\\
\textsuperscript{4}The University of Hong Kong, \textit{ghan@hku.hk}\\}
\date{{\normalsize \today}}
\begin{document}\maketitle\thispagestyle{empty}

\begin{abstract}
In this paper, we examine the R\' enyi entropy rate of stationary ergodic processes. 
For a special class of stationary ergodic processes, we prove that the R\'enyi entropy rate always exists and can be polynomially approximated by its defining sequence; moreover, using the Markov approximation method, we show that the R\'enyi entropy rate can be exponentially approximated by that of the  Markov approximating sequence, as the Markov order goes to infinity. 
For the general case, by constructing a counterexample, we disprove the conjecture that the R\' enyi entropy rate of a general stationary ergodic process always converges to its Shannon entropy rate as $\alpha$ goes to $1$. 
\end{abstract}

\section{Introduction}
Let $\mathcal{Z}$ be a finite alphabet. Let $Z_1^n \triangleq (Z_1, Z_2, \ldots, Z_n)$ be a sequence of random variables over $\mathcal{Z}$ with distribution $\mu_n$ and let $z_1^n$ denote its realization. Given $\alpha \in \mathbb{R}$, the {\em $\alpha$-th order \renyi entropy} of $Z_1^n$, first suggested by Alfred \renyi \cite{re1961}, is defined as
\begin{displaymath}
H_\alpha(Z_1^n)=\left\{
\begin{array}{ll}
\vspace{0.2cm}
\displaystyle\hspace{-.7mm} \frac{\log \sum_{z_1^n} (\mu_n(z_1^n))^\alpha}{1-\alpha} &\textrm{if } \alpha \neq 1,\\
\displaystyle\hspace{-.7mm} H(Z_1^n) &\textrm{if } \alpha =1,
\end{array}
\right.
\end{displaymath}
where
$$
H(Z_1^n) \triangleq -\sum_{z_1^n} \mu_n(z_1^n) \log \mu_n(z_1^n)
$$
is the Shannon entropy of $Z_1^n$. An easy application of L'H\^{o}pital's rule shows that 
\begin{align} \label{cont_for_aver_renyi}
\lim_{\alpha\rightarrow 1} H_{\alpha}(Z_1^n)=H(Z_1^n).
\end{align}
\renyi entropy is a fundamental notion in a number of scientific and engineering disciplines, such as coding theory~\cite{cs1995}, chaotic dynamical systems~\cite{gr1983}, statistical mechanics~\cite{ki2008}, statistical inference~\cite{pa2005}, quantum mechanics~\cite{be2006}, multi-fractal analysis~\cite{ji2004}, economics~\cite{ha1975}, guessing~\cite{Arikan1998}, hypothesis testing~\cite{Bassat1978}, and so forth.

Now, consider a stationary stochastic process $Z=\{Z_n\}_{n=1}^\infty$ over the alphabet $\mathcal{Z}$. Let 
$$H(Z)\triangleq \lim_{n\rightarrow \infty} \frac{H(Z_1^n)}{n}$$ be the Shannon entropy rate of $Z$. 
Then, the {\em $\alpha$-th order \renyi entropy rate} $H_{\alpha}(Z)$ 
of $Z$ is defined as
$$
H_{\alpha}(Z) \triangleq \lim_{n \to \infty} \frac{H_{\alpha}(Z_1^n)}{n},
$$
when the limit exists. As opposed to R\'enyi entropy, which has been extensively studied, there has long been a lack of understanding on some basic properties of R\'enyi entropy rate. To name a few, first of all, 
the fundamental problem of the well-definedness of the R\'enyi entropy rate for a general stationary ergodic process remains unknown.
Second, regarding its connection with the Shannon entropy rate, given (\ref{cont_for_aver_renyi}), one is natually tempted to propose the following natural conjecture:
\begin{conj} \label{conj}
Let $Z$ be a stationary ergodic process. Then $$\lim_{\alpha\rightarrow 1} H_{\alpha}(Z)=H(Z).$$
\end{conj}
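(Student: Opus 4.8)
The plan is to reduce Conjecture~\ref{conj} to the assertion that the map $\alpha\mapsto H_\alpha(Z)$ has no jump at $\alpha=1$, and then to attack that assertion by an interchange-of-limits argument; I expect the interchange to be the crux, and — in view of the counterexample announced in the abstract — to be exactly the point where the statement fails for general processes.

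First I would assemble the elementary structure. For every fixed $n$ the map $\alpha\mapsto H_\alpha(Z_1^n)$ is non-increasing, so $H_\alpha(Z_1^n)\ge H(Z_1^n)$ for $\alpha<1$ and $H_\alpha(Z_1^n)\le H(Z_1^n)$ for $\alpha>1$. Granting that $H_\alpha(Z)$ exists for $\alpha$ in a punctured neighbourhood of $1$ — itself a nontrivial point for a general stationary ergodic $Z$, but one that should be available for the special class through a sub/supermultiplicativity argument on $\sum_{z_1^n}\mu_n(z_1^n)^\alpha$ — these inequalities pass to the limit: $\alpha\mapsto H_\alpha(Z)$ is non-increasing, and $\lim_{\alpha\to1^+}H_\alpha(Z)\le H(Z)\le\lim_{\alpha\to1^-}H_\alpha(Z)$, with both one-sided limits existing by monotonicity. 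Hence Conjecture~\ref{conj} is \emph{equivalent} to $\lim_{\alpha\to1^-}H_\alpha(Z)=\lim_{\alpha\to1^+}H_\alpha(Z)$, i.e.\ to the absence of a jump at $\alpha=1$.

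To rule out a jump I would invoke the Moore--Osgood theorem on iterated limits, applied to the array $a_{n,\alpha}\triangleq H_\alpha(Z_1^n)/n$: for each fixed $n$, $\lim_{\alpha\to1}a_{n,\alpha}=H(Z_1^n)/n$ by~\eqref{cont_for_aver_renyi}; for each fixed $\alpha$, $\lim_{n}a_{n,\alpha}=H_\alpha(Z)$; and $\lim_n H(Z_1^n)/n=H(Z)$. The two iterated limits then coincide — and the conjecture follows — provided $a_{n,\alpha}\to H(Z_1^n)/n$ \emph{uniformly in $n$} as $\alpha\to1$. To quantify this, put $g_n(\alpha)\triangleq\log\sum_{z_1^n}\mu_n(z_1^n)^\alpha$, so that $H_\alpha(Z_1^n)=\bigl(g_n(1)-g_n(\alpha)\bigr)/(\alpha-1)$ with $g_n(1)=0$, $g_n'(1)=-H(Z_1^n)$, and $g_n''(\alpha)=\mathrm{Var}_{\nu_{n,\alpha}}\!\bigl(-\log\mu_n(Z_1^n)\bigr)$, where $\nu_{n,\alpha}(z_1^n)\propto\mu_n(z_1^n)^\alpha$ is the escort distribution. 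A second-order Taylor expansion of $g_n$ at $1$ then yields, for some $\xi_n$ between $1$ and $\alpha$,
\[
\frac{H_\alpha(Z_1^n)}{n}=\frac{H(Z_1^n)}{n}-\frac{\alpha-1}{2n}\,\mathrm{Var}_{\nu_{n,\xi_n}}\!\bigl(-\log\mu_n(Z_1^n)\bigr),
\]
so the needed uniformity — hence the conjecture — would follow from a bound $\sup_{n}\sup_{\xi\,\approx\,1}\tfrac1n\mathrm{Var}_{\nu_{n,\xi}}\!\bigl(-\log\mu_n(Z_1^n)\bigr)<\infty$. For an i.i.d.\ source the escort law is again i.i.d., so this variance is exactly linear in $n$; for a finite-order Markov source the escort law is Markov and the variance is $O(n)$ by a mixing/central-limit estimate. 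This is presumably the mechanism behind the positive results for the special class.

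The main obstacle is precisely the last display: for a general stationary ergodic $Z$ there is no reason for the normalized information variance $\tfrac1n\mathrm{Var}_{\nu_{n,\xi}}\!\bigl(-\log\mu_n(Z_1^n)\bigr)$ to stay bounded — the fluctuations of $-\log\mu_n(Z_1^n)$ can grow super-linearly in $n$ — so the Taylor remainder need not vanish uniformly and $H_\alpha(Z)$ can genuinely jump at $\alpha=1$. Consequently, rather than forcing the uniformity argument through in general (which cannot work), the productive direction is the opposite one: to engineer a stationary ergodic source whose $n$-block law is wildly non-uniform — most of the mass carried by very few strings, yet positive mass spread over a super-exponentially large support — so that for $\alpha<1$ the quantity $H_\alpha(Z)$ is pinned near $\log|\mathcal{Z}|$ while $H(Z)<\log|\mathcal{Z}|$, thereby realizing the jump and disproving Conjecture~\ref{conj}.
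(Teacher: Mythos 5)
Your framing is correct and genuinely illuminating --- more diagnostic, in fact, than anything the paper offers. The reduction via monotonicity of $\alpha\mapsto H_\alpha(Z_1^n)$, the Moore--Osgood viewpoint, and in particular the Taylor expansion of $g_n(\alpha)=\log\sum_{z_1^n}\mu_n(z_1^n)^\alpha$ with $g_n''(\alpha)=\mathrm{Var}_{\nu_{n,\alpha}}\bigl(-\log\mu_n(Z_1^n)\bigr)$ cleanly identify the escort-variance obstruction and explain why the conjecture holds for i.i.d., Markov, and $\psi$-mixing sources. You also correctly conclude that the conjecture should be false in general.

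But a disproof requires an actual stationary ergodic counterexample, and this is precisely what your proposal does not supply. The paper's disproof \emph{is} the construction: Sections~\ref{SCT_cutting_and_stacking_m}--\ref{Non-existence_of_rate} build a binary process $\mu_\infty^\mathcal{G}$ by cutting and stacking, prove it is ergodic (via $\epl$-independence and Theorem~\ref{cut_and_erg}), prove $H(\mu_\infty^\mathcal{G})>1/2$ (via a nontrivial entropy bookkeeping through the stages, Corollary~\ref{S(m)_and_S(m+1)} together with Lemmas~\ref{conca_pro_entropy} and~\ref{weak_conv}), and prove $H_\alpha(\mu_\infty^\mathcal{G})=0$ for every $\alpha>1$. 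Stationarity is cheap; ergodicity alongside the required distributional pathology is the hard part, and your plan ("engineer a stationary ergodic source\dots") stops exactly where the real work begins.

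The mechanism you sketch also points in a direction that is not obviously workable. Over a finite alphabet the support of $\mu_n$ is at most $|\mathcal{Z}|^n$, so it cannot be super-exponential, and full support with $H<\log|\mathcal{Z}|$ (a biased coin) produces no jump. Moreover the SMB theorem for an ergodic source confines all but a vanishing fraction of the mass to roughly $2^{H(Z)n}$ strings of probability roughly $2^{-H(Z)n}$, which severely constrains how $\sum_{z_1^n}\mu_n^\alpha$ can behave for $\alpha<1$. The paper works on the other side of $\alpha=1$: for $\alpha>1$ the elementary sandwich $(\max_{z_1^n}\mu_n)^\alpha\le\sum_{z_1^n}\mu_n^\alpha\le(\max_{z_1^n}\mu_n)^{\alpha-1}$ shows that if $\max_{z_1^n}\mu_n$ decays subexponentially then $H_\alpha(Z)=0$ for every $\alpha>1$. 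The construction therefore arranges $\mu_\infty^\mathcal{G}(\{x_1^\infty:x_1^m=1\cdots 1\})\ge\alpha_m$ with $\tfrac1m\log\alpha_m\to 0$ --- a single heavy cylinder whose mass decays only polynomially, which is compatible with ergodicity precisely because that mass still tends to zero --- while keeping $H>1/2$. In short: your reduction to a uniformity question is right, and your conclusion that the conjecture fails is right, but the proposal is missing the counterexample itself, and the mechanism you gesture at is not the one that succeeds.
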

\noindent However, this conjecture is neither proved nor disproved in the literature. 

On the positive side, some special cases have been handled and feature clean solutions. When $Z$ is an independent and identically distributed (i.i.d.) process, $H_{\alpha}(Z)$ boils down to nothing but $H_{\alpha}(Z_1)$. 
For a finite-state ergodic Markov process $Z$, using the Perron-Frobenius theory (see, e.g., \cite{me2001,se2006}), it has been proved in \cite{ra2001} that
\begin{equation} \label{Markov-Case}
H_\alpha(Z)=\frac{\log \lambda_{\max}}{1-\alpha}
\end{equation}
and $H_\alpha(Z)$ converges to the Shannon entropy rate $H(Z)$ as $\alpha$ goes to $1$, 
where $\lambda_{\max}$ is the largest real eigenvalue of the $\lvert\mathcal{Z}\rvert\times \lvert\mathcal{Z}\rvert$-dimensional matrix $R=(r_{i,j})$ with
$
r_{i,j}=(P\{Z_{n+1}=j|Z_n=i\})^\alpha.
$
It turns out that similar results are also valid for mixing processes: for a weakly $\psi$-mixing process $Z$, it has been shown in \cite{NH2008} that $H_{\alpha}(Z)$ is well-defined for $\alpha\geq 1$ and $H_{\alpha}(Z)$ always goes to $H(Z)$ as $\alpha$ goes to $1$; on the other hand, using Kingman's subadditive ergodic theorem \cite{ki1968}, it has been proved in \cite{sz2001} that the \renyi entropy rate of any order exists for the so-called weakly mixing processes.

The contributions of this paper can be summarized as follows.
We first focus our attention on the \renyi entropy rate of a special family of stationary ergodic processes which contains hidden Markov processes~\cite{ep2002} as special cases. 
More precisely, we will examine a 
random process $Y$ under the ``uniform boundedness" and ``exponential forgetting" properties (see Section \ref{special_class} for details).
Using 	
a refined Bernstein blocking method~\cite{Bern1927},
we first show that the R\'{e}nyi entropy rate $H_\alpha(Y)$ exists,
and the convergence rate of
$H_{\alpha}(Y_1^n)/n$ to
$H_{\alpha}(Y)$ is $O(n^{-\gamma})$, where $0 < \gamma < 1$
can be
arbitrarily close to $1$. Note that for the special case when $\alpha=1$ (the Shannon entropy case), it is well known (see, e.g.,~\cite{ha2013}) that the convergence rate is $O(n^{-1})$. So, in some sense, the derived convergence rate is sharp. Borrowing results from the theory of nonnegative matrices, we
also establish that $H_{\alpha}(Y)$ can be exponentially approximated by the \renyi entropy rate of the approximating Markov process, as the Markov order goes to infinity. Undoubtedly, as opposed to the polynomial convergence rate of $H_{\alpha}(Y_1^n)/n$, this exponential convergence rate allows us to compute $H_{\alpha}(Y)$ more efficiently, at least for some special situations.

We then examine the \renyi entropy rate of general stationary ergodic processes, for which we show that Conjecture \ref{conj} is not true. 
Note that the answer to Conjecture \ref{conj} is clearly negative if the ergodicity assumption is dropped: the example in Section IV of \cite{ra2001} shows that for some reducible Markov chain $X$, $H_\alpha(X)$ fails to converge to $H(X)$ as $\alpha$ goes to $1$. Although the existing results for i.i.d., Markov~\cite{ma1999} and weakly $\psi$-mixing processes~\cite{NH2008} might suggest a positive answer to Conjecture \ref{conj}, we will 
construct a stationary ergodic counterexample whose R\' enyi entropy rate does not converge to the Shannon entropy rate as the R\' enyi order goes to $1$.
The main tool employed in the construction is the cutting and stacking method, which is a well-known method in ergodic theory but somehow attracts little attention in the field of information theory. 

The remainder of this paper is organized as follows. 
First, we focus our attention on the special random process $Y$ mentioned above.
We show in Section~\ref{convergence_of_average_renyi} that 
the normalized \renyi entropy $H_{\alpha}(Y_1^n)/n$ converges to $H_{\alpha}(Y)$ polynomially. By introducing the Markov approximation sequence, we prove in Section~\ref{Convergence_of_Markov}
that the \renyi entropy rate of this sequence of Markov chains does converge to $H_{\alpha}(Y)$, and moreover, the rate of convergence is exponential.
Next, we turn to the construction of the stationary ergodic counterexample that disproves Conjecture \ref{conj}.
Some preliminaries on the cutting and stacking method are given in Section~\ref{SCT_cutting_and_stacking_m}. Then, based on this method, the construction of our counterexample is presented in Section \ref{constru_of_counter}, followed by the derivation of several properties of the  counterexample in Section \ref{Properties_of_mu}. As elaborated on in Section~\ref{Non-existence_of_rate}, these properties immediately imply that as $\alpha$ goes to $1$, the \renyi entropy rate fails to converge to the Shannon entropy rate for the constructed stationary ergodic process.

\section{R\' enyi Entropy Rate of a Special Class of Random Processes} \label{special_class}


In this section, we focus on a stationary process $Y$ satisfying the following two conditions:
\begin{enumerate}
\item[$(i)$] {\em uniform boundedness}: there exist $C_L, C_U>0$ such that for any realization sequence $y_1^n$,
$$
C_L\leq p(y_n\vert y_1^{n-1})\leq C_U;
$$
\item[$(ii)$] {\em exponential forgetting}: for any fixed $\alpha$, there exist $C_F>0$ and $0 < \rho_F<1$ such that for any $k, \hat{k}\geq n$ and for any two realization sequences $y_1^k$ and $\hat{y}_1^{\hat{k}}$ with $y_1^n=\hat{y}_1^n$, it holds that
$$
\lvert p^{\alpha}(y_k\vert y_1^{k-1})-p^{\alpha}(\hat{y}_{\hat{k}}\vert \hat{y}_1^{\hat{k}-1})\rvert\leq C_F\rho_F^n;
$$
\end{enumerate}

A typical example satisfying the above conditions is given below.


\begin{example} \label{hidden_Mar}
A hidden Markov chain is a finite-state Markov chain observed through a discrete memoryless channel. To be more specific, let $\mathcal{X}$ be the input alphabet, $\mathcal{Z}$ be the output alphabet, $\{X_n\}_{n=1}^\infty$ be a finite-state Markov chain and $\{p(z\vert x): x\in \mathcal{X}, z\in \mathcal{Z}\}$ be the channel transition probabilities. Then the distribution of a hidden Markov process $Z$ is given by 
$$
p(z_1^n)=\sum_{x_1^n} p(x_1^n, z_1^n)=p(x_1)p(z_1 \vert x_1) \prod_{i=2}^n p(x_i\vert x_{i-1}) p(z_i\vert x_i)
$$
for any realization sequence $z_1^n$. 
If we further assume that $Z$ satisfies the following two conditions:
\begin{enumerate}
\item[(1)]  the input Markov chain is irreducible and aperiodic,
\item[(2)]  the channel transition probability matrix is strictly positive,
\end{enumerate}
then it has been verified in~\cite{an2006} that $\{Z_n\}_{n=1}^\infty$ satisfies Conditions (i) and (ii). Here, we remark that as special cases, i.i.d. processes and irreducible and aperiodic finite-state Markov chains also satisfy Conditions (i) and (ii).
\end{example}

In the remainder of this section, we will first prove that for any fixed $\alpha$, $H_{\alpha}(Y)$ exists
and the convergence rate of $\{H_{\alpha}(Y_1^n)/n\}_{n=1}^\infty$ is polynomial. 
Then, making use of the Markov approximation, we show that when $\rho_F$ is small enough, the \renyi entropy rate of the Markov approximating sequence converges exponentially to $H_{\alpha} (Y)$. Note that the requirement for $\rho_F$ to be small can be justified in some practical situations: for a binary symmetric channel operating at the high signal-to-noise ratio regime, or roughly, its crossover probability is ``close'' to $0$, it has been observed (see, e.g.,~\cite{an2006}) that $\rho_F$ is also ``close'' to $0$. 

Before moving to the next section, let us introduce the following definition.

\begin{de} \label{Mar_app}
For a stochastic process $X$, its $m$-th order {\em Markov approximation}~\cite{gray} is a stochastic process $X^{(m)}=\{X_n^{(m)}\}_{n=1}^\infty$ with distribution $p^{(m)}$
such that:
\begin{itemize}
\item $X^{(m)}$ is an $m$-th order Markov process, that is, for any realization $x_1^n$ with $n\geq m,$
$$
p^{(m)}(x_1^n)=p^{(m)}(x_1^m)\cdot p^{(m)}(x_{m+1}\vert x_1^m) \cdots p^{(m)}(x_n\vert x_{n-m}^{n-1});
$$
\item the $(m+1)$-dimensional distribution of $X^{(m)}$ and $X$ are the same, namely,
$$
p^{(m)}(x_1^{m+1})=p(x_1^{m+1}).
$$
\end{itemize}
\end{de}
\begin{rem} \label{Mar_same_cons}
If $X$ satisfies Conditions $(i)$ and $(ii)$, then for any $m$, $X^{(m)}$ also satisfies these two conditions with the same constants $C_L, C_U, C_F, \rho_F$ (which are independent of $m$).
\end{rem}
Throughout the remainder of this section, we will always assume that $\alpha \not = 1$ since $\alpha=1$ corresponds to the Shannon entropy rate case. Furthermore, we always use $Y$ to denote a stationary process satisfying Conditions $(i)$ and $(ii)$ and $Y^{(m)}$ to denote the $m$-th order Markov approximation of $Y$.

\subsection{Convergence of $\{H_{\alpha} (Y_1^n)/n\}$} \label{convergence_of_average_renyi}

The following theorem establishes the existence of the R\'{e}nyi entropy rate $Y$; moreover, it establishes the convergence of $H_{\alpha}(Y_1^n)/n$ to $H_{\alpha}(Y)$ and gives a rate of convergence. Here, we note from Remark \ref{Mar_same_cons} that the theorem also applies to the $m$-th order Markov approximation $Y^{(m)}$ for any $m\geq 1$.

\begin{thm} \label{poly-rate}
For any $0<\gamma<1$, there exists a constant $C$ such that for all $n$,
\begin{align*}
\left\lvert\frac{H_{\alpha}(Y_1^n)}{n}-H_{\alpha}(Y)\right\rvert\leq C n^{-\gamma}\hspace{-1.0mm}. \vspace{-1.0mm}
\end{align*}
\end{thm}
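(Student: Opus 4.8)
The plan is to establish a subadditive-type estimate for the sequence $\{H_\alpha(Y_1^n)\}$ that, unlike the exact superadditivity one has for i.i.d.\ processes, is only \emph{approximately} additive with an error term controlled by the exponential forgetting property, and then to combine such estimates across a multi-scale (Bernstein) block decomposition. Concretely, fix a block length $\ell$ and write $n = q\ell + r$ with $0\le r<\ell$. The idea is to compare $\sum_{y_1^n} p^\alpha(y_1^n)$ with the product $\bigl(\sum_{y_1^\ell} p^\alpha(y_1^\ell)\bigr)^q$ over the $q$ consecutive blocks. Using the chain rule $p(y_1^n) = \prod_{i} p(y_i\mid y_1^{i-1})$ together with exponential forgetting (condition $(ii)$) — which says the conditional probabilities $p^\alpha(y_i\mid y_1^{i-1})$ inside the $j$-th block are within $C_F\rho_F^{\,m}$ of what they would be if one reset the conditioning at the start of that block, where $m$ is the distance from the block boundary — I would show that for each block the ``true'' contribution and the ``restarted'' contribution differ multiplicatively by a factor $1 + O(\rho_F^{\,g})$ where $g$ is a guard-gap length. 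Uniform boundedness (condition $(i)$) guarantees $p^\alpha(y_i\mid y_1^{i-1})$ stays bounded away from $0$ and $\infty$, so these multiplicative perturbations are benign and can be summed; it also guarantees $\sum_{y_1^\ell} p^\alpha(y_1^\ell)$ grows/decays exponentially in $\ell$ at a controlled rate, so $H_\alpha(Y_1^\ell)/\ell$ stays bounded.

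The second step is to turn this into a Cauchy estimate. Let $a_\ell \triangleq H_\alpha(Y_1^\ell)/\ell$. The block comparison yields something of the shape
\begin{align*}
\bigl| H_\alpha(Y_1^n) - q\,H_\alpha(Y_1^\ell) \bigr| \le \frac{C}{|1-\alpha|}\Bigl( q\,\ell\,\rho_F^{\,g} + (\text{edge terms of size } O(\ell+g)) \Bigr),
\end{align*}
where $g$ is a guard gap inserted between consecutive blocks (the Bernstein idea: blocks of length $\ell$ separated by gaps of length $g$, with $g$ large enough that $\ell\rho_F^{\,g}$ is small but $g = o(\ell)$ so the gaps are asymptotically negligible). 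Dividing by $n = q(\ell+g)+r$ gives $|a_n - a_\ell| \le \varepsilon(\ell) + O((\ell+g)/n)$ where $\varepsilon(\ell)\to 0$ as $\ell\to\infty$. This is precisely the hypothesis of the standard ``near-subadditivity'' lemma (a Fekete-type argument with an error term), which forces $a_n$ to converge; call the limit $H_\alpha(Y)$. So existence of the R\'enyi entropy rate follows.

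For the quantitative rate, one optimizes the free parameters. Given the target exponent $\gamma<1$, choose the guard gap $g = g(\ell)$ proportional to $\log \ell$ (so that $\ell\rho_F^{\,g} \le \ell^{-K}$ for any prescribed $K$, using $\rho_F<1$), and then choose the block length $\ell$ as a power of $n$, say $\ell \asymp n^{\gamma}$. With these choices the error in $|a_n - H_\alpha(Y)|$ has two competing pieces: the forgetting piece $\varepsilon(\ell)$, which is super-polynomially small in $\ell$ hence $\le n^{-\gamma}$, and the edge/gap piece $O((\ell + g(\ell))/n) = O(n^{\gamma-1}\cdot \mathrm{polylog}\, n)$ — wait, here one must be slightly careful: $(\ell+g)/n \asymp n^{\gamma-1}\log n$, which is $o(n^{-\gamma})$ only if $2\gamma < 1$. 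To get $\gamma$ arbitrarily close to $1$ one instead iterates or balances more carefully: take $\ell \asymp n^{\gamma}$ and note that after passing to the limit the relevant comparison is between $a_n$ and $a_\infty$ directly, so the ``edge'' contribution is really $O((\ell+g)/\ell) $ summed geometrically, i.e.\ $O(g/\ell) = O(\ell^{-1}\log\ell)$ from the gap plus $O(\rho_F^{\,g})$ from forgetting, and then $|a_n - a_\infty| \lesssim \rho_F^{\,g(\ell)} + g(\ell)/\ell + \ell/n$. Balancing $\ell/n \asymp \ell^{-1}\log\ell$ is impossible to push past $n^{-1/2}$, so the actual mechanism must be a telescoping/doubling argument: bound $|a_n - a_{2n}|$ by $O(n^{-\gamma})$ (summable-ish) using $\ell \asymp n$, $g\asymp \log n$, and then sum the telescoping series $|a_n - a_\infty| \le \sum_{k\ge 0} |a_{2^k n} - a_{2^{k+1} n}|$, which converges at rate $O(n^{-\gamma})$ since each term is geometrically decreasing in $k$.

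I expect the main obstacle to be exactly this last bookkeeping: getting the exponent $\gamma$ all the way up to (but not including) $1$ rather than settling for $1/2$, which requires the doubling/telescoping trick rather than a single-shot block estimate, and requires carefully tracking that the guard gaps contribute only a logarithmic (hence negligible) overhead at each scale. A secondary technical nuisance is the factor $1/|1-\alpha|$ and the sign of $1-\alpha$: for $\alpha>1$ versus $\alpha<1$ the sum $\sum p^\alpha$ behaves differently (it can blow up or shrink), but uniform boundedness tames both regimes, so this is a matter of care rather than a genuine difficulty. The forgetting-to-multiplicative-perturbation conversion — i.e.\ going from $|p^\alpha(y_i\mid\cdot) - p^\alpha(\hat y_i\mid\cdot)|\le C_F\rho_F^{\,n}$ to $\prod (1 + O(\rho_F^{\,n})) = 1 + O(\rho_F^{\,n})$ over a block and then over $q$ blocks — is routine given the uniform lower bound $C_L^\alpha$ (or $C_U^\alpha$) on the conditionals, so I would not belabor it.
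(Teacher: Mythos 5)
Your proposal is correct and follows essentially the same route as the paper: Bernstein blocking with a guard gap to decouple consecutive blocks, uniform boundedness (Condition~$(i)$) to discard the guard segments, exponential forgetting (Condition~$(ii)$) to replace true conditionals by restarted ones at a multiplicative cost of $(1+O(\rho_F^{\,g}))$ per symbol, and then a telescoping argument across an infinite ladder of scales to convert the one-step near-subadditivity into a bound on $|H_\alpha(Y_1^n)/n - H_\alpha(Y)|$. You also correctly identified the crux that a single-shot block estimate caps out around $n^{-1/2}$ and that one needs to iterate across scales to push $\gamma$ arbitrarily close to $1$.

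The only real deviation is the choice of parameters in the iteration: the paper compares scale $n$ to scale $n^{\lambda}$ with $\lambda=(1+\gamma)/2$, taking polynomially long guards $q\asymp n^{(1-\gamma)/2}$, and telescopes along $n, n^{1/\lambda}, n^{1/\lambda^2},\ldots$; you instead double ($n\to 2n\to 4n\to\cdots$) with logarithmic guards $g\asymp\log n$. Your scheme is in fact slightly tighter (it yields an $O(\log n/n)$ rate outright), and the doubling sum $\sum_k |a_{2^k n}-a_{2^{k+1}n}|$ converges just as the paper's super-geometric chain does; it also lets you skip the separate Cauchy step the paper carries out before extracting the rate. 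The one piece of bookkeeping your sketch leaves implicit — and which the paper has to confront explicitly — is the two-sided nature of the comparison (bounding $\log\sum p^\alpha(y_1^{2n})$ both above and below the doubled quantity), together with the small length mismatch when the guard is carved out of one of the two blocks; these are handled exactly as you say, via Condition~$(i)$, so they are routine.
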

\begin{proof}
We only prove the theorem for the case $0\leq \alpha<1$, since the cases $\alpha<0$ and $\alpha>1$ can be similarly handled.
	
	For any constant $\gamma\in (0,1)$, let
	$$
	\lambda=\displaystyle\frac{1+\gamma}{2}, \; \beta=\frac{1-\gamma}{2}, \; p=n^\lambda, \; q=\frac{n^\beta}{2}, \; \omega=n^{1-\lambda}.
	$$
Now we use the Bernstein blocking method (see \cite{Bern1927}) to consecutively partition the sequence $y_1^n$ into small pieces of length $q$, $q$ and $p-2q$. To be more specific, define
\begin{align*}
\xi_i\triangleq p^\alpha\left( y_{(i-1)p+1}^{(i-1)p+q}\big\vert y_1^{(i-1)p} \right),\quad \eta_i\triangleq p^\alpha\left( y_{(i-1)p+q+1}^{(i-1)p+2q}\big\vert y_1^{(i-1)p+q} \right), \quad
\zeta_i\triangleq p^\alpha\left( y_{(i-1)p+2q+1}^{ip}\big\vert y_1^{(i-1)p+2q} \right)
\end{align*}
and their truncated versions
\begin{align*}
\hat{\xi}_i\triangleq p^\alpha\left( y_{(i-1)p+1}^{(i-1)p+q}\big\vert y_{(i-2)p+q+1}^{(i-1)p} \right),\quad
\hat{\eta}_i\triangleq p^\alpha\left( y_{(i-1)p+q+1}^{(i-1)p+2q} \right), \quad
\hat{\zeta}_i\triangleq p^\alpha\left( y_{(i-1)p+2q+1}^{ip}\big\vert y_{(i-1)p+q+1}^{(i-1)p+2q} \right).
\end{align*}
Then, using the fact that for $k \neq j$, the $y$-sequences associated with $\hat{\eta}_j\hat{\zeta}_j\hat{\xi}_{j+1}$ and $\hat{\eta}_k\hat{\zeta}_k\hat{\xi}_{k+1}$ are both of length $p=n^\lambda$ and their index sets are non-overlapping, we have
\begin{align}
\sum_{y_1^n}p^\alpha(y_1^n)&=\sum_{y_1^n}\xi_1\eta_1\zeta_1\xi_2\eta_2\zeta_2 \cdots \xi_\omega \eta_\omega \zeta_\omega \notag \\
&\overset{(a)}{\leq} \sum_{y_1^n} C_U^{\alpha q\omega} \eta_1 \zeta_1 \eta_2 \zeta_2 \cdots \eta_\omega \zeta_\omega \notag \\
&\overset{(b)}{\leq} \sum_{y_1^n} C_U^{\alpha q\omega} \left( \displaystyle\frac{C_U}{C_L} \right)^{\alpha q \omega} \hat{\eta}_1 \zeta_1 \hat{\eta}_2 \zeta_2 \cdots \hat{\eta}_\omega \zeta_\omega \notag \\
&\overset{(c)}{\leq} \sum_{y_1^n} \left( \displaystyle\frac{C_U^2}{C_L} \right)^{\alpha q \omega} \left( 1+\displaystyle\frac{C_F \rho_F^q}{C_L^\alpha} \right)^{(p-2q)\omega} \notag \cdot \hat{\eta}_1 \hat{\zeta}_1 \hat{\eta}_2 \hat{\zeta}_2 \cdots \hat{\eta}_\omega \hat{\zeta}_\omega  \notag \\
& \overset{(d)}{\leq} \left( \displaystyle\frac{C_U^2}{C_L} \right)^{\alpha q \omega} \left( 1+\displaystyle\frac{C_F \rho_F^q}{C_L^\alpha} \right)^{(p-2q)\omega}  \cdot\sum_{y_1^n} \left[ \left( \displaystyle\frac{1}{C_L^{\alpha}} \right)^{q \omega}
   (\hat{\eta}_1\hat{\zeta}_1\hat{\xi}_2)\cdots (\hat{\eta}_\omega\hat{\zeta}_\omega\hat{\xi}_{\omega+1}) \right] \notag \\
&= \left( \displaystyle\frac{C_U}{C_L} \right)^{2\alpha q \omega} \left( 1+\displaystyle\frac{C_F \rho_F^q}{C_L^\alpha} \right)^{(p-2q)\omega}  \cdot \left( \sum\nolimits_{y_1^{n^\lambda}} p^\alpha (y_1^{n^\lambda}) \right)^\omega, \label{bound-1}
\end{align}
where for $(a)$ and $(b)$, we have used Condition $(i)$ to drop all $\xi_i$'s and replaced all $\eta_i$'s by their truncated versions; for $(c)$, we have applied Conditions $(i)$ and $(ii)$ to replace all $\zeta_i$'s by their truncated versions; and for $(d)$, we have applied Condition $(i)$ to add $\hat{\xi}_2, \cdots ,\hat{\xi}_{\omega+1}$.

Taking logarithm and dividing both sides of (\ref{bound-1}) by $n$, we obtain
\begin{align}
\displaystyle\frac{\log\sum_{y_1^n} p^\alpha(y_1^n)}{n}&=\frac{\omega \log\sum_{y_1^{n^\lambda}}p^\alpha (y_1^{n^\lambda})}{n}+\displaystyle\frac{2\alpha q \omega \log\left(\displaystyle\frac{C_U}{C_L}\right)}{n} +\frac{(p-2q)\omega \log\displaystyle\left(1+\frac{C_F \rho_F^q}{C_L^\alpha}\right)}{n} \notag \\
& \leq \displaystyle\frac{\log \sum_{y_1^{n^\lambda}} p^\alpha (y_1^{n^{\lambda}})}{n^\lambda }
 +\alpha \log\displaystyle\left( \frac{C_U}{C_L} \right)n^{-\gamma}+\frac{C_F\rho_F^q}{C_L^\alpha}. \notag
\end{align}
Note that $0< \rho_F<1$ and $q=n^\beta/{2}$ implies
$$
\frac{C_F\rho_F^q}{C_L^\alpha}\leq \alpha \log\left(\frac{C_U}{C_L} \right)n^{-\gamma}
$$
for sufficient large $n$.
It then follows that
$$
\displaystyle\frac{\log\sum_{y_1^n} p^\alpha(y_1^n)}{n}\leq \displaystyle\frac{\log \sum_{y_1^{n^\lambda}} p^\alpha (y_1^{n^{\lambda}})}{n^\lambda }+2\alpha \log\left( \frac{C_U}{C_L} \right)n^{-\gamma},
$$
which immediately implies 
\begin{align} \label{upperbd_averaged_renyi}
\frac{H_{\alpha}(Y_1^n)}{n}\leq \frac{H_{\alpha}(Y_1^{n^\lambda})}{n^\lambda}+ {C}_1 n^{-\gamma}
\end{align}
for some constant $C_1$. Applying a parallel argument to the other direction, we obtain that for sufficiently large $n$,
\begin{align} \label{lowerbd_averaged_renyi}
\frac{H_{\alpha} (Y)}{n} \geq \frac{H_{\alpha} (Y_1^{n^\lambda})}{n^\lambda} + C_2 n^{-\gamma}
\end{align}
for some constant $C_2$.
Choosing $\widetilde{C}\triangleq \max \{C_1, C_2\}$, we derive from (\ref{upperbd_averaged_renyi}) and (\ref{lowerbd_averaged_renyi}) that
\begin{align} \label{abs_averaged_renyi}
\left\lvert \frac{H_{\alpha}(Y_1^n)}{n}- \frac{H_{\alpha}(Y_1^{n^\lambda})}{n^\lambda}\right\rvert<\widetilde{C}n^{-\gamma}.
\end{align}
Now consider any $m,n$ with $m>n\ge N$, where $N$ is a sufficiently large number to be determined later. Pick a number $\xi$ between $1$ and $\sqrt{2}$ (\emph{e.g.}, $5/4$). Let $t$ be the positive integer such that
$$t'\triangleq t+\log_\xi\log_\xi n-\log_\xi\log_\xi m\in [1,2).$$
Then, $\xi^{t'}\in [\xi,\xi^2)\subset (1,2)$ and $m^{\xi^{t'}}=n^{\xi^t}$.
Let
$$
\lambda_1=\xi^{-t'}, \; \gamma_1=2\xi^{-t'}-1, \; \lambda_2=\xi^{-1}, \; \gamma_2=2\xi^{-1}-1.
$$
Then, $0<\lambda_1,\gamma_1,\lambda_2,\gamma_2<1$ and 
\begin{align}\hspace{-1.0mm} \label{H_alpha_n_rate}
\left\lvert\frac{H_{\alpha}(Y_1^m)}{m}-\frac{H_{\alpha}(Y_1^n)}{n}\right\rvert &\leq  \biggl\lvert\frac{H_{\alpha}(Y_1^m)}{m}-\frac{H_{\alpha}(Y_1^{m^{\xi ^{t'}}})}{m^{\xi ^{t'}}}\biggr\rvert+\bigg\lvert \frac{H_{\alpha}(Y_1^n)}{n}-\frac{H_{\alpha}(Y_1^{n^{\xi ^t}})}{n^{\xi ^t}}\bigg\rvert \notag\\
&\le   \bigg\lvert\frac{H_{\alpha}(Y_1^m)}{m}-\frac{H_{\alpha}(Y_1^{m^{\xi ^{t'}}})}{m^{\xi ^{t'}}}\bigg\rvert+\bigg\lvert\frac{H_{\alpha}(Y_1^n)}{n}-\frac{H_{\alpha }(Y_1^{n^\xi})}{n}\bigg\rvert\hspace{0.5mm} \notag \\
&\textnormal{ \ \ \  } +\bigg\lvert \frac{H_{\alpha}\hspace{-0.1mm}(Y_1^{n^{\xi}})}{n^\xi}\hspace{-0.5mm}-\hspace{-0.5mm}\frac{H_{\alpha}\hspace{-0.1mm}(Y_1^{n^{\xi ^2}})}{n^{\xi ^2}}\bigg\rvert\hspace{-0.5mm}+\hspace{-0.3mm}\cdots\hspace{-0.3mm}+\hspace{-0.5mm}\bigg\lvert \frac{H_{\alpha}\hspace{-0.1mm}(Y_1^{n^{\xi ^{t-1}}})}{n^{\xi ^{t-1}}}\hspace{-0.5mm}-\hspace{-0.5mm}\frac{H_{\alpha}\hspace{-0.1mm}(Y_1^{n^{\xi ^t}})}{n^{\xi ^t}}\bigg\rvert \notag \\
&\overset{(e)}{\leq} \widetilde{C} m^{-\gamma_1/\lambda_1}+\widetilde{C} n^{-\gamma_2/\lambda_2}+\widetilde{C} n^{-\gamma_2/\lambda_2^2}+\cdots+\widetilde{C} n^{-\gamma_2/\lambda_2^t} \notag\\
&\overset{(f)}{\leq} \widetilde{C} m^{\xi^2-2}+\frac{\widetilde{C} n^{-(2-\xi)}}{1-n^{-(2-\xi)(\xi -1)}},
\end{align}
where $(e)$ follows from the inequality (\ref{abs_averaged_renyi}) and $(f)$ follows from the fact that $a^n-1 \geq n(a-1)$ for any $1<a<2$. For any given $\varepsilon > 0$, by choosing a sufficiently large $N$ such that
$$
\widetilde{C} N^{\xi^2-2}<\varepsilon/2 \quad \mbox{and} \quad \frac{\widetilde{C} N^{-(2-\xi)}}{1-N^{-(2-\xi)(\xi -1)}}<\varepsilon/2,
$$
we derive from (\ref{H_alpha_n_rate}) that
$$
\left\lvert \frac{H_{\alpha}(Y)}{m}-\frac{H_{\alpha}(Y)}{n}\right\rvert<\varepsilon
$$
for any $m>n\geq N.$
Thus the sequence $\{H_{\alpha,n}(Y)\}_{n\in \mathbb{N}}$ is Cauchy, and thereby convergent.
Furthermore, for any positive integers $k$ and $n$ with $n^{\gamma(\gamma-1)/1+\gamma}\leq \frac{1}{2}$, we have
\begin{align*}
 \bigg\lvert \frac{H_{\alpha}(Y_1^{n})}{n}-\frac{H_{\alpha}(Y_1^{n^{1/\lambda^k}})}{n^{1/\lambda^k}}\bigg\rvert & \le  \bigg\lvert \frac{H_{\alpha}(Y_1^n)}{n}-\frac{H_{\alpha}(Y_1^{n^{1/\lambda}})}{n^{1/\lambda}}\bigg\rvert
+\bigg\lvert \frac{H_{\alpha}(Y_1^{n^{1/\lambda}})}{n^{1/\lambda}}-\frac{H_{\alpha}(Y_1^{n^{1/\lambda^2}})}{n^{1/\lambda^2}}\bigg\rvert\\
& \textnormal{ \ \ \ }+\cdots+\bigg\lvert\frac{H_{\alpha}(Y_1^{n^{1/\lambda^{k-1}}})}{n^{1/\lambda^{k-1}}}-\frac{H_{\alpha}(Y_1^{n^{1/\lambda^k}})}{n^{1/\lambda^k}}\bigg\rvert\\
&\le  \widetilde{C} n^{-\gamma/\lambda}+\widetilde{C} n^{-\gamma/\lambda^2}+\cdots+ \widetilde{C} n^{-\gamma /\lambda^{k}}\\
&\le  \frac{\widetilde{C} n^{-\gamma}}{1-n^{\gamma-\gamma/\lambda}} \leq  \frac{\widetilde{C}}{n^{\gamma}-n^{\frac{2\gamma^2}{1+\gamma}}}\le \frac{2\widetilde{C}}{n^\gamma}.
\end{align*}
Then, letting $k$ tend to infinity, we have, for all sufficiently large $n$,
\begin{align*}
\left\lvert \frac{H_{\alpha}(Y_1^n)}{n}-H_{\alpha}(Y)\right\rvert\le 2\widetilde{C} n^{-\gamma}.
\end{align*}
The proof is then complete with an appropriately chosen common constant $C$ for all $n$.
\end{proof}


\subsection{Convergence of $\{H_{\alpha} (Y^{(m)})\}$} \label{Convergence_of_Markov}

When it comes to the computation of $H_{\alpha}(Y)$, the convergence of $\{H_{\alpha}(Y_1^n)/n\}$ as in Theorem~\ref{poly-rate} may be too slow to be applied in practice. 
In this section, we show that under some additional assumptions, $H_\alpha(Y)$ can be approximated by another exponentially convergent sequence that can be efficiently computed.

Our motivation comes from the fact that the R\'enyi entropy rate of a Markov process features a simple formula as in (\ref{Markov-Case}).
For any $m$, let $Y^{(m)}$ be the $m$-th order Markov approximation of $Y$. It is obvious form Definition \ref{Mar_app} that as $m$ goes to infinity, $Y^{(m)}$ converges in distribution to the original process $Y$; moreover, we note from \cite{ra2001} that $H_{\alpha} (Y^{(m)})$ is well-defined for all $m$. 
Indeed, we have the following theorem.

\begin{thm} \label{convergent}
$\lim\limits_{m\rightarrow \infty} H_\alpha(Y^{(m)})=H_\alpha(Y).$
\end{thm}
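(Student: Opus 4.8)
The plan is to compare $H_\alpha(Y^{(m)})$ and $H_\alpha(Y)$ by writing each as a limit in $n$ of the normalized quantities $H_\alpha((Y^{(m)})_1^n)/n$ and $H_\alpha(Y_1^n)/n$, and to exploit the fact that $Y^{(m)}$ and $Y$ agree on blocks of length up to $m+1$. First I would fix $\alpha\neq 1$ and work with $S_n(X)\triangleq\sum_{x_1^n} p_X^\alpha(x_1^n)$, so that $H_\alpha(X_1^n)/n=\frac{\log S_n(X)}{n(1-\alpha)}$. The key point is that by Remark~\ref{Mar_same_cons}, $Y^{(m)}$ satisfies Conditions $(i)$ and $(ii)$ with the \emph{same} constants $C_L,C_U,C_F,\rho_F$ as $Y$, so Theorem~\ref{poly-rate} applies to $Y^{(m)}$ with constants uniform in $m$: for any $0<\gamma<1$ there is a single $C$ (independent of $m$) with $\lvert H_\alpha((Y^{(m)})_1^n)/n - H_\alpha(Y^{(m)})\rvert\le C n^{-\gamma}$, and likewise for $Y$. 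This reduces the theorem to showing that for a suitable choice of $n=n(m)\to\infty$, the difference $\lvert H_\alpha((Y^{(m)})_1^n)/n - H_\alpha(Y_1^n)/n\rvert\to 0$.

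The main estimate will bound $\lvert \log S_n(Y^{(m)}) - \log S_n(Y)\rvert$ via a telescoping/chaining argument over the conditional probabilities. Both $p(y_1^n)$ and $p^{(m)}(y_1^n)$ factor as products of one-step conditionals $p(y_k\vert y_1^{k-1})$ and $p^{(m)}(y_k\vert y_{k-m}^{k-1})=p(y_k\vert y_{k-m}^{k-1})$ respectively (using the matching $(m+1)$-dimensional distributions and stationarity). By Condition $(ii)$ applied with $n$ replaced by $m$ — since $y_1^{k-1}$ and $y_{k-m}^{k-1}$ share the last $m$ symbols — each factor satisfies $\lvert p^\alpha(y_k\vert y_1^{k-1}) - p^\alpha(y_k\vert y_{k-m}^{k-1})\rvert\le C_F\rho_F^m$, while Condition $(i)$ gives the uniform lower bound $p^\alpha\ge C_L^\alpha$ (for $0<\alpha<1$; the reciprocal bound for $\alpha>1$). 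Hence each ratio $p^\alpha(y_k\vert y_1^{k-1})/p^\alpha(y_k\vert y_{k-m}^{k-1})$ lies in $[1-C_F\rho_F^m/C_L^\alpha,\,1+C_F\rho_F^m/C_L^\alpha]$, so multiplying $n$ of them gives $p^\alpha(y_1^n)/p^{(m)\,\alpha}(y_1^n)\in\big[(1-C_F\rho_F^m/C_L^\alpha)^n,(1+C_F\rho_F^m/C_L^\alpha)^n\big]$; summing over $y_1^n$ yields the same two-sided bound for $S_n(Y)/S_n(Y^{(m)})$. Taking logarithms, $\lvert \log S_n(Y^{(m)}) - \log S_n(Y)\rvert\le n\log\!\big(1+C_F\rho_F^m/C_L^\alpha\big)/(1-o(1))\le 2n\,C_F\rho_F^m/C_L^\alpha$ once $m$ is large, hence $\lvert H_\alpha((Y^{(m)})_1^n)/n - H_\alpha(Y_1^n)/n\rvert\le C' \rho_F^m$ uniformly in $n$, with $C'$ independent of $n$ and $m$.

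Combining the three pieces via the triangle inequality, for every $n$ and $m$,
\begin{align*}
\lvert H_\alpha(Y^{(m)}) - H_\alpha(Y)\rvert \le 2Cn^{-\gamma} + C'\rho_F^m.
\end{align*}
Letting $n\to\infty$ on the right-hand side (the left side does not depend on $n$) gives $\lvert H_\alpha(Y^{(m)}) - H_\alpha(Y)\rvert \le C'\rho_F^m$, which tends to $0$ as $m\to\infty$; this in fact establishes the stronger claim announced in the text, namely exponential convergence with no smallness assumption on $\rho_F$ needed for mere convergence (the smallness of $\rho_F$ matters only when one wants the Markov-order convergence to dominate the polynomial $n^{-\gamma}$ rate in a combined numerical estimate). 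The step I expect to require the most care is the per-factor comparison: one must check that Condition $(ii)$ is genuinely applicable with parameter $m$ — i.e. that $y_1^{k-1}$ (length $k-1\ge m$) and the length-$m$ suffix $y_{k-m}^{k-1}$ do agree on their first $m$ coordinates in the sense demanded by the statement of $(ii)$ — and to track that the constant from the log expansion stays bounded, which forces restricting to $m$ large enough that $C_F\rho_F^m/C_L^\alpha<1/2$; the case split $0<\alpha<1$ versus $\alpha>1$ versus $\alpha<0$ (affecting whether $C_L^\alpha$ or $C_U^\alpha$ is the relevant bound and whether $1-\alpha$ flips a sign) is routine but must be handled, exactly as in Theorem~\ref{poly-rate}.
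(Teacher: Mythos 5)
Your proof is correct, and the overall skeleton (Theorem~\ref{poly-rate} applied uniformly to $Y$ and to $Y^{(m)}$, plus a direct comparison of the finite-$n$ normalized quantities) coincides with the paper's. The genuine difference lies in how the middle term $\lvert H_{\alpha}((Y^{(m)})_1^n)/n - H_{\alpha}(Y_1^n)/n\rvert$ is estimated. You compare the two measures multiplicatively, bounding each one-step ratio $p^\alpha(y_k\vert y_1^{k-1})/p^\alpha(y_k\vert y_{k-m}^{k-1})$ within $[1-\epl_m,1+\epl_m]$ with $\epl_m=C_F\rho_F^m/C_L^\alpha$, so the ratio of sums lies in $[(1-\epl_m)^{n-m},(1+\epl_m)^{n-m}]$ and the logarithmic gap is at most $2(n-m)\epl_m$; after dividing by $n$ this gives a bound $O(\rho_F^m)$ \emph{uniform in $n$}. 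The paper instead bounds the additive difference $\lvert\prod a_i - \prod b_i\rvert$ and divides by $\sum\prod b_i$, which produces the estimate $\frac{1}{\lvert 1-\alpha\rvert}\frac{C_F\rho_F^m}{C_L^{n-m+1}}$ — a bound that deteriorates as $n$ grows — and therefore requires the coupled choice $n=\lfloor(T+1)m\rfloor-1$ with $T<1$ satisfying $\rho_F<C_L^T$ to drive all three terms to zero simultaneously. Your ratio-based estimate is both simpler and sharper: it removes the need for a specially tuned $n(m)$, it sidesteps the delicate step $\lvert\log(1+x)\rvert\le\lvert x\rvert$ (which strictly speaking fails for $x<0$ and whose correctness in the paper is not obvious when the relative error is not small), and it shows directly that $\lvert H_\alpha(Y^{(m)})-H_\alpha(Y)\rvert\le C'\rho_F^m$, i.e., exponential convergence with no smallness condition on $\rho_F$. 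By contrast, the paper's Theorem~\ref{expo-rate} obtains the exponential rate through a separate Collatz--Wielandt eigenvalue-perturbation argument and only under the extra hypothesis $\rho_F<(C_L/C_U)^{2\alpha}$. Two small points worth tightening in a final write-up: the product has $n-m$ factors rather than $n$ (harmless since $n-m\le n$), and for $\alpha>1$ or $\alpha<0$ the denominator bound $C_L^\alpha$ should be replaced by $\min\{C_L^\alpha, C_U^\alpha\}$, as you already flag.
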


\begin{proof}
Note that for any $m$ and $n$, we have
{\small
\begin{align} \label{three-terms}
\lvert H_\alpha(Y^{(m)})-H_\alpha(Y)\rvert& \leq \bigg\lvert H_\alpha(Y)-\frac{H_{\alpha}(Y_1^n)}{n}\bigg\rvert  +\bigg\lvert\frac{H_{\alpha}(Y_{\textnormal{ \ \ \ \ }1}^{(m)n})}{n}-\frac{H_{\alpha}(Y_1^n)}{n}\bigg\rvert +\bigg\lvert\frac{H_{\alpha}(Y_{\textnormal{ \ \ \ \ }1}^{(m)n})}{n}-H_\alpha(Y^{(m)})\bigg\rvert.
\end{align}
}
\!\!We first deal with the first and third terms of the RHS of (\ref{three-terms}). It follows from Theorem~\ref{poly-rate} (applied to $Y$ and $Y^{(m)}$, which satisfy Conditions $(i)$ and $(ii)$) that for any given $\varepsilon > 0$, there exists $N_1 > 0$ such
that for any $n \geq N_1$ and any $m$,
$$
\left\lvert H_\alpha(Y)-\frac{H_{\alpha}(Y_1^n)}{n}\right\rvert \leq \varepsilon/3, \quad  \left\lvert H_\alpha(Y^{(m)})-\frac{H_{\alpha}(Y_{\textnormal{ \ \ \ \ }1}^{(m)n})}{n}\right\rvert \leq \varepsilon/3.
$$
Now, for the second term in the RHS of (\ref{three-terms}), we have
\begin{align}
&\left\lvert \frac{H_{\alpha}(Y_{\textnormal{ \ \ \ \ }1}^{(m)n})}{n}-\frac{H_{\alpha}(Y)}{n}\right\rvert \notag \\
&=\left\lvert{\displaystyle\frac{1}{(1-\alpha)n}\left[ \log\sum\nolimits_{y_{\textnormal{ \ \ \ \ }1}^{(m)n}} p^\alpha(y_{\textnormal{ \ \ \ \ }1}^{(m)n}) - \log\sum\nolimits_{y_1^n} p^\alpha(y_1^n) \right]} \right\rvert \notag \\
&= \left\lvert \frac{1}{(1-\alpha)n} \log \frac{\sum_{y_{\textnormal{ \ \ \ \ }1}^{(m)n}} p^\alpha (y_{\textnormal{ \ \ \ \ }1}^{(m)n})}{\sum_{y_1^n} p^\alpha(y_1^n)} \right\rvert \notag \\
&=\left\lvert \frac{1}{(1-\alpha)n} \log \frac{\sum_{y_1^n}p^\alpha(y_1^m) p^\alpha(y_{m+1}\vert y_1^m)\cdots p^\alpha(y_n\vert y_{n-m}^{n-1})}{\sum_{y_1^n}p^\alpha (y_1^m) p^\alpha (y_{m+1}\vert y_1^m) \cdots p^\alpha(y_n\vert y_1^{n-1})} \right\rvert. \notag
\end{align}
Replacing $p^{\alpha}(\cdot)$'s with simpler notations $a_i$'s and $b_i$'s, we continue to derive
\begin{align}
&\left\lvert \frac{H_{\alpha}(Y_{\textnormal{ \ \ \ \ }1}^{(m)n})}{n}-\frac{H_{\alpha}(Y_1^n)}{n}\right\rvert  \notag\\
 &=\left\lvert \frac{1}{(1-\alpha)n} \log \frac{\sum_{y_1^n} \prod_{i=1}^{n-m+1} a_i}{\sum_{y_1^n} \prod_{i=1}^{n-m+1} b_i} \right\rvert \notag \\
&= \left\lvert \frac{1}{(1-\alpha)n} \log \left[ 1+\frac{\sum_{y_1^n}\left(\prod_{i=1}^{n-m+1} a_i- \prod_{i=1}^{n-m+1} b_i\right)}{\sum_{y_1^n} \prod_{i=1}^{n-m+1} b_i} \right] \right\rvert \notag
\end{align}
\begin{align}
&\leq \left\lvert \frac{1}{(1-\alpha)n} \frac{\sum_{y_1^n}\left(\prod_{i=1}^{n-m+1} a_i- \prod_{i=1}^{n-m+1} b_i\right)}{\sum_{y_1^n} \prod_{i=1}^{n-m+1} b_i} \right\rvert \notag \\
&\leq \frac{1}{|1-\alpha|n} \frac{\sum_{y_1^n}\sum_{i=1}^{n-m+1}|a_i-b_i|}{\sum_{y_1^n} b_1 \cdots b_{n-m+1}} \notag \\
&\overset{(g)}{\leq} \frac{1}{\lvert 1-\alpha\rvert n} \frac{\lvert\mathcal{Y}\rvert^n(n-m+1)C_F\rho_F^m}{\lvert\mathcal{Y}\rvert^n C_L^{n-m+1}}, \notag\\
&\leq \frac{1}{\lvert 1-\alpha\rvert} \frac{C_F\rho_F^m}{C_L^{n-m+1}} \label{n-2m},
\end{align}
where Condition $(ii)$ is used in $(g)$.
Noting that $0<\rho_F<1$, $0<C_L<1$, we deduce that there exists $0<T<1$ such that $\rho_F<(C_L)^T$. Setting $n=\lfloor(T+1)m\rfloor-1,$ we have $n-m+1\leq Tm$, which, together with (\ref{n-2m}), implies that for the $\varepsilon$ given above, there exists an $N_2 > 0$ such that for all $m \geq N_2$,
$$
\left\lvert \frac{H_{\alpha}(Y_{\textnormal{ \ \ \ \ }1}^{(m)n})}{n}-\frac{H_{\alpha}(Y_1^n)}{n}\right\rvert \leq \varepsilon/3.
$$
It then follows from (\ref{three-terms})
that
\begin{equation} \label{diff-1}
\lvert H_\alpha(Y^{(m)})-H_\alpha(Y)\rvert \leq \varepsilon
\end{equation}
as long as $m \geq \max \{N_1+1, N_2\}$.
The desired convergence then follows from the arbitrariness of $\varepsilon$.
\end{proof}
\medskip


Having established the convergence of $\{H_{\alpha}(Y^{(m)})\}$ to $H_{\alpha}(Y)$, we now turn to its convergence rate.

First of all, for any fixed $m$, by a usual $m$-step blocking argument, we can transform $Y^{(m)}$ into a first-order Markov chain over a larger alphabet. To be more specific, define a new process $W^{(m)}=\{W_n^{(m)}\}_{n=1}^\infty$ such that
$$	
W_i^{(m)} = (Y_i^{(m)}, Y_{i+1}^{(m)}, \cdots, Y_{i+m-1}^{(m)}), \quad i=1, 2, \cdots.
$$
Apparently, $W^{(m)}$ is a first-order Markov chain over the alphabet $\mathcal{Y}^m$. Let $P^{(m)}=\{p^{(m)}_{ij}\}$ be the transition probability matrix of $W^{(m)}$, $R^{(m)}=\{(p_{ij}^{(m)})^\alpha\}$ be the matrix obtained by taking the $\alpha$-th power of each entry of $P^{(m)}$, and let $\lambda^{(m)}$ be the largest eigenvalue of $R^{(m)}$. Recalling from (\ref{Markov-Case}) that
$$
H_{\alpha}(Y^{(m)}) = \frac{\log \lambda^{(m)}}{1-\alpha},
$$
in order to derive the convergence rate of $H_{\alpha}(Y^{(m)})$, we only need to compare $\lambda^{(m)}$ and $\lambda^{(m+1)}$. Observing that $\lambda^{(m)}$ and $\lambda^{(m+1)}$ are the largest eigenvalues of two matrices whose dimensions are different, we first ``upscale" the matrix $R^{(m)}$ by viewing $Y^{(m)}$ as an $(m+1)$-th order Markov chain with the corresponding $\vert\mathcal{Y}\vert^{m+1} \times \vert\mathcal{Y}\vert^{m+1}$-dimensional transition probability matrix $\widetilde{R}^{(m)}$. It can then be readily verified that $\widetilde{R}^{(m)}$ has the same largest eigenvalue as $R^{(m)}$. Hence, it suffices for us to compare $R^{(m+1)}$ and $\widetilde{R}^{(m)}$, both of which are of dimension $\vert\mathcal{Y}\vert^{m+1}\times \vert\mathcal{Y}\vert^{m+1}.$

Assuming $\rho_F$ is small enough, the following theorem uses the previous observation to establish the exponential convergence of $\{H_\alpha(Y^{(m)})\}$ as $m\rightarrow \infty$.
	
\begin{thm} \label{expo-rate}
If $\rho_F < (C_L/C_U)^{2\alpha}$, 
then $H_\alpha(Y^{(m)}) \rightarrow H_{\alpha} (Y)$ exponentially as $m\rightarrow \infty$.
\end{thm}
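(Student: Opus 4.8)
The plan is to reduce the statement to a perturbation estimate for the Perron--Frobenius eigenvalue and then telescope. By (\ref{Markov-Case}) we have $H_\alpha(Y^{(m)})=\log\lambda^{(m)}/(1-\alpha)$, and since Theorem~\ref{convergent} already gives $\lim_{m}H_\alpha(Y^{(m)})=H_\alpha(Y)$, it suffices to control the one-step differences through $\lvert H_\alpha(Y^{(m)})-H_\alpha(Y)\rvert\le\frac{1}{\lvert 1-\alpha\rvert}\sum_{k\ge m}\lvert\log\lambda^{(k+1)}-\log\lambda^{(k)}\rvert$; passing to the logarithm is harmless because Condition~$(i)$ keeps the row sums of $R^{(k)}$, hence $\lambda^{(k)}$, in a fixed interval $[c_0,\Lambda]\subset(0,\infty)$ independent of $k$. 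So the task is to show $\lvert\lambda^{(k+1)}-\lambda^{(k)}\rvert$ decays exponentially in $k$. Following the discussion preceding the theorem, I would first replace $R^{(m)}$ by its $(m+1)$-lift $\widetilde R^{(m)}$, which has the same Perron root: if $v$ is the right Perron eigenvector of $R^{(m)}$ then $\tilde v_{(y_1^{m+1})}:=v_{(y_2^{m+1})}$ is a strictly positive eigenvector of $\widetilde R^{(m)}$ with eigenvalue $\lambda^{(m)}$, which is therefore its Perron root. The structural point I would lean on is that $\widetilde R^{(m)}$ and $R^{(m+1)}$, both of size $|\mathcal Y|^{m+1}\times|\mathcal Y|^{m+1}$, have exactly the same support -- the ``shift--and--append'' pattern of an order-$(m+1)$ Markov lift, every entry of which is positive by Condition~$(i)$ (for $Y$, and, via Remark~\ref{Mar_same_cons}, for $Y^{(m)}$) -- with corresponding entries $(p(y_{m+2}\mid y_2^{m+1}))^\alpha$ and $(p(y_{m+2}\mid y_1^{m+1}))^\alpha$.

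I would then make the comparison quantitative. Condition~$(ii)$, applied to the conditioning strings $y_1^{m+1}$ and $y_2^{m+1}$, which agree on their $m$ most recent symbols, gives $\lvert\widetilde R^{(m)}_{ij}-R^{(m+1)}_{ij}\rvert\le C_F\rho_F^{m}$ for all $i,j$, while Condition~$(i)$ puts every nonzero entry of both matrices in a fixed interval $[c_1,c_2]\subset(0,\infty)$ with $c_1=\min(C_L,C_U)^{\alpha}$. Since the two matrices share their support and $R^{(m+1)}_{ij}\ge c_1$ there, for every $m$ with $C_F\rho_F^{m}<c_1$ we obtain the entrywise sandwich $(1-C_F\rho_F^{m}/c_1)\,R^{(m+1)}\le\widetilde R^{(m)}\le(1+C_F\rho_F^{m}/c_1)\,R^{(m+1)}$. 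Monotonicity and positive homogeneity of the Perron root of a nonnegative matrix then give $\lvert\lambda^{(m)}-\lambda^{(m+1)}\rvert\le(C_F\rho_F^{m}/c_1)\,\lambda^{(m+1)}\le(\Lambda C_F/c_1)\,\rho_F^{m}$. Feeding this into the telescoped sum and using $\lambda^{(k)}\ge c_0$ to pass from $\lambda$ to $\log\lambda$ yields $\lvert H_\alpha(Y^{(m)})-H_\alpha(Y)\rvert\le C\,\rho_F^{m}$ for a suitable $m$-independent $C$, i.e.\ exponential convergence. The ranges $\alpha<0$ and $\alpha>1$ need only the routine sign bookkeeping already used in the proof of Theorem~\ref{poly-rate}.

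This monotonicity route does not seem to require the hypothesis $\rho_F<(C_L/C_U)^{2\alpha}$ (and in fact gives the rate $\rho_F$); I expect that hypothesis reflects the more conventional route, which bounds $\lvert\lambda^{(m+1)}-\lambda^{(m)}\rvert$ by $\lVert R^{(m+1)}-\widetilde R^{(m)}\rVert$ times the eigenvalue condition number $\kappa=\lVert u\rVert\,\lVert v\rVert/\lvert u^{\top}v\rvert$ of the Perron root. Because every length-$(m+1)$ path in the lift is unique and carries weight in $[c_1^{m+1},c_2^{m+1}]$, iterating the eigenvalue equation $m+1$ times forces $v_{\max}/v_{\min}\le(C_U/C_L)^{\alpha(m+1)}$ for the right Perron eigenvector, and likewise for the left one, so $\kappa\le(C_U/C_L)^{2\alpha(m+1)}$; since $\lVert R^{(m+1)}-\widetilde R^{(m)}\rVert=O(\rho_F^{m})$, the product $\kappa\,\lVert R^{(m+1)}-\widetilde R^{(m)}\rVert$ stays exponentially small exactly when $\rho_F(C_U/C_L)^{2\alpha}<1$, which is the stated assumption.

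The steps I expect to need the most care are the structural claim that $\widetilde R^{(m)}$ and $R^{(m+1)}$ have identical support, and the verification that every constant in sight ($c_1$, the bounds $c_0\le\lambda^{(k)}\le\Lambda$, the per-symbol forgetting estimate, the constants arising for $\alpha<0$ or $\alpha>1$) is genuinely independent of $m$; this $m$-uniformity is precisely what upgrades the polynomial rate of Theorem~\ref{poly-rate} to an exponential one. If one instead argues through a spectral-perturbation bound, the difficulty moves to controlling the eigenvalue condition number uniformly in $m$: it deteriorates exponentially with the lift depth, so one must then assume $\rho_F$ small enough to compensate -- the role played by the hypothesis in the theorem as stated.
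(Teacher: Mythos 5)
Your monotonicity route is correct, and it is genuinely different from the paper's argument. The paper applies the Collatz--Wielandt variational formula $\lambda^{(m+1)}=\max_{x>0}\min_i [R^{(m+1)}x]_i/x_i$, plugs in the right Perron eigenvector $v$ of $\widetilde R^{(m)}$, and bounds the resulting perturbation term $\min_j [\Delta_m v]_j/v_j$ in terms of $\max_{i,j} v_i/v_j$. Iterating $\widetilde R^{(m)}v=\lambda^{(m)}v$ through $m+1$ steps (so that $(\widetilde R^{(m)})^{m+1}$ is strictly positive) gives $v_i/v_j\le (C_U/C_L)^{2\alpha(m+1)}$, which is why the extra factor $(C_U/C_L)^{2\alpha m}$ multiplies $\rho_F^m$ in the final bound $\lvert\lambda^{(m+1)}-\lambda^{(m)}\rvert\le C_1\bigl(\rho_F(C_U/C_L)^{2\alpha}\bigr)^m$; the hypothesis $\rho_F<(C_L/C_U)^{2\alpha}$ is exactly what keeps this net rate below one. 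So your instinct about the role of the hypothesis is right, and the mechanism is the eigenvector-ratio growth rather than a condition-number formula per se. Your entrywise sandwich $(1-C_F\rho_F^m/c_1)R^{(m+1)}\le\widetilde R^{(m)}\le(1+C_F\rho_F^m/c_1)R^{(m+1)}$, combined with monotonicity and positive homogeneity of the Perron root, avoids the eigenvector-ratio term entirely: it gives $\lvert\lambda^{(m)}-\lambda^{(m+1)}\rvert\le (C_F/c_1)\rho_F^m\,\lambda^{(m+1)}$ with all constants $m$-uniform (the support identification via the shift-and-append structure and the positivity of nonzero entries by Condition $(i)$ and Remark~\ref{Mar_same_cons} are both sound). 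This both dispenses with the hypothesis $\rho_F<(C_L/C_U)^{2\alpha}$ and improves the rate from $\rho_F(C_U/C_L)^{2\alpha}$ to $\rho_F$, so your approach is strictly stronger than the paper's where it applies.
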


\begin{proof}
According to Theorem \ref{convergent}, it suffices for us to show the exponential convergence of the sequence $\{H_{\alpha}(Y^{(m)})\}$. 

Let $\Delta_m=R^{(m+1)}-\widetilde{R}^{(m)}$. It follows from Condition $(ii)$ that the absolute value of each nonzero entry of $\Delta_m$ is upper bounded by $C_F\rho_F^m$. Applying the Collatz-Wielandt formula (see, e.g.,~\cite{horn1985}), we have
\begin{align} \label{eigenvalue_Rm+1}
\lambda^{(m+1)}&=\max_{x> 0} \min_i \displaystyle\frac{[R^{(m+1)}x]_i}{x_i} \notag\\
&=\max_{x> 0} \min_i \displaystyle\frac{[(\widetilde{R}^{(m)}+\Delta_m)x]_i}{x_i} \notag \\
&=\max_{x> 0} \min_i \left\{  { \displaystyle\frac{[\widetilde{R}^{(m)} x]_i}{x_i}+\displaystyle\frac{[\Delta_m x]_i}{x_i} } \right\}  \notag  \\
&\geq \max_{x>0}\left\{ { \min_i\displaystyle\frac{[\widetilde{R}^{(m)} x]_i}{x_i}+\min_j \displaystyle\frac{[\Delta_m x]_j}{x_j} } \right\},
\end{align}
where $x$ is a $\lvert \mathcal{Y}\rvert^{m+1}\times  1$ column vector and $[x]_j$ denote the $j$-th component of $x$. Let the vector $v$ be the right eigenvector of $\widetilde{R}^{(m)}$ such that the equality $\lambda^{(m)}= \min\limits_i \displaystyle\frac{[\widetilde{R}^{(m)} v]_i}{v_i}$ is achieved (Note from the Perron-Frobenius theorem \cite{me2001,se2006} that $v$ is a positive vector since $\widetilde{R}^{(m)}$ is a nonnegative irreducible matrix). Then we continue from (\ref{eigenvalue_Rm+1}) as follows:
\begin{align} \label{ineq-1}
\lambda^{(m+1)} & \geq \min_i\displaystyle\frac{[\widetilde{R}^{(m)} v]_i}{v_i}+\min_j \displaystyle\frac{[\Delta_m v]_j}{v_j} \notag \\
&=\lambda^{(m)}+ \min_j \displaystyle\frac{[\Delta_m v]_j}{v_j} \notag \\
&\geq \lambda^{(m)}- C_F \rho_F^m\cdot \lvert\mathcal{Y}\rvert \cdot \max_{i,j}\frac{v_i}{v_j},
\end{align}
where we have used the fact that each row of $\Delta_m$ has exactly $\lvert\mathcal{Y}\rvert$ strictly positive entries.

We now claim that for any $1\leq i,j\leq \lvert\mathcal{Y}\rvert^{m+1}$, $\displaystyle v_i/v_j$ can be bounded by
\begin{equation} \label{ratio-bounds-1}
\left( \frac{C_L}{C_U}  \right)^{2\alpha(m+1)}\leq \frac{v_i}{v_j}\leq \left( \frac{C_U}{C_L}  \right)^{2\alpha(m+1)}.
\end{equation}
To see this, first note that each entry $a_{i,j}$ of $(\widetilde{R}^{(m)})^{m+1}$ is of the form
$$
p^\alpha(y_{m+2}\vert y_1\cdots y_{m+1})\cdots p^\alpha(y_{2(m+1)}\vert y_{m+2}\cdots y_{2m+1}),
$$
which, by Condition $(i)$, must be strictly positive. Furthermore, for any two entries $a_{i,j}$ and $a_{k,l}$,
\begin{align} \label{ratio-bounds-2}
\left( \frac{C_L}{C_U} \right)^{\alpha(m+1)}\leq  \frac{a_{i,j}}{a_{k,l}}  \leq \left( \frac{C_U}{C_L} \right)^{\alpha(m+1)}.
\end{align}
Now, for the right eigenvector $v$ of $\widetilde{R}^{(m)}$ corresponding to $\lambda^{(m)}$, we have that,
for any $1 \leq i,j \leq |\mathcal{Y}|^{m+1}$,
$$
\frac{v_i}{v_j}=\frac{\lambda^{(m)} v_i}{\lambda^{(m)} v_j}=\frac{[\widetilde{R}^{(m)} v]_i}{[\widetilde{R}^{(m)} v]_j},
$$
which, together with (\ref{ratio-bounds-2}), implies (\ref{ratio-bounds-1}), as desired.

Now, with (\ref{ratio-bounds-1}) in hand, we infer from (\ref{ineq-1}) that
$$
\lambda^{(m+1)}-\lambda^{(m)}\geq -\lvert\mathcal{Y}\rvert C_F \left( \frac{C_U}{C_L} \right)^{2\alpha} \cdot \left( \rho_F \left(\frac{C_U}{C_L} \right)^{2\alpha} \right)^m.
$$
A parallel argument gives
$$
\lambda^{(m+1)}-\lambda^{(m)} \leq \lvert\mathcal{Y}\rvert C_F \left( \frac{C_U}{C_L} \right)^{2\alpha} \cdot \left( \rho_F \left(\frac{C_U}{C_L} \right)^{2\alpha} \right)^m.
$$
Since $\rho_F < (C_L/C_U)^{2\alpha}$, we obtain
$$
\lvert\lambda^{(m+1)}-\lambda^{(m)}\rvert\leq C_1 \rho^m,
$$
where $C_1=\lvert\mathcal{Y}\rvert C_F (C_U/C_L)^{2\alpha}>0$ and $0<\rho=\rho_F (C_U/C_L)^{2\alpha}<1$. Using the fact that all $\lambda^{(m)}$ are bounded away from $0$ uniformly over $m$ (this follows from Condition $(i)$) and the mean value theorem, we deduce that there exists $C_3 > 0$ such that for all $m$,
$$
\lvert\log\lambda^{(m+1)}-\log\lambda^{(m)}\rvert\leq C_3 \rho^m,
$$
which, by (\ref{Markov-Case}), implies the exponential convergence of $\{H_\alpha(Y^{(m)})\}$, as desired. 
\end{proof}



\begin{rem}
Theorem~\ref{expo-rate} suggests, for $Y$ with small $\rho_F$, a practical method to approximate $H_{\alpha}(Y)$ using $\{H_{\alpha}(Y^{(m)})\}$ (instead of using $\{H_{\alpha, n}(Y)\}$): other than the faster convergence rate, we note that $\{H_{\alpha}(Y^{(m)})\}$ is also easier to compute than $\{H_{\alpha, n}(Y)\}$ due to the fact that $R^{(m)}$ is a sparse matrix, and its largest eigenvalue $\lambda^{(m)}$ can be efficiently computed using the well-known Arnoldi iteration algorithm (see, e.g.,~\cite{tr97}).
\end{rem}


\section{R\' enyi Entropy Rate of General Stationary Ergodic Processes: A Counterexample to Conjecture \ref{conj}}

In this section, we will use the cutting and stacking method~\cite{sh91} to construct a stationary ergodic process such that its \renyi entropy rate of order $\alpha$ does not converge to its Shannon entropy rate as $\alpha$ goes to $1$.


\subsection{The Cutting and Stacking Method} \label{The_Cutting_and_Stacking_Method} \label{SCT_cutting_and_stacking_m}
In this subsection, we give some preliminaries of the cutting and stacking method needed for later sections.
For a more comprehensive exposition of this method, we refer the reader to~\cite{sh96}.

\bigskip

\noindent{\em A. Basic Definitions}
\medskip

Let $\lambda$ be the Lebesgue measure on the real line. A {\em coloum} $\mathcal{C}=\{I_1, I_2, \cdots, I_{h(\mathcal{C})}\}$ is a collection of disjoint subintervals of $[0,1]$ with equal width. We call $I_1$ the {\em base} of $\mathcal{C}$, $I_{h(\mathcal{C})}$ the {\em top} of $\mathcal{C}$ and $h(\mathcal{C})$ the {\em height} of $\mathcal{C}$. Moreover, the {\em width} of $\mathcal{C}$, denoted by $w(\mathcal{C})$, is defined as the width of $I_1$, the {\em support} of $\mathcal{C}$ is defined as $\mbox{supp}(\mathcal{C})\triangleq \cup_{k=1}^h I_i$ and the measure of $\mathcal{C}$, denoted by $\lambda(\mathcal{C})$, is defined as the Lebesgue measure of $\mbox{supp}(\mathcal{C})$.
The columns we consider in this work are often labelled over a finite alphabet. For a 
column $\mathcal{C}=\{I_1, I_2, \cdots, I_{h(\mathcal{C})}\}$, we use $\ell(I_i)$ to denote the label of $I_i$ for any $i$, and use $\ell(\mathcal{C})=\ell(I_1)\ell(I_2)\cdots \ell(I_{h(\mathcal{C})})$ to denote the label of $\mathcal{C}$.

A {\em gadget} $\mathcal{S}=\{\mathcal{C}_1, \mathcal{C}_2\cdots, \mathcal{C}_k\}$ is a collection of columns such that different columns have disjoint supports. Note that the heights of different columns in a gadget are not necessarily the same. If all the columns in a gadget $\mathcal{S}$ have the same height $h$, then the height of the gadget is defined as $h(\mathcal{S})\triangleq h$. The base ({\it resp.} top) of $\mathcal{S}$ is the union of the bases ({\it resp.} top) of all $\mathcal{C}_i$. The width of $\mathcal{S}$ is $w(\mathcal{S})\triangleq \sum_{i=1}^k w(\mathcal{C}_i)$, the support of $\mathcal{S}$ is $\mbox{supp}(\mathcal{S})\triangleq \cup_{i=1}^k \mbox{supp}(\mathcal{C}_i)$ and the measure of $\mathcal{S}$ is $\lambda(\mathcal{S})\triangleq \sum_{i=1}^k \lambda(\mathcal{C}_i)$. The {\em width distribution} of $\mathcal{S}$, denoted by $\boldsymbol{w}(\mathcal{S})$, is a normalized vector
whose $i$-th coordinate is $w(\mathcal{C}_i)/ w (\mathcal{S})$ for any $1\leq i\leq k$, and the {\em measure distribution} of $\mathcal{S}$,  denoted by $\boldsymbol{\lambda}(\mathcal{S})$, is a normalized vector whose $i$-th coordinate is $\lambda(\mathcal{C}_i)/ \lambda(\mathcal{S})$. Finally, a gadget is labelled if all its columns are labelled.

There are two basic operations on columns and gadgets: cutting and stacking. Roughly speaking, a {\em cutting} of a column is an operation that slices the column vertically, resulting in a set of subcolumns; and a {\em stacking} of two columns with equal width is an operation that puts the second column directly onto the first one, which, by definition, results in a single column. A gadget $\mathcal{S}'$ is said to be obtained from another gadget $\mathcal{S}$ via cutting and stacking if each column of $\mathcal{S}'$ is obtained by performing a cutting operation and then a stacking operation on the columns in $\mathcal{S}$. In this paper, we will be mainly concerned with independent cutting and stacking (introduced below), which, as opposed to a general cutting and stacking, is dictated by the width distribution of the gadget.

For a positive integer $m$, a gadget $\mathcal{S}=\{\mathcal{C}_1, \mathcal{C}_2, \cdots, \mathcal{C}_k\}$ can be cut into $m$ {\it copies} $\mathcal{S}_1, \mathcal{S}_2, \cdots, \mathcal{S}_m$ with respect to a distribution $\boldsymbol{\pi}$ via the following two steps:
\begin{enumerate}
\item[1)] For each $j = 1, 2, \cdots, k$, cut the column $\mathcal{C}_j$ into $m$ subcolumns $\mathcal{C}_{j,1}, \mathcal{C}_{j,2}, \cdots, \mathcal{C}_{j,m}$ such that 
$$\left(\frac{\lambda(\mathcal{C}_{j,1})}{\sum_{l} \lambda(\mathcal{C}_{j,l})}, \cdots, \frac{\lambda(\mathcal{C}_{j,m})}{\sum_{l} \lambda(\mathcal{C}_{j,l})}\right)=\boldsymbol{\pi};$$
\item[2)] For each $i = 1, 2, \cdots, m$, let $\mathcal{S}_i \triangleq \{\mathcal{C}_{1,i}, \mathcal{C}_{2,i}, \cdots, \mathcal{C}_{k,i}\}$.
\end{enumerate}
Note from the above definition that each copy of a gadget has the same width distribution as the original gadget (which justifies the use of the word ``copy''). 

We are now ready to introduce the notion of independent cutting and stacking.

\begin{de} \label{inde_cutting_and_stacking} 
Given two disjoint gadgets $\mathcal{S}=\{\mathcal{C}_1, \mathcal{C}_2 \cdots \mathcal{C}_{k_1}\}$ and $\mathcal{S}'=\{\mathcal{C}_1', \mathcal{C}_2' \cdots \mathcal{C}_{k_2}'\}$ with $w(\mathcal{S})=w(\mathcal{S}')$, a new gadget $\mathcal{S}\ast \mathcal{S'}$ is said to be built by applying the {\em independent cutting and stacking} to $\mathcal{S}'$ and $\mathcal{S}$ 
if it is given by the following four steps:
\begin{enumerate}
\item[1)] Cut $\mathcal{S}'$ into $k_1$ copies $\mathcal{S}'_1, \mathcal{S}'_2, \cdots, \mathcal{S}'_{k_1}$ according to $\boldsymbol{w}(\mathcal{S})$, which necessarily implies that $w(\mathcal{S}'_i)=w(\mathcal{C}_i)$ for any $i = 1, 2, \cdots, k_1$. For each $i = 1, 2, \cdots, k_1$, denote 
    $$
    \mathcal{S}'_i\triangleq \{\mathcal{C}_{1,i}', \cdots, \mathcal{C}_{k_2,i}'\};
    $$

\item[2)]  For each $i = 1, 2, \cdots, k_1$, cut $\mathcal{C}_i$ into $k_2$ subcolumns $\mathcal{C}_{i,1}, \cdots, \mathcal{C}_{i,k_2}$ such that for each $j = 1, 2, \cdots, k_2$,
    $$
    w(\mathcal{C}_{i,j}) =w(\mathcal{C}_{j,i}');
    $$

\item[3)] For each $i = 1, 2, \cdots, k_1$ and $j = 1, 2, \cdots, k_2$, put $\mathcal{C}_{j,i}'$ onto $\mathcal{C}_{i,j}$ to form a new column denoted by $\mathcal{C}_{i,j} \ast \mathcal{C}_{j,i}'$;

\item[4)] Finally, let $$\mathcal{S} \ast \mathcal{S'} \triangleq \{ \mathcal{C}_{i,j}\ast \mathcal{C}_{j,i}': 1\leq i\leq k_1, 1\leq j \leq k_2\}.$$
\end{enumerate}
\end{de}

\noindent We refer the reader to Figure~\ref{figure_ind_cutting_and_stacking} for a concrete example on how independent cutting and stacking is done when both $\mathcal{S}$ and $\mathcal{S}'$ have only two columns.

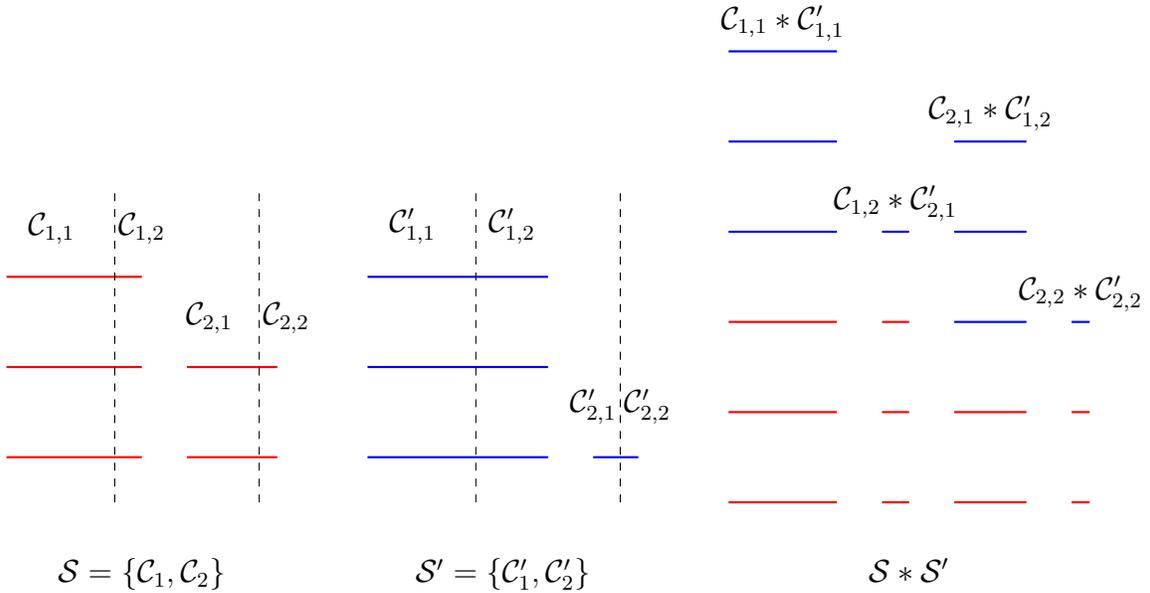
\begin{figure}[htbp!] \label{figure_ind_cutting_and_stacking}
\begin{center}
\begin{tikzpicture}[scale=0.6]

\draw [-, thick, color=red] (0,1) -- (3,1);
\draw [-, thick, color=red] (0,3) -- (3,3);
\draw [-, thick, color=red] (0,5) -- (3,5);
\draw [-, thick, color=red] (4,1) -- (6,1);
\draw [-, thick, color=red] (4,3) -- (6,3);
\path [draw, dashed, color=black] (2.4, 0) -- (2.4, 7);
\path [draw, dashed, color=black] (5.6, 0) -- (5.6, 7);
\node at (3, -1) [coordinate, draw, fill=black, label=below:{$\mathcal{S}=\{\mathcal{C}_1, \mathcal{C}_2\}$}] {};
\node at (1, 5.5) [coordinate, draw, fill=black, label=above:$\mathcal{C}_{1,1}$] {};
\node at (3, 5.5) [coordinate, draw, fill=black, label=above:$\mathcal{C}_{1,2}$] {};
\node at (4.5, 3.5) [coordinate, draw, fill=black, label=above:$\mathcal{C}_{2,1}$] {};
\node at (6.2, 3.5) [coordinate, draw, fill=black, label=above:$\mathcal{C}_{2,2}$] {};

\draw [-, thick, color=blue] (8,1) -- (12,1);
\draw [-, thick, color=blue] (8,3) -- (12,3);
\draw [-, thick, color=blue] (8,5) -- (12,5);
\draw [-, thick, color=blue] (13,1) -- (14,1);
\path [draw, dashed, color=black] (10.4, 0) -- (10.4, 7);
\path [draw, dashed, color=black] (13.6, 0) -- (13.6, 7);
\node at (11, -1) [coordinate, draw, fill=black, label=below:{$\mathcal{S}'=\{\mathcal{C}_1', \mathcal{C}_2'\}$}] {};
\node at (9, 5.5) [coordinate, draw, fill=black, label=above:$\mathcal{C}_{1,1}'$] {};
\node at (11.2, 5.5) [coordinate, draw, fill=black, label=above:$\mathcal{C}_{1,2}'$] {};
\node at (13, 1.5) [coordinate, draw, fill=black, label=above:$\mathcal{C}_{2,1}'$] {};
\node at (14.2, 1.5) [coordinate, draw, fill=black, label=above:$\mathcal{C}_{2,2}'$] {};

\draw [-, thick, color=red] (16,0) -- (18.4,0);
\draw [-, thick, color=red] (16,2) -- (18.4,2);
\draw [-, thick, color=red] (16,4) -- (18.4,4);
\draw [-, thick, color=blue] (16,6) -- (18.4,6);
\draw [-, thick, color=blue] (16,8) -- (18.4,8);
\draw [-, thick, color=blue] (16,10) -- (18.4,10);
\node at (17.2, 10) [coordinate, draw, fill=black, label=above:$\mathcal{C}_{1,1}\ast\mathcal{C}_{1,1}'$] {};

\draw[-, thick, color=red] (19.4,0) -- (20,0);
\draw[-, thick, color=red] (19.4,2) -- (20,2);
\draw[-, thick, color=red] (19.4,4) -- (20,4);
\draw[-, thick, color=blue] (19.4,6) -- (20,6);
\node at (19.7, 6) [coordinate, draw, fill=black, label=above:$\mathcal{C}_{1,2}\ast\mathcal{C}_{2,1}'$] {};

\draw[-, thick, color=red] (21, 0) -- (22.6, 0);
\draw[-, thick, color=red] (21, 2) -- (22.6, 2);
\draw[-, thick, color=blue] (21, 4) -- (22.6, 4);
\draw[-, thick, color=blue] (21, 6) -- (22.6, 6);
\draw[-, thick, color=blue] (21, 8) -- (22.6, 8);
\node at (21.8, 8) [coordinate, draw, fill=black, label=above:$\mathcal{C}_{2,1}\ast\mathcal{C}_{1,2}'$] {};

\draw[-, thick, color=red] (23.6, 0) -- (24, 0);
\draw[-, thick, color=red] (23.6, 2) -- (24, 2);
\draw[-, thick, color=blue] (23.6, 4) -- (24, 4);
\node at (20, -1) [coordinate, draw, fill=black, label=below:$\mathcal{S}\ast \mathcal{S}'$] {};
\node at (23.8, 4) [coordinate, draw, fill=black, label=above:$\mathcal{C}_{2,2}\ast\mathcal{C}_{2,2}'$] {};

\end{tikzpicture}
\end{center}
\caption{This figure illustrates how to apply independent cutting and stacking to two gadgets $\mathcal{S}=\{\mathcal{C}_1, \mathcal{C}_2\}$ and $\mathcal{S}'=\{\mathcal{C}_1', \mathcal{C}_2'\}$, where cutting a column is represented by a dashed line.}
\label{figure_ind_cutting_and_stacking}
\end{figure}

One of the important properties of the independent cutting and stacking of two gadgets is that the width distribution of the resulting gadget is the Kronecker product of the width distributions of the original two gadgets, detailed below.

\begin{pr}  \label{width_dis} \textnormal{\cite{sh96}}
Let $\mathcal{S}$ and $\mathcal{S}'$ be two gadgets with the same width. Then we have
$$
\boldsymbol{w}(\mathcal{S}\ast \mathcal{S}') = \boldsymbol{w}(\mathcal{S}) \otimes \boldsymbol{w}(\mathcal{S}'),
$$
where for any two vectors $u$ and $v$, $u \otimes v$ denotes the Kronecker product of $u$ and $v$. Moreover, if the heights of $\mathcal{S}$ and $\mathcal{S}'$ are both well-defined, then
$$
\boldsymbol{\lambda}(\mathcal{S}\ast \mathcal{S}') = \boldsymbol{\lambda}(\mathcal{S}) \otimes \boldsymbol{\lambda}(\mathcal{S}').
$$
\end{pr}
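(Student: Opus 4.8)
The plan is to unwind Definition~\ref{inde_cutting_and_stacking} one step at a time, keeping track of how the width of every intermediate column depends on $\boldsymbol{w}(\mathcal{S})$ and $\boldsymbol{w}(\mathcal{S}')$; the two assertions then fall out of a single normalization. First I would compute the width of the column $\mathcal{C}_{i,j}\ast\mathcal{C}'_{j,i}$ of $\mathcal{S}\ast\mathcal{S}'$. In Step~1, cutting $\mathcal{S}'$ into $k_1$ copies according to $\boldsymbol{w}(\mathcal{S})$ slices each column $\mathcal{C}'_j$ vertically; since a vertical cut preserves height, the measure ratios that define the copy operation coincide with width ratios, so $w(\mathcal{C}'_{j,i})=\frac{w(\mathcal{C}_i)}{w(\mathcal{S})}\,w(\mathcal{C}'_j)$, and summing over $j$ together with $w(\mathcal{S})=w(\mathcal{S}')$ recovers $w(\mathcal{S}'_i)=w(\mathcal{C}_i)$, as the definition already asserts. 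Step~2 cuts $\mathcal{C}_i$ so that $w(\mathcal{C}_{i,j})=w(\mathcal{C}'_{j,i})$ — a legitimate partition precisely because these widths sum to $w(\mathcal{C}_i)$ — and Step~3 (stacking) leaves the width unchanged. Hence
$$w\big(\mathcal{C}_{i,j}\ast\mathcal{C}'_{j,i}\big)=\frac{w(\mathcal{C}_i)\,w(\mathcal{C}'_j)}{w(\mathcal{S})}.$$
Summing over all pairs $(i,j)$ gives $w(\mathcal{S}\ast\mathcal{S}')=\frac{1}{w(\mathcal{S})}\big(\sum_i w(\mathcal{C}_i)\big)\big(\sum_j w(\mathcal{C}'_j)\big)=w(\mathcal{S})$, and dividing the displayed identity by this yields $\boldsymbol{w}(\mathcal{S}\ast\mathcal{S}')_{(i,j)}=\boldsymbol{w}(\mathcal{S})_i\,\boldsymbol{w}(\mathcal{S}')_j$, which is the $(i,j)$-entry of $\boldsymbol{w}(\mathcal{S})\otimes\boldsymbol{w}(\mathcal{S}')$ once the columns of $\mathcal{S}\ast\mathcal{S}'$ are listed in lexicographic order of $(i,j)$.

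For the measure distribution I would argue by reduction to the width statement. Stacking adds heights, so if $h(\mathcal{S})$ and $h(\mathcal{S}')$ are both well-defined, every column of $\mathcal{S}\ast\mathcal{S}'$ has height $h(\mathcal{S})+h(\mathcal{S}')$, and in particular $h(\mathcal{S}\ast\mathcal{S}')$ is well-defined. Now for any gadget $\mathcal{T}$ with a well-defined height $h$, each column satisfies $\lambda(\mathcal{C})=h\,w(\mathcal{C})$, so the measure distribution $\boldsymbol{\lambda}(\mathcal{T})$ coincides with the width distribution $\boldsymbol{w}(\mathcal{T})$; applying this to $\mathcal{S}$, $\mathcal{S}'$, and $\mathcal{S}\ast\mathcal{S}'$ turns the measure identity into the width identity just established.

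The argument is essentially bookkeeping, and I do not expect a genuine obstacle. The two points that need care are: (i) verifying that the widths prescribed in Steps~1 and~2 are mutually consistent, so that the required cuts actually exist — this is exactly where the hypothesis $w(\mathcal{S})=w(\mathcal{S}')$ is used, and without it the ``copy'' construction would not line up; and (ii) pinning down the bijection between the index set $\{1,\dots,k_1\}\times\{1,\dots,k_2\}$ labelling the columns of $\mathcal{S}\ast\mathcal{S}'$ and the coordinates of the Kronecker product, so that the conclusion is literally a Kronecker product and not merely a permutation of one.
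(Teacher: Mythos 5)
The paper does not prove Proposition~\ref{width_dis}; it is quoted from Shields~\cite{sh96} without argument. Your proof is correct, and it is the natural one: unwind Definition~\ref{inde_cutting_and_stacking} step by step, observe that vertical cuts scale widths by the prescribed ratios (since cuts preserve height, measure ratios and width ratios agree), conclude that $w(\mathcal{C}_{i,j}\ast\mathcal{C}'_{j,i})=w(\mathcal{C}_i)w(\mathcal{C}'_j)/w(\mathcal{S})$, and normalize to get the Kronecker product; the measure statement then reduces to the width statement via the identity $\lambda(\mathcal{C})=h\,w(\mathcal{C})$ valid when the gadget has constant height. Both the hypotheses you flag as load-bearing — the equality $w(\mathcal{S})=w(\mathcal{S}')$ making the cuts in Steps~1 and~2 mutually consistent, and the lexicographic indexing of columns of $\mathcal{S}\ast\mathcal{S}'$ by pairs $(i,j)$ matching the Kronecker-product convention — are exactly the points that need to be made explicit, and you handle them correctly. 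No gap.
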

\medskip

As detailed in the following definition, independent cutting and stacking of two gadgets as in Definition~\ref{inde_cutting_and_stacking} can be iteratively applied and composed to give rise to a multi-fold version for a single gadget.
\begin{de} \label{M_fold_icas}
Let $M$ be a positive integer and $\mathcal{S}$ be a gadget that can be cut into $M$ identical copies $\{\mathcal{S}_1, \mathcal{S}_2, \cdots, \mathcal{S}_M\}$. Then the new gadget $\mathcal{S}^{\langle M \rangle}\triangleq \mathcal{S}_1\ast \mathcal{S}_2  \ast \cdots \ast \mathcal{S}_{M}$ is said to be the gadget obtained by applying the {\em $M$-fold independent cutting and stacking} to $\mathcal{S}$, where for any $3\leq i \leq M$, $\mathcal{S}_1\ast \mathcal{S}_2  \ast \cdots \ast \mathcal{S}_{i-1}\triangleq (\mathcal{S}_1\ast \mathcal{S}_2  \ast \cdots \mathcal{S}_{i-2}) \ast \mathcal{S}_{i-1}$. {In the sequel, we sometimes call $\mathcal{S}^{\langle M \rangle}$
the $M$-fold independent cutting and stacking of $\mathcal{S}$ for simplicity.}
\end{de}

\begin{rem}
A direct application of Proposition \ref{width_dis} indicates that the width distribution ({\em resp.} measure distribution) of 
$\mathcal{S}^{\langle M \rangle}$ is the $M$-fold Kronecker product
of the width distribution ({\em resp.} measure distribution) of $\mathcal{S}$.
\end{rem}
\bigskip

\noindent{\em B. Processes and Gadgets}
\medskip

Given a probability space $(\Omega, \mathcal{F}, P)$, a partition $\mathcal{P} \triangleq\{\Omega_1, \cdots, \Omega_A\}$ of the sample space $\Omega$ and a transformation $T: \Omega \rightarrow \Omega$ that is well defined almost everywhere in $\Omega$, we can define a random process $\{X_n\}_{n=1}^\infty$ over the alphabet $\{1,2,\cdots ,A\}$ via the following two steps:

\begin{enumerate}
\item[(1)] for each $\omega\in \Omega$ such that $T\omega$ is not well-defined, define $T \omega$ to be an arbitrary point inside $\Omega$;

\item[(2)] for any $\omega\in \Omega$ and any positive integer $n$, let
\begin{align} \label{T,P}
X_n(\omega)\triangleq a \qquad \mbox{if }  T^{n-1} \omega \in \Omega_a , a\in \{1,2, \cdots A\}.
\end{align}
\end{enumerate}
Evidently, the process $\{X_n\}_{n=1}^\infty$ is determined by the transformation $T$ and the partition $\mathcal{P}$ and therefore will be referred to as a $(T,\mathcal{P})$-process in the sequel. Here we remark that since the set of points on which $T$ is not well-defined has Lebesgue measure $0$, the choice of $T\omega$ as in Step ($1$) has no influence on the distribution of the process $\{X_n\}_{n=1}^\infty$.

In order to obtain a process from a gadget in a similar way as above, we first need the following definition.
\begin{de} \label{par_map_def}
Let $\mathcal{S}$ be a gadget labelled over a finite alphabet $\mathcal{A}$. The partition 
$$\mathcal{P}_\mathcal{S} \triangleq \{\mathcal{P}_a: a\in \mathcal{A}\}$$
is called the {\em partition induced by $\mathcal{S}$}, where $\mathcal{P}_a$ is the union of all the levels in $\mathcal{S}$ labelled by $a$. We further use $T_\mathcal{S}$ to denote the induced map of $\mathcal{S}$, which maps any point that is not in the top of $\mathcal{S}$ directly upwards (see Figure 2). 
\end{de}
\begin{figure}[htbp!] \label{partition_and_map}
\begin{center}
\begin{tikzpicture}[scale=0.6]

\draw [-, thick, color=red] (8,1) -- (11,1); 
\draw [-, thick, color=red] (8,3) -- (11,3); 
\draw [-, thick, color=red] (8,5) -- (11,5); 
\draw [-, thick, color=red] (12,1) -- (14,1); 
\draw [-, thick, color=red] (12,3) -- (14,3); 
\path [draw, ->, thick, color=black] (9, 3) -- (9, 5); 
\path [draw, ->, thick, color=black] (10, 1) -- (10, 3); 
\path [draw, ->, thick, color=black] (13, 1) -- (13, 3); 
\node at (11, 0) [coordinate, draw, fill=black, label=below:{$\mathcal{S}$}] {}; 
\node at (10.2, 1) [coordinate, draw, fill=black, label=below:{$\omega_1$}] {}; 
\node at (10.8, 3) [coordinate, draw, fill=black, label=below:{$T_{\mathcal{S}}\omega_1$}] {}; 
\node at (8.8, 3) [coordinate, draw, fill=black, label=below:{$\omega_2$}] {}; 
\node at (8.2, 5) [coordinate, draw, fill=black, label=below:{$T_{\mathcal{S}}\omega_2$}] {}; 
\node at (13.2, 1) [coordinate, draw, fill=black, label=below:{$\omega_3$}] {}; 
\node at (13.8, 3) [coordinate, draw, fill=black, label=below:{$T_{\mathcal{S}}\omega_3$}] {}; 
\end{tikzpicture}
\end{center}
\caption{$T_\mathcal{S}$ maps any point in $\mathcal{S}$ that is not in the top directly upwards.}
\end{figure}
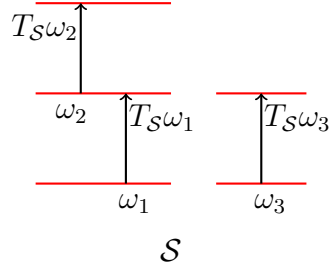

\begin{rem}
$T_\mathcal{S}$ is a Lebesgue measure-preserving map and it is not well-defined on the top of $\mathcal{S}$. 
\end{rem}
\begin{rem} \label{cutting_partition_map}
Let $\mathcal{S}$ be a labelled gadget, $\mathcal{S}'$ be a cutting and stacking of $\mathcal{S}$, and $\mathcal{P}_{\mathcal{S}}$, $\mathcal{P}_{\mathcal{S}'}$ be the partition induced by $\mathcal{S}$ and $\mathcal{S}'$, respectively. Then $\mathcal{P}_{\mathcal{S}}=\mathcal{P}_{\mathcal{S}'}$ 
and $T_{\mathcal{S}'}$ extends $T_\mathcal{S}$, since the top of $\mathcal{S}'$ is a subset of the top of $\mathcal{S}$.  
\end{rem}

Now, consider the probability space $([0, 1], \mathcal{B}, \lambda)$, where $[0,1]$ is the unit interval, $\mathcal{B}$ is the Borel $\sigma$-algebra on $[0,1]$ and $\lambda$ is the Lebesgue measure. To obtain a $(T, \mathcal{P})$-process from a gadget, we need to find a measure-preserving map on $[0,1]$ such that it is well-defined almost everywhere (note that any map induced by a single gadget is necessarily not well-defined on a set of positive measure). To this end, we start with a  gadget $\mathcal{S}(1)$ with support $[0,1]$ and induced map $T_{\mathcal{S}(1)}$. Applying cutting and stacking to $\mathcal{S}(1)$ gives a new gadget $\mathcal{S}(2)$ such that its induced map $T_{\mathcal{S}(2)}$ is an extension of $T_{\mathcal{S}(1)}$. Continuing in this way we obtain a sequence of gadgets $\{\mathcal{S}(m)\}_{m=1}^\infty$ such that for each $m\geq 1$, $\mathcal{S}(m+1)$ is a cutting and stacking of $\mathcal{S}(m)$ and $T_{\mathcal{S}(m+1)}$ is an extension of $T_{\mathcal{S}(m)}$. If the measure of the top of $\mathcal{S}(m)$ goes to $0$, then Remark \ref{cutting_partition_map} implies that $\{T_{\mathcal{S}(m)}\}$ has a common extension that is well-defined almost everywhere on $[0,1]$, and therefore the corresponding $(T,\mathcal{P})$-process is well-defined. These ideas are summarized in the following theorem.
\begin{thm} \label{pro_of_gad}\textnormal{\cite{sh96}}
Let $\{\mathcal{S}(m)\}_{m=1}^\infty$ be a sequence of labeled gadgets with the following properties:
\begin{enumerate}
\item[(1)] $\lambda(\mathcal{S}(1))=1$;
\item[(2)] For any $m\geq 1$, $\mathcal{S}(m+1)$ is a
cutting and stacking (not necessarily an independent cutting and stacking) of $\mathcal{S}(m)$; 
\item[(3)] $w(\mathcal{S}(m))$ goes to $0$ as $m$ goes to infinity.
\end{enumerate}
Then $\{T_{\mathcal{S}(m)}\}_{m=1}^\infty$ has a common extension $T$ which is well-defined on $[0,1]$ almost everywhere. Consequently, $\{\mathcal{S}(m)\}_{m=1}^\infty$ defines a $(T, \mathcal{P}_{\mathcal{S}(1)})$-process, where $\mathcal{P}_{\mathcal{S}(1)}$ is the partition induced by $\mathcal{S}(1)$.
\end{thm}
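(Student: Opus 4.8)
The plan is to realize $T$ as the common extension of the nested family of partial maps $\{T_{\mathcal{S}(m)}\}_{m\ge 1}$ and to check that the set on which none of these maps is defined is Lebesgue-null.

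First I would record the structural consequences of Remark~\ref{cutting_partition_map}: since $\mathcal{S}(m+1)$ is a cutting and stacking of $\mathcal{S}(m)$, we get (a) $\mathcal{P}_{\mathcal{S}(m)}=\mathcal{P}_{\mathcal{S}(1)}$ for every $m$, so the same partition is induced along the whole sequence; (b) $\mathrm{top}(\mathcal{S}(m+1))\subseteq \mathrm{top}(\mathcal{S}(m))$; and (c) $T_{\mathcal{S}(m+1)}$ extends $T_{\mathcal{S}(m)}$, where one recalls that $T_{\mathcal{S}(m)}$ is defined exactly on $[0,1]\setminus \mathrm{top}(\mathcal{S}(m))$. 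By induction, (b) makes $\{\mathrm{top}(\mathcal{S}(m))\}_{m\ge1}$ a decreasing sequence of Borel sets, while (c) guarantees that any two of the maps $T_{\mathcal{S}(m)}$ agree wherever both are defined.

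Next I would bound the measure of the exceptional set $E\triangleq\bigcap_{m\ge1}\mathrm{top}(\mathcal{S}(m))$. The top of a column is a single subinterval of $[0,1]$ whose width equals the width of that column, and distinct columns of a gadget have disjoint supports; hence $\lambda(\mathrm{top}(\mathcal{S}(m)))=\sum_j w(\mathcal{C}_j)=w(\mathcal{S}(m))$, where $\mathcal{C}_j$ ranges over the columns of $\mathcal{S}(m)$. Continuity of Lebesgue measure from above, applied to the decreasing sequence from the previous step, together with hypothesis~(3), yields $\lambda(E)=\lim_{m\to\infty}w(\mathcal{S}(m))=0$.

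Finally I would glue the pieces: for $\omega\in[0,1]\setminus E$ pick any $m$ with $\omega\notin \mathrm{top}(\mathcal{S}(m))$ and put $T\omega\triangleq T_{\mathcal{S}(m)}\omega$; by (c) this is independent of the choice of $m$, so $T$ is well defined on a set of full Lebesgue measure and simultaneously extends every $T_{\mathcal{S}(m)}$. Declaring $T$ to be (say) the identity on $E$ and feeding $(T,\mathcal{P}_{\mathcal{S}(1)})$ into the $(T,\mathcal{P})$-process construction described above then produces the asserted $(T,\mathcal{P}_{\mathcal{S}(1)})$-process; since in addition each $T_{\mathcal{S}(m)}$ is Lebesgue measure-preserving on $[0,1]\setminus \mathrm{top}(\mathcal{S}(m))$, a set of measure $1-w(\mathcal{S}(m))\to 1$, a routine limiting argument shows $T$ itself is measure-preserving, so the resulting process is stationary. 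I expect the only genuine care to be needed in the first step, namely unwinding the definitions of cutting, stacking, and the induced map $T_{\mathcal{S}}$ to justify (a)--(c); the remaining ingredients --- the width-equals-top-measure identity, the continuity-of-measure estimate, the gluing, and the measure preservation --- are routine.
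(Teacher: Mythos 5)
Your proposal is correct and follows essentially the same route the paper sketches. The paper does not supply a formal proof of this theorem (it is cited from Shields \cite{sh96}); instead, the paragraph immediately preceding the theorem gives precisely the argument you wrote out: use Remark~\ref{cutting_partition_map} to get a nested family of partial maps with shrinking tops, observe that the shrinking-width hypothesis makes the common undefined set Lebesgue-null, and glue. Your additional observations --- that $\lambda(\mathrm{top}(\mathcal{S}(m)))=w(\mathcal{S}(m))$ because each column's top is a single interval of the column's width and supports are disjoint, the appeal to continuity of measure from above on the decreasing sequence of tops, and the remark that measure preservation of $T$ follows in the limit from measure preservation of each $T_{\mathcal{S}(m)}$ --- are exactly the details Shields' argument requires; no gap.
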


The $(T, \mathcal{P}_{\mathcal{S}(1)})$-process given by $\{\mathcal{S}(m)\}_{m=1}^\infty$ as in Theoerm \ref{pro_of_gad} is called the {\em final process} of $\{\mathcal{S}(m)\}_{m=1}^\infty$. Note that $T$ depends on the sequence $\{\mathcal{S}(m)\}_{m=1}^\infty$ rather than any single element thereof.

The following theorem characterizes the finite-dimensional distribution of the final process.

\begin{thm} \label{joint_dis} \textnormal{\cite{sh96}}
Let $\{\mathcal{S}(m)\}_{m=1}^\infty$ be a sequence of gadgets labelled over the alphabet $\mathcal{A}$ satisfying the conditions in Theorem \ref{pro_of_gad}. Let $\mu$ denote the distribution of the final process given by $\{\mathcal{S}(m)\}_{m=1}^\infty$. Then
$$
\mu(a_1^k)=\lim_{m\rightarrow \infty} \sum_{\mathcal{C}\in \mathcal{S}(m)} p_k(a_1^k\vert\mathcal{C}) \lambda(\mathcal{C}) \qquad \mbox{for any $a_1^k\in \mathcal{A}^k$,}
$$
where $p_k(a_1^k|\mathcal{C})$ is defined as
$$
p_k(a_1^k\vert\mathcal{C})\triangleq \frac{\lvert\{1\leq i\leq h(\mathcal{C})-k+1: \ell(\mathcal{C})_i^{i+k-1}=a_1^k\}\rvert}{h(\mathcal{C})-k+1}
$$
and $\ell(\mathcal{C})_i^{i+k-1}$ is the subsequence consists of symbols from the $i$-th position to the $(i+k-1)$-th position of $\ell(\mathcal{C})$.
\end{thm}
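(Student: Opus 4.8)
The plan is to track how the empirical $k$-block frequencies inside individual columns behave under cutting and stacking, and then take the limit. The key observation is that if $\mathcal{S}(m+1)$ is obtained from $\mathcal{S}(m)$ by cutting and stacking, then each column $\mathcal{C}'$ of $\mathcal{S}(m+1)$ is a stack of several (sub)columns of $\mathcal{S}(m)$, say $\mathcal{C}'$ corresponds to stacking subcolumns of $\mathcal{C}_{i_1}, \mathcal{C}_{i_2}, \ldots, \mathcal{C}_{i_r}$ from $\mathcal{S}(m)$ in that vertical order. Since cutting does not change labels and stacking merely concatenates label strings, $\ell(\mathcal{C}') = \ell(\mathcal{C}_{i_1})\ell(\mathcal{C}_{i_2})\cdots\ell(\mathcal{C}_{i_r})$. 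Counting occurrences of a fixed pattern $a_1^k$ in this concatenation, every occurrence either lies entirely within one of the blocks $\ell(\mathcal{C}_{i_s})$ — contributing exactly $(h(\mathcal{C}_{i_s})-k+1)\,p_k(a_1^k\vert \mathcal{C}_{i_s})$ such occurrences — or straddles a boundary between consecutive blocks, and there are at most $(k-1)$ such straddling positions per junction, hence at most $(k-1)(r-1) \le (k-1)h(\mathcal{C}')/h_{\min}$ straddling occurrences in total, where $h_{\min}$ is the minimal column height in $\mathcal{S}(m)$. Weighting by Lebesgue measure, and using that each subcolumn of $\mathcal{C}_{i_s}$ used in building columns of $\mathcal{S}(m+1)$ together recovers the measure of $\mathcal{C}_{i_s}$, one gets
\begin{align*}
\Bigl\lvert \sum_{\mathcal{C}'\in\mathcal{S}(m+1)} p_k(a_1^k\vert \mathcal{C}')\lambda(\mathcal{C}') - \sum_{\mathcal{C}\in\mathcal{S}(m)} p_k(a_1^k\vert \mathcal{C})\lambda(\mathcal{C}) \Bigr\rvert \le \frac{(k-1)\lambda(\mathcal{S}(1))}{h_{\min}(\mathcal{S}(m))}.
\end{align*}
Therefore, provided $h_{\min}(\mathcal{S}(m))\to\infty$ (which follows from $w(\mathcal{S}(m))\to 0$ together with $\lambda(\mathcal{S}(1))=1$, since each column has width at least, say, one subinterval of the base, so its height is at least $\lambda(\mathcal{C})/w(\mathcal{C}) \ge \lambda(\mathcal{C})/w(\mathcal{S}(m))$, and not all columns can be light), the sequence $s_m \triangleq \sum_{\mathcal{C}\in\mathcal{S}(m)} p_k(a_1^k\vert \mathcal{C})\lambda(\mathcal{C})$ is Cauchy, hence convergent; call the limit $\nu(a_1^k)$.

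Next I would identify $\nu(a_1^k)$ with $\mu(a_1^k)$, the true finite-dimensional distribution of the final $(T,\mathcal{P}_{\mathcal{S}(1)})$-process. For a fixed $m$ and a point $\omega$ in a column $\mathcal{C}\in\mathcal{S}(m)$, the first $h(\mathcal{C})-1$ iterates of $T_{\mathcal{S}(m)}$ (equivalently of the common extension $T$, which extends all $T_{\mathcal{S}(m)}$) move $\omega$ straight up the column, so the symbol sequence $X_1(\omega)X_2(\omega)\cdots$ reads off the labels $\ell(\mathcal{C})$ starting at the level of $\omega$. Hence, if $\omega$ lies in one of the first $h(\mathcal{C})-k+1$ levels of $\mathcal{C}$, the event $\{X_1^k(\omega)=a_1^k\}$ occurs precisely when that level is one of the $p_k(a_1^k\vert\mathcal{C})(h(\mathcal{C})-k+1)$ matching levels. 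Summing over columns and controlling the measure of points lying in the top $k-1$ levels of each column — again $O((k-1)/h_{\min}(\mathcal{S}(m)))$ in total — gives $\lvert \mu(a_1^k) - s_m\rvert \to 0$. Combining with the previous paragraph yields $\mu(a_1^k)=\nu(a_1^k)=\lim_{m\to\infty} s_m$, which is the claimed formula.

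The main obstacle, and the step requiring the most care, is the bookkeeping in the first paragraph: establishing that the weighted sum $s_m$ is (asymptotically) invariant under a single cutting-and-stacking step. One must be careful that a general cutting and stacking (as in condition (2) of Theorem \ref{pro_of_gad}) can split one column of $\mathcal{S}(m)$ across many columns of $\mathcal{S}(m+1)$ and can interleave pieces of different columns in a single new column; the clean accounting above relies only on the facts that (a) labels are preserved under cutting, (b) a new column's label is the concatenation of old columns' labels, and (c) total measure is preserved, so the only loss is at block junctions and is bounded by (number of junctions) $\times (k-1) \times$ (width of a piece). Quantifying "number of junctions times width" uniformly by $\lambda(\mathcal{S}(1))/h_{\min}(\mathcal{S}(m))$ is the delicate inequality; once it is in place the rest is routine. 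A secondary technical point is verifying $h_{\min}(\mathcal{S}(m))\to\infty$ from $w(\mathcal{S}(m))\to 0$; this uses that $\lambda(\mathcal{S}(m))=1$ is constant while each $w(\mathcal{C})\le w(\mathcal{S}(m))$, so the number of columns grows and, more to the point, $h(\mathcal{C})=\lambda(\mathcal{C})/w(\mathcal{C})\ge \lambda(\mathcal{C})/w(\mathcal{S}(m))$, combined with a pigeonhole argument to rule out all columns being simultaneously of small measure.
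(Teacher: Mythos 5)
The paper cites this theorem from Shields \cite{sh96} and does not reprove it, so I evaluate your argument on its own terms. The overall strategy---track the measure-weighted empirical $k$-block frequencies under a cutting-and-stacking step, show the sequence is Cauchy, then identify the limit with the $(T,\mathcal{P})$-process distribution by reading labels up a column---is the right one and is essentially how this fact is established in the ergodic-theory literature.

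There is, however, a concrete gap: you control both the straddling occurrences and the top-level mass by $\tfrac{1}{h_{\min}(\mathcal{S}(m))}$ and then assert that $h_{\min}(\mathcal{S}(m))\to\infty$ follows from $w(\mathcal{S}(m))\to 0$ and $\lambda(\mathcal{S}(1))=1$. This implication is false in general. The hypotheses only force the \emph{width-weighted average} height to blow up (since $1=\sum_{\mathcal{C}} w(\mathcal{C})h(\mathcal{C})$ while $\sum_{\mathcal{C}}w(\mathcal{C})\to 0$), not the minimum: one can carry along a stray column of height $1$ whose width shrinks to zero, so that $w(\mathcal{S}(m))\to 0$ yet $h_{\min}(\mathcal{S}(m))\equiv 1$. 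Your parenthetical ``not all columns can be light'' pigeonhole argument does not rescue this, because it bounds \emph{some} column's height from below, not the minimum. The fix is to bypass $h_{\min}$ entirely and bound directly by $w(\mathcal{S}(m))$. For the stacking step: each column $\mathcal{C}'\in\mathcal{S}(m+1)$ made of $r_{\mathcal{C}'}$ stacked subcolumns has $r_{\mathcal{C}'}-1$ junctions, and the widths $w(\mathcal{C}')$ of these junction slices are pairwise disjoint subintervals of the top of $\mathcal{S}(m)$, giving
\[
\sum_{\mathcal{C}'\in\mathcal{S}(m+1)} w(\mathcal{C}')\,(r_{\mathcal{C}'}-1)\ \le\ \lambda\bigl(\text{top of }\mathcal{S}(m)\bigr)\ =\ w(\mathcal{S}(m)),
\]
so the straddling contribution is at most $(k-1)w(\mathcal{S}(m))$. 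Likewise the top $k-1$ levels of $\mathcal{S}(m)$, which you need to discard before reading off $X_1^k$ from $\ell(\mathcal{C})$, have measure exactly $(k-1)w(\mathcal{S}(m))$; and the discrepancy from normalizing by $h(\mathcal{C})-k+1$ rather than $h(\mathcal{C})$ is similarly $O\bigl((k-1)w(\mathcal{S}(m))\bigr)$. With these replacements the Cauchy estimate and the identification $\lvert \mu(a_1^k)-s_m\rvert\le 2(k-1)w(\mathcal{S}(m))\to 0$ both go through, and in fact the Cauchy step becomes redundant since the identification already gives the limit. So the idea is sound and complete once you swap the $1/h_{\min}$ bookkeeping for the $w(\mathcal{S}(m))$ bookkeeping, which also makes the argument shorter.
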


We now pay our attention to the ergodicity of the final process. 

\begin{de}
Two gadgets $\mathcal{S}$ and $\mathcal{S}'$ are said to be {\em $\epl$-independent} if
$$\sum_{\mathcal{C}\in \mathcal{S}} \sum_{\mathcal{D}\in \mathcal{S}'} \lvert\lambda(\mathcal{C}\cap \mathcal{D}) -\lambda(\mathcal{C}) \lambda(\mathcal{D})\rvert\leq \epl,$$
where $\mathcal{C} \cap \mathcal{D}\triangleq \mbox{supp}(\mathcal{C})\cap \mbox{supp}(\mathcal{D})$.
\end{de}

The following theorem from \cite{sh96} gives a sufficient condition for the final process to be ergodic.

\begin{thm} \label{cut_and_erg} \textnormal{\cite{sh96}}
Let $\{\mathcal{S}(m)\}_{m=1}^\infty$ be a sequence of gadgets satisfying the following conditions:
\begin{enumerate}
\item[(1)] For any $m\geq 1$, $\mathcal{S}(m+1)$ is obtained by performing cutting and stacking on $\mathcal{S}(m)$;

\item[(2)] $\lambda(\mathcal{S}(m))\rightarrow 1$ and $w(\mathcal{S}(m))\rightarrow 0$ as $m\rightarrow \infty$;

\item [(3)] There is a sequence $\{\epl_m\}$ with $\lim_{m\rightarrow \infty} \epl_m=0$ such that for any $m\geq 1$, $\mathcal{S}(m)$ and $\mathcal{S}(m+1)$ are $\epl_m$-independent.
\end{enumerate}
Then the final process given by $\{\mathcal{S}(m)\}_{m=1}^\infty$ is ergodic.
\end{thm}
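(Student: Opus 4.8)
The plan is to show that every $T$-invariant Borel set $A\subseteq[0,1]$ has $\lambda(A)\in\{0,1\}$; since $T$ is measure preserving, this is precisely ergodicity of the final process. Fix such an $A$, put $f=\mathbf 1_A-\lambda(A)$, so that $f$ is bounded and $T$-invariant with $\int_{[0,1]}f\,d\lambda=0$, and aim to prove $f=0$ a.e. Everything hinges on one structural observation: since $T_{\mathcal{S}(m)}$ moves every non-top point of $\mathcal{S}(m)$ straight up and $T$ extends $T_{\mathcal{S}(m)}$, an invariant $f$ is a.e.\ constant along each vertical fibre of each column $\mathcal{C}\in\mathcal{S}(m)$. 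Hence the average of $f$ over any single level of $\mathcal{C}$ equals its average over $\mbox{supp}(\mathcal{C})$, so, writing $\mathcal{L}_m$ for the finite algebra generated by the levels of $\mathcal{S}(m)$ and $c_m(\mathcal{C}):=\lambda(\mathcal{C})^{-1}\int_{\mathcal{C}}f\,d\lambda$, the conditional expectation $\mathbb{E}[f\mid\mathcal{L}_m]$ agrees on $\mbox{supp}(\mathcal{S}(m))$ with the column-averaged function $\bar f_m:=\sum_{\mathcal{C}\in\mathcal{S}(m)}c_m(\mathcal{C})\,\mathbf 1_{\mbox{supp}(\mathcal{C})}$.

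First I would prove $\bar f_m\to f$ in $L^2$. Because $\mathcal{S}(m+1)$ is a cutting and stacking of $\mathcal{S}(m)$, each level of $\mathcal{S}(m)$ is only subdivided, never merged, when one passes to $\mathcal{S}(m+1)$, so the algebras $\mathcal{L}_m$ form an increasing sequence; since $w(\mathcal{S}(m))\to0$ these levels are intervals of vanishing width, and since $\lambda(\mathcal{S}(m))\to1$ their union is $[0,1]$ up to a $\lambda$-null set, so $\bigvee_m\mathcal{L}_m$ is the full Borel algebra mod $\lambda$. By L\'evy's martingale convergence theorem, $\mathbb{E}[f\mid\mathcal{L}_m]\to f$ in $L^2$; combined with the previous paragraph and $\lambda([0,1]\setminus\mbox{supp}(\mathcal{S}(m)))\to0$, this gives $\bar f_m\to f$ in $L^2$.

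Next, and this is where the $\epl_m$-independence hypothesis enters, I would prove $\bar f_m\to0$ in $L^2$. View each column $\mathcal{D}\in\mathcal{S}(m+1)$ as a stack of blocks, each block being a subcolumn of some $\mathcal{C}\in\mathcal{S}(m)$. Invariance of $f$ forces the average of $f$ over every block of $\mathcal{D}$ to be $c_{m+1}(\mathcal{D})$; summing over the blocks of $\mathcal{D}$ that lie over a fixed $\mathcal{C}$, and then over $\mathcal{D}$, yields the exact identity
$$
\lambda(\mathcal{C})\,c_m(\mathcal{C})=\sum_{\mathcal{D}\in\mathcal{S}(m+1)}\lambda(\mathcal{C}\cap\mathcal{D})\,c_{m+1}(\mathcal{D}).
$$
Replacing $\lambda(\mathcal{C}\cap\mathcal{D})$ by $\lambda(\mathcal{C})\lambda(\mathcal{D})$, and bounding the error with the $\epl_m$-independence of $\mathcal{S}(m)$ and $\mathcal{S}(m+1)$ together with $|c_{m+1}(\cdot)|\le\|f\|_\infty$, one obtains
$$
\sum_{\mathcal{C}\in\mathcal{S}(m)}\lambda(\mathcal{C})\,\bigl|c_m(\mathcal{C})-\bar c_{m+1}\bigr|\le\|f\|_\infty\,\epl_m,\qquad \bar c_{m+1}:=\sum_{\mathcal{D}}\lambda(\mathcal{D})\,c_{m+1}(\mathcal{D})=\int_{\mbox{supp}(\mathcal{S}(m+1))}f\,d\lambda .
$$
Since $\int_{[0,1]}f\,d\lambda=0$ and $\lambda(\mathcal{S}(m+1))\to1$, we have $\bar c_{m+1}\to0$, so $\|\bar f_m\|_{L^1}\to0$; as $\|\bar f_m\|_{L^\infty}\le\|f\|_\infty$, this upgrades to $\|\bar f_m\|_{L^2}\to0$. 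Combining the two steps gives $f=0$ a.e., hence $\lambda(A)\in\{0,1\}$.

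I expect the main difficulty to be organizational rather than conceptual: one must track precisely how the supports, levels and columns of $\mathcal{S}(m)$ sit inside those of $\mathcal{S}(m+1)$ — in particular to justify the block-average identity and the exact recursion above — and one must carry along the harmless but ubiquitous error $\lambda([0,1]\setminus\mbox{supp}(\mathcal{S}(m)))\to0$ coming from the fact that the supports only exhaust $[0,1]$ in the limit. A careful execution of this scheme is given in~\cite{sh96}.
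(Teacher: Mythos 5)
Your argument is correct, and it is a clean and essentially self-contained proof. Note first that the paper does not prove this statement; Theorem~\ref{cut_and_erg} is quoted verbatim from Shields' monograph \cite{sh96}, so the comparison is really with Shields' treatment rather than with anything in the present paper.

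Shields' own proof works at the level of block frequencies: he characterizes ergodicity via almost-sure convergence of empirical $k$-block distributions, shows that long columns of $\mathcal{S}(m)$ act as approximate Rokhlin towers exhausting most of the space, and uses the $\epl_m$-independence to control how the column label distributions at successive stages deviate from one another. Your route replaces that combinatorial bookkeeping with an $L^2$ martingale argument on invariant functions: the increasing level algebras $\mathcal{L}_m$ plus the conditions $w(\mathcal{S}(m))\to 0$, $\lambda(\mathcal{S}(m))\to 1$ give $\bigvee_m\mathcal{L}_m=\mathcal{B}$ mod $\lambda$, so $\mathbb{E}[f\mid\mathcal{L}_m]\to f$ in $L^2$; the $T$-invariance identifies $\mathbb{E}[f\mid\mathcal{L}_m]$ with the column-averaged function $\bar f_m$ on $\mbox{supp}(\mathcal{S}(m))$; and the $\epl_m$-independence, fed into the exact column-average recursion, forces $\bar f_m\to 0$ in $L^1$ (hence in $L^2$, by the uniform bound $\|\bar f_m\|_\infty\le\|f\|_\infty$), yielding $f=0$ a.e. The two approaches buy slightly different things: Shields' is tuned to the process language and generalizes immediately to the mixing-type refinements he pursues later, whereas yours localizes all of the cutting-and-stacking structure in the single identity $\lambda(\mathcal{C})\,c_m(\mathcal{C})=\sum_{\mathcal{D}}\lambda(\mathcal{C}\cap\mathcal{D})\,c_{m+1}(\mathcal{D})$ and then lets standard functional analysis do the rest, which is arguably the more transparent route to the bare ergodicity statement.

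Two small points to make explicit when writing this up: (a) the block-average identity needs $\mbox{supp}(\mathcal{C})\subseteq\mbox{supp}(\mathcal{S}(m+1))$, which is automatic since cutting and stacking only subdivides and rearranges the existing material; and (b) when invoking the martingale convergence, $\mathcal{L}_m$ should be understood as the algebra generated by the levels of $\mathcal{S}(m)$ \emph{together with} the residual set $[0,1]\setminus\mbox{supp}(\mathcal{S}(m))$, whose contribution is absorbed by $\lambda(\mathcal{S}(m))\to 1$, exactly as you indicate. With these noted, the proof is complete.
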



\begin{rem} \label{ind_erg}
According to Theorem I.10.11 of \cite{sh96}, 
for any $\epl_m$, there is an $M_m$ such that 
 $\mathcal{S}(m)$ and $\mathcal{S}(m+1)$ are $\epl_m$-independent, where $\mathcal{S}(m+1)\triangleq\mathcal{S}(m)^{\langle M_m \rangle}$.
This result plays an important role in the remainder of this paper.
\end{rem}

\subsection{Construction of the Counterexample} \label{constru_of_counter}

In this section, we construct a stationary and ergodic process $\mu$~\footnote{Throughout this section, for for a random process $Z$ with distribution $\mu$, we may use $Z$ and $\mu$ interchangeably to denote the process.} for which we will prove in subsequent sections that its R\' enyi entropy rate $H_{\alpha}(\mu)$ exists for all $\alpha\in [1,\infty)$ yet fails to converge to $H(\mu)$ as $\alpha$ monotonically decreasing to $1$. As mentioned before, such a counterexample is easy to construct without the ergodicity assumption (see Section IV of \cite{ra2001}); on the other hand, although it has been shown in \cite{NH2008} that $H_{\alpha}(\mu)$ always converges to $H(\mu)$ when $\mu$ is weakly $\psi$-mixing (hence ergodic), it remains unknown that whether the same result is true if $\mu$ is assumed to be as general as ergodic. In this and the following subsections, we give a negative answer to this question by constructing a counterexample. The idea of the construction is to use the cutting and stacking method to control finite-dimensional probabilities of the process.

We need the following definition before constructing the counterexample. Roughly speaking, it defines a ``fractional" version of the $M$-fold cutting and stacking.
\begin{de} \label{fractional_ind}
Let $\mathcal{S}=\{\mathcal{S}_L , \mathcal{S}_R\}$ be a gadget with measure $1$ where $\mathcal{S}_L$ is a single column with $\lambda(\mathcal{S}_L)=\alpha$.
We use $\{ \langle \mathcal{S}_L\rangle_M, \mathcal{S}_R^{\langle M \rangle} \}$ to denote a gadget obtained from $\mathcal{S}$ by applying the 
following steps:
\begin{enumerate}
\item[1)] $\langle \mathcal{S}_L\rangle_{M}$ is obtained from $\mathcal{S}_L$ by cutting $\mathcal{S}_L$ evenly into $M$ subcolumns and then stack them into a single column; 
\item[2)] $\mathcal{S}_R^{\langle M \rangle}$ is obtained by applying the $M$-fold cutting and stacking to $\mathcal{S}_R$.
\end{enumerate}
\end{de}



Making use of the above definition, we now elaborate the construction of our counterexample.
We first construct a sequence of gadgets $\{\mathcal{G}(m)\}_{m=1}^\infty$, each of which has measure $1$ and labelled over the alphabet $\{0,1\}$ through the following steps:
\medskip

\textbf{Step 1:} Choose two sequences 
of constants $\{\alpha_m\}_{m=1}^\infty$ and $\{\beta_m\}_{m=1}^\infty$ such that $0<\alpha_m< \beta_m<1$ for any feasible $m$, $\lim_{m\rightarrow \infty}\alpha_m=0$ and $\lim_{m\rightarrow \infty}\beta_m=0$. Also choose a strictly positive integer $l_1$. (The choices of $\{\alpha_m\}_{m=1}^\infty, \{\beta_m\}_{m=1}^\infty$ and $l_1$ will be specified later in the proof of Proposition \ref{property_of_mu})
\medskip

\textbf{Step 2:} Let $m=1$ and define $\mathcal{G}(1)\triangleq \{\mathcal{L}(1), \mathcal{R}(1)\}$, where $\mathcal{L}(1)$ is a single column of height $l_1$, width $\frac{1}{l_1 2^{2 l_1 /3}}$ and label $\underbrace{1 1 \cdots 1}_{l_1}$, and $\mathcal{R}(1)$ consists of $2^{2l_1/3} -1$ columns, each of which has height $l_1$ and width $\frac{1}{l_1 2^{2 l_1 /3 }}$. We further assign distinct labels to columns in $\mathcal{R}(1)$ such that 
the last column is labeled $\underbrace{1 1 \cdots 1}_{l_1}$. 
Note that 
$\lambda(\mathcal{G}(1))=1$ and 
$\lambda(\mathcal{L}(1))=\beta_1$.
\medskip

\textbf{Step 3:} Suppose $l_m$ has been chosen and $\mathcal{G}(m)\triangleq \{\mathcal{L}(m), \mathcal{R}(m)\}$ has already been constructed where $\lambda(\mathcal{L}(m))=\beta_m$. Cut $\mathcal{L}(m)$ into two copies $\mathcal{L}(m,1)$ and $\mathcal{L}(m,2)$ such that 
$\lambda(\mathcal{L}(m,1))=\beta_{m+1}$ and 
$\lambda(\mathcal{L}(m,2))=\beta_m-\beta_{m+1}.$
\medskip

\textbf{Step 4:} Choose a positive integer $l_{m+1}$ (the existence of $l_{m+1}$ follows from Remark \ref{ind_erg} and the fact that $\beta_m>\alpha_m$) large enough such that the following three conditions hold:
\begin{enumerate}
\item [$(a)$] $\displaystyle\frac{l_{m+1}}{l_m}$ is a positive integer;
\item [$(b)$] $\displaystyle\left( 1-\frac{m}{l_m}\right) \beta_m\geq \alpha_m$;
\item [$(c)$] Let $\mathcal{R}(m+1)\triangleq \{\mathcal{L}(m,2), \mathcal{R}(m)\}^{\langle l_{m+1}/l_m\rangle}$.
Then $\mathcal{R}(m+1)$ and $\{\mathcal{L}(m,2) , \mathcal{R}(m)\}$ are $\epl_m$-independent.
\end{enumerate}
\medskip

\textbf{Step 5:} Define $\mathcal{L}(m+1)\triangleq\langle \mathcal{L}(m,1) \rangle_{l_{m+1}/l_m}$ and $\mathcal{G}(m+1)\triangleq \{\mathcal{L}(m+1), \mathcal{R}(m+1)\}$. 
\medskip

\textbf{Step 6:} Increase the value of $m$ by $1$ and go to Step 3.
\medskip

It is obvious from the above construction that for any $m\geq 1$, $\lambda(\mathcal{G}(m))=1$, and $w(\mathcal{G}(m))$ converges to $0$ as $m$ goes to infinity. Hence, letting $T$ denote the common extension of the maps  $\{T_{\mathcal{G}(m)}\}_{m=1}^\infty$ and $\mathcal{P}_{\mathcal{G}}$ be the partition induced by $\mathcal{G}(1)$, we infer from Theorem \ref{pro_of_gad} that the $(T, \mathcal{P}_{\mathcal{G}})$-process of $\{\mathcal{G}(m)\}_{m=1}^\infty$ is well defined. By properly choosing $l_1$ and $\beta_m$ (the choices will be specified in the next section), this binary process will be the counterexample we construct.

For notational simplicity, in the remainder of this paper, we use $\mu_\infty^\mathcal{G}$ to denote the final process constructed by the above steps. 


\subsection{Properties of $\mu_\infty^\mathcal{G}$} \label{Properties_of_mu}

In this section, we will show that with proper choices of constants $l_1$ and $\{\beta_m\}_{m=1}^\infty$, the final process $\mu_\infty^{\mathcal{G}}$ has some desirable properties, which, as will be shown later, are essential for establishing that $H_{\alpha}(\mu_\infty^{\mathcal{G}})$ does not converge to $H(\mu_\infty^{\mathcal{G}})$ as $\alpha$ monotonically decreasing to $1$. More specifically, we will establish the following proposition.
\begin{pr} \label{property_of_mu}
The constant $l_1$ and sequences $\{\alpha_m\}_{m=1}^\infty, \{\beta_m\}_{m=1}^\infty$ in Section~\ref{constru_of_counter} can be chosen such that the followings hold:
\begin{enumerate}
\item [(A)] $\lim\limits_{m\rightarrow \infty} \displaystyle\frac{1}{m} \log \alpha_m =0$;
\item [(B)] $\mu_\infty^\mathcal{G}(\{x_1^\infty: x_1^m=\underbrace{1\cdots 1}_{m}\}) \geq \alpha_m;$
\item [(C)]$\mu_\infty^\mathcal{G}$ is ergodic;
\item [(D)]$H(\mu_\infty^{\mathcal{G}}) > 1/2$.
\end{enumerate}
\end{pr}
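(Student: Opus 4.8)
\medskip

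The plan is to pin down the free parameters $l_1,\{\alpha_m\},\{\beta_m\}$ and then verify $(A)$--$(D)$ one at a time; $(D)$ is the only substantial point. For $(A)$ and $(B)$: once $l_1$ is fixed, Step~2 forces $\beta_1=2^{-2l_1/3}$, and I would take $\beta_m=\beta_1/m$ and $\alpha_m=\beta_1/(m\log(m+2))$ (any choice with $0<\alpha_m<\beta_m<\beta_{m-1}$, $\beta_m\downarrow0$, and $\alpha_m$ decaying sub-exponentially works). Then $\tfrac1m\log\alpha_m=\tfrac1m(\log\beta_1-\log m-\log\log(m+2))\to0$, which is $(A)$, and the inequality $\alpha_m<\beta_m$ is exactly what Step~4 needs in order to guarantee the existence of each $l_{m+1}$. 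For $(B)$, an easy induction on Steps~2 and~5 shows that $\mathcal{L}(m)$ is a single column of height $l_m$, measure $\beta_m$, and label $\underbrace{1\cdots1}_{l_m}$. Hence any point $\omega$ lying in one of the bottom $l_m-m+1$ levels of $\mathcal{L}(m)$ satisfies $\omega,T\omega,\dots,T^{m-1}\omega\in\mathcal{P}_1$ (none of $\omega,\dots,T^{m-2}\omega$ is on the top of $\mathcal{G}(m)$, so there $T$ coincides with $T_{\mathcal{G}(m)}$ and the iterates stay inside $\mathcal{L}(m)$), so
\[
\mu_\infty^\mathcal{G}\bigl(\{x_1^\infty:x_1^m=\underbrace{1\cdots1}_{m}\}\bigr)\ \geq\ \frac{(l_m-m+1)\beta_m}{l_m}\ \geq\ \Bigl(1-\frac{m}{l_m}\Bigr)\beta_m\ \geq\ \alpha_m
\]
by condition $(b)$ of Step~4. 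This gives $(B)$.

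For $(C)$ I would apply Theorem~\ref{cut_and_erg}. Conditions (1) and (2) are immediate: $\mathcal{G}(m+1)$ is obtained from $\mathcal{G}(m)$ by cutting $\mathcal{L}(m)$ into $\mathcal{L}(m,1),\mathcal{L}(m,2)$ and then applying cutting-and-stacking operations to the pieces, while $\lambda(\mathcal{G}(m))\equiv1$ and $w(\mathcal{G}(m))=1/l_m\to0$ (the $l_m$ grow without bound, since condition $(c)$ of Step~4 forces $l_{m+1}/l_m$ to be large). For condition (3) I need $\mathcal{G}(m)$ and $\mathcal{G}(m+1)$ to be $\epl'_m$-independent with $\epl'_m\to0$, whereas Step~4$(c)$ only gives $\epl_m$-independence of $\mathcal{R}(m+1)$ and $\{\mathcal{L}(m,2),\mathcal{R}(m)\}$. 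Using $\mathrm{supp}(\mathcal{L}(m+1))\subseteq\mathrm{supp}(\mathcal{L}(m))$, $\mathrm{supp}(\mathcal{R}(m+1))=\mathrm{supp}(\mathcal{L}(m,2))\cup\mathrm{supp}(\mathcal{R}(m))$, and the disjointness of $\mathrm{supp}(\mathcal{L}(m))$ from $\mathrm{supp}(\mathcal{R}(m))$, I would split the double sum $\sum_{\mathcal{C}\in\mathcal{G}(m)}\sum_{\mathcal{D}\in\mathcal{G}(m+1)}|\lambda(\mathcal{C}\cap\mathcal{D})-\lambda(\mathcal{C})\lambda(\mathcal{D})|$ into the part running over the $\mathcal{R}(m)$- and $\mathcal{R}(m+1)$-columns (bounded by the Step~4$(c)$ quantity $\epl_m$) and the remaining terms, each of which involves $\mathcal{L}(m)$ or $\mathcal{L}(m+1)$ and is $O(\beta_{m+1})$ because the relevant supports or measures are $O(\beta_{m+1})$. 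A short computation then gives $\epl'_m\le\epl_m+3\beta_{m+1}\to0$, so Theorem~\ref{cut_and_erg} yields $(C)$.

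For $(D)$ --- the crux --- I would compare $\mu_\infty^\mathcal{G}$ to the ``pure'' process $\nu$ obtained by concatenating i.i.d.\ blocks, each uniform over the $N_1:=2^{2l_1/3}-1$ length-$l_1$ labels of the columns of $\mathcal{R}(1)$, followed by a uniform random cyclic shift in $\{0,\dots,l_1-1\}$. Conditioning on the shift and the block sequence (which determine $X_1^n$) and, conversely, noting that given $X_1^n$ the shift takes at most $l_1$ values while all but the two boundary blocks are determined, one obtains $H(\nu)=\tfrac1{l_1}\log N_1=\tfrac1{l_1}\log(2^{2l_1/3}-1)$, which exceeds $\tfrac12$ once $l_1$ is large enough (the number $2^{2l_1/3}$ of columns in $\mathcal{R}(1)$ is chosen precisely so that this holds). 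Next, since every $l_m$ is a multiple of $l_1$ and, by induction on the construction, the label of every column of $\mathcal{G}(m)$ is a concatenation of $\mathcal{R}(1)$-labels in which some length-$l_1$ slots have been forced to equal $\underbrace{1\cdots1}_{l_1}$ (the forced slots coming from the $\mathcal{L}(k,2)$'s), the process $\mu_\infty^\mathcal{G}$ is obtained from $\nu$ by overwriting with $1$'s a set of positions of density at most $\sum_{k\ge1}(\beta_k-\beta_{k+1})=\beta_1$. Hence the Ornstein distance satisfies $\bar d(\mu_\infty^\mathcal{G},\nu)\le\beta_1$, and by $\bar d$-continuity of the entropy rate (see \cite{sh96}), $|H(\mu_\infty^\mathcal{G})-H(\nu)|\le\beta_1\log 2+h_2(\beta_1)$, where $h_2$ is the binary entropy function. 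Since $\beta_1=2^{-2l_1/3}$, both error terms vanish as $l_1\to\infty$, so choosing $l_1$ large (a concrete value such as $l_1=12$, giving $\beta_1=1/256$ and $H(\nu)=\tfrac1{12}\log 255$, already suffices) yields $H(\mu_\infty^\mathcal{G})>\tfrac12$, which is $(D)$. The hard part is making this last comparison rigorous: one has to exhibit an explicit stationary coupling of $\mu_\infty^\mathcal{G}$ and $\nu$ --- most naturally built level by level out of $\mathcal{G}(m)$ and the pure gadgets $\mathcal{R}(1)^{\langle l_m/l_1\rangle}$ --- realizing the density bound, and then invoke $\bar d$-continuity of the entropy rate carefully; by contrast the evaluation of $H(\nu)$ and the bookkeeping of the contamination density are routine.
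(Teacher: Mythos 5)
Your treatment of $(A)$, $(B)$, and $(C)$ coincides with the paper's: the same level-counting bound for $(B)$, and the same reduction of $(C)$ to Theorem~\ref{cut_and_erg} by bounding the $\epl$-independence discrepancy between $\mathcal{G}(m)$ and $\mathcal{G}(m+1)$ by the Step-4$(c)$ quantity plus an $O(\beta_m+\beta_{m+1})$ contribution from the $\mathcal{L}$-columns (the paper gets $\epl_m+\beta_m+\beta_{m+1}$; your $\epl_m+3\beta_{m+1}$ is a cosmetic variation). For $(D)$, however, you take a genuinely different route. The paper works entirely through the normalized Shannon entropies of the gadgets: it computes $H(\mathcal{G}(1))=\tfrac23 - \tfrac{1}{l_1 2^{2l_1/3-1}}$ directly, shows via Lemmas~\ref{entropy_rate_of_gadget_2}--\ref{entropy_rate_of_gadget_1} (packaged as Corollary~\ref{S(m)_and_S(m+1)}) that each construction step loses at most $H_{\textnormal{b}}(\beta_{m+1})$, forces $\sum_m H_{\textnormal{b}}(\beta_m)<\tfrac16$ by taking $\beta_m = 1/(m+N)^2$ with $N,l_1$ large, and then passes to the final process via Lemmas~\ref{conca_pro_entropy}, \ref{weak_conv} and the weak upper semicontinuity of entropy rate (Shields, Thm.~I.9.1). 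You instead compare $\mu_\infty^{\mathcal{G}}$ to the single fixed pure concatenated-block process $\nu$ built from $\mathcal{R}(1)$, bound $\bar d(\mu_\infty^{\mathcal{G}},\nu)\le\beta_1$ by a coupling that overwrites a density-$\beta_1$ set of $l_1$-slots with all-ones blocks, and then invoke $\bar d$-continuity of the entropy rate. Both routes are sound. Your approach is cleaner in one respect: since the $\bar d$-bound depends only on $\beta_1$ (and not on the summability of $H_{\textnormal{b}}(\beta_m)$), it works for \emph{any} decreasing $\beta_m\downarrow 0$ — note your choice $\beta_m=\beta_1/m$ would make $\sum H_{\textnormal{b}}(\beta_m)$ diverge and thus would \emph{not} work with the paper's argument, but it does work with yours. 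The price is that you lean on the $\bar d$-continuity of entropy and on constructing a stationary joining realizing the density bound; this is where the genuine work lies, and you correctly flag it. To make it airtight you should spell out the hierarchical coupling (the color of each $l_1$-slot is determined by a nested sequence of ``did we fall into $\mathcal{L}(k,2)$'' events, which are \emph{not} i.i.d.\ across slots, though the blue labels are conditionally i.i.d.\ uniform given the colors), verify that randomizing the start over $\{0,\dots,l_m-1\}$ still gives the $\nu$-marginal (it does, because the block-independent $\nu$-process is $l_1$-periodic in law and $l_1\mid l_m$), and then pass to the weak limit using lower semicontinuity of $\bar d$ under weak convergence of stationary measures before applying the Fano-type bound $\lvert H(\mu)-H(\nu)\rvert\le h_2(\bar d(\mu,\nu))$ for a binary alphabet. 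None of these steps should fail, but they are precisely the machinery the paper's proof is structured to avoid, trading it for the more hands-on gadget-entropy bookkeeping of Lemmas~\ref{entropy_rate_of_gadget_2}--\ref{entropy_rate_of_gadget_1}.
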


Some discussions are needed before proving this proposition. The proofs of Properties $(A)$, $(B)$ and $(C)$ in Proposition~\ref{property_of_mu} follow similar arguments as in Section III.1.c of \cite{sh96} and are relatively easy. The proof of Property $(D)$ is however somewhat subtle, as it requires not only a specific description on how a ``fractional" independent cutting and stacking affects the width distribution, but also an explicit relationship between the Shannon entropy rate of the final process and the width distributions of $\{\mathcal{G}(m)\}_{m=1}^\infty$. To this end, we begin with some definitions.

\begin{de} \label{label_function}
Let $\mathcal{S}=\{\mathcal{C}_1,\mathcal{C}_2, \cdots, \mathcal{C}_k\}$ be a gadget labelled over the alphabet $\mathcal{A}=\{1,2,\cdots A\}$ such that $\lambda(\mathcal{S})=1$ and for any $1\leq i\leq k$, $h(\mathcal{C}_i)=h$ for some constant $h$. 
Then the {\em normalized Shannon entropy of $\mathcal{S}$} is defined as
$$H({\mathcal{S}}) \triangleq -\frac{1}{h} \sum_{a_1^h \in \mathcal{A}_1^h} \lambda_\mathcal{S}(a_1^h) \log \lambda_{\mathcal{S}}(a_1^h),$$
where $\lambda_{\mathcal{S}} (a_1^h) \triangleq \sum_{i=1}^k \lambda(\mathcal{C}_i) 1_{\{\ell(\mathcal{C}_i)=a_1^h\}}
$ and for any $1\leq i\leq k$, $\ell(\mathcal{C}_i)$ denotes the label of the $\mathcal{C}_i$. 
In particular, if $\mathcal{C}_i$'s have distinct labels, then
$$H(\mathcal{S})=-\frac{1}{h} \sum_{i=1}^k \lambda(\mathcal{C}_i) \log \lambda(\mathcal{C}_i).$$
\end{de}
It is clear from this definition that columns with the same label have to be ``merged" when computing the normalized Shannon entropy of a gadget. Hence, we introduce the following definition.
\begin{de}
For any $k\geq 1$, let $\mathcal{C}_1=\{I_1^{(1)},\cdots, I_h^{(1)}\}, \cdots, \mathcal{C}_k=\{I_1^{(k)},\cdots, I_h^{(k)}\}$ be $k$ columns with the same height $h$ and the same label. The column $\mathcal{D}\triangleq\{\cup_{i=1}^k I_1^{(i)}, \cdots, \cup_{i=1}^k I_h^{(i)} \}$ is called the {\em merging} of $\mathcal{C}_1, \mathcal{C}_2, \cdots, \mathcal{C}_k$. 
One can easily verify
$$
h(\mathcal{D})=h,\quad w(\mathcal{D})= \sum_{i=1}^k w(\mathcal{C})
$$
and for any $1\leq i\leq k$, 
$$
\mbox{supp}(\mathcal{D})=\bigcup_{i=1}^k \mbox{supp}(\mathcal{C}_i), \quad \ell(\mathcal{D})=\ell(\mathcal{C}_i) .
$$
Moreover,
 a gadget $\mathcal{S}'$ is called {\em a merging of another gadget} $\mathcal{S}$ if columns in $\mathcal{S}'$ have distinct labels and each column in $\mathcal{S}'$ is a merging of some columns in $\mathcal{S}$.
\end{de}

\begin{rem} \label{re_2.9}
It is immediate from this definition that the normalized Shannon entropy of a gadget will not change if we cut or merge some columns of this gadget. 
\end{rem}

In order to prove Proposition \ref{property_of_mu}, we will tackle the following two questions regarding the gadgets $\{\mathcal{G}(m)\}_{m=1}^\infty$ and the final process $\mu_\infty^{\mathcal{G}}$ constructed in Section \ref{constru_of_counter}: 1) what is the relationship between the normalized Shannon entropy of $\mathcal{G}(m)$ and $\mathcal{G}(m+1)$; 2) what is the relationship between the Shannon entropy of the sequence $\{\mathcal{G}(m)\}_{m=1}^\infty$ and the Shannon entropy rate of $\mu_\infty^{\mathcal{G}}$? 
We begin with the following lemma.


\begin{lem} \label{entropy_rate_of_gadget_2}
Let $\mathcal{S}=\{\mathcal{C}_0, \mathcal{C}_1, \cdots, \mathcal{C}_k\}$ be a gadget labelled over the alphabet $\mathcal{A}$ such that 
\begin{enumerate}
\item[(1)] $\lambda(\mathcal{S})=1$; 
\item[(2)] for any $0\leq i \leq k$,
$h(\mathcal{C}_i)=h$ for some $h\in \mathbb{N}^+$; 
\item[(3)] there exists $1\leq i'\leq k$ such that $\ell(\mathcal{C}_{i'})=\ell(\mathcal{C}_0)$. 
\end{enumerate}
Also let $\widetilde{\mathcal{S}}\triangleq \{\mathcal{C}_0, \widetilde{\mathcal{C}}_1, \cdots, \widetilde{\mathcal{C}}_{k'}\}$, where $\{\widetilde{\mathcal{C}}_1, \cdots, \widetilde{\mathcal{C}}_{k'}\}$ is the merging of $\{\mathcal{C}_1, \cdots, \mathcal{C}_k\}$. Define $\mathcal{S}'\triangleq \{\langle\mathcal{C}_0 \rangle_M, \{\mathcal{C}_1, \cdots, \mathcal{C}_k\}^{\langle M \rangle}\}$ and $\widetilde{\mathcal{S}}' \triangleq \{\langle\mathcal{C}_0 \rangle_M, \{\widetilde{\mathcal{C}}_1, \cdots, \widetilde{\mathcal{C}}_{k'}\}^{\langle M \rangle}\}$.
Then,
$$H({\mathcal{S}'})=H({\tilde{\mathcal{S}}'}).$$
\end{lem}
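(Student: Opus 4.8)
The plan is to reduce the claim to the statement that $\mathcal{S}'$ and $\widetilde{\mathcal{S}}'$ induce the same ``label-to-measure'' function. Recall from Definition~\ref{label_function} that $H(\mathcal{S}')$ is determined by the common height of the columns of $\mathcal{S}'$ (which is $Mh$) together with the function $a\mapsto\lambda_{\mathcal{S}'}(a)$, where $\lambda_{\mathcal{S}'}(a)$ is the total measure of all columns of $\mathcal{S}'$ carrying the label $a$; the same is true of $H(\widetilde{\mathcal{S}}')$. Since the columns of both gadgets have height $Mh$ and both gadgets have measure $1$, it suffices to prove $\lambda_{\mathcal{S}'}\equiv\lambda_{\widetilde{\mathcal{S}}'}$ on words of length $Mh$.

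First I would describe the two gadgets explicitly. Put $p_0\triangleq\lambda(\mathcal{C}_0)$ and $c\triangleq\sum_{i=1}^k\lambda(\mathcal{C}_i)=1-p_0$. Cutting and stacking preserve total measure, so $\langle\mathcal{C}_0\rangle_M$ is a single column of height $Mh$, measure $p_0$, and label $\ell(\mathcal{C}_0)\cdots\ell(\mathcal{C}_0)$ ($M$ blocks), while $\{\mathcal{C}_1,\dots,\mathcal{C}_k\}^{\langle M\rangle}$ has total measure $c$; by iterating the Kronecker-product identity of Proposition~\ref{width_dis} along the definition of the $M$-fold cutting and stacking (Definition~\ref{M_fold_icas}), its columns are indexed by tuples $\mathbf{i}=(i_1,\dots,i_M)\in\{1,\dots,k\}^M$, the column indexed by $\mathbf{i}$ having height $Mh$, label $\ell(\mathcal{C}_{i_1})\ell(\mathcal{C}_{i_2})\cdots\ell(\mathcal{C}_{i_M})$, and measure $c\prod_{t=1}^M\bigl(\lambda(\mathcal{C}_{i_t})/c\bigr)=c^{1-M}\prod_{t=1}^M\lambda(\mathcal{C}_{i_t})$. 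The identical description holds for $\widetilde{\mathcal{S}}'$ with $\{\mathcal{C}_1,\dots,\mathcal{C}_k\}$ replaced by its merging $\{\widetilde{\mathcal{C}}_1,\dots,\widetilde{\mathcal{C}}_{k'}\}$, using $\sum_{j=1}^{k'}\lambda(\widetilde{\mathcal{C}}_j)=c$.

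Next I would fix a word $a$ of length $Mh$, split it into $h$-length blocks $a=b^{(1)}b^{(2)}\cdots b^{(M)}$, and compare $\lambda_{\mathcal{S}'}(a)$ with $\lambda_{\widetilde{\mathcal{S}}'}(a)$. For an $h$-length word $b$ set $\mu(b)\triangleq\sum_{i\,:\,\ell(\mathcal{C}_i)=b}\lambda(\mathcal{C}_i)$; by the definition of the merging, $\mu(b)=\lambda(\widetilde{\mathcal{C}}_j)$ if some $\widetilde{\mathcal{C}}_j$ has label $b$ and $\mu(b)=0$ otherwise. On the $\{\mathcal{C}_i\}^{\langle M\rangle}$ side, the tuples $\mathbf{i}$ producing the label $a$ are exactly those with $\ell(\mathcal{C}_{i_t})=b^{(t)}$ for all $t$, and the sum of their measures factorizes across the $M$ coordinates as $c^{1-M}\prod_{t=1}^M\bigl(\sum_{i\,:\,\ell(\mathcal{C}_i)=b^{(t)}}\lambda(\mathcal{C}_i)\bigr)=c^{1-M}\prod_{t=1}^M\mu(b^{(t)})$. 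On the $\{\widetilde{\mathcal{C}}_j\}^{\langle M\rangle}$ side, because the $\widetilde{\mathcal{C}}_j$ have pairwise distinct labels and every block has the common length $h$, there is at most one tuple producing the label $a$, whose measure is again $c^{1-M}\prod_{t=1}^M\mu(b^{(t)})$ (the product being $0$ precisely when some $b^{(t)}$ is not a label of any $\widetilde{\mathcal{C}}_j$, which matches the case where that block annihilates the left-hand sum). Finally, $\langle\mathcal{C}_0\rangle_M$ adds the same extra term $p_0$ to both $\lambda_{\mathcal{S}'}(a)$ and $\lambda_{\widetilde{\mathcal{S}}'}(a)$, namely when $b^{(t)}=\ell(\mathcal{C}_0)$ for every $t$ (here hypothesis~(3), that $\ell(\mathcal{C}_{i'})=\ell(\mathcal{C}_0)$ for some $i'$, guarantees that this word $\ell(\mathcal{C}_0)\cdots\ell(\mathcal{C}_0)$ is also realized inside the cut-and-stack part, so the collision appears on both sides; the identity $\lambda_{\mathcal{S}'}=\lambda_{\widetilde{\mathcal{S}}'}$ holds regardless). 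Thus $\lambda_{\mathcal{S}'}(a)=\lambda_{\widetilde{\mathcal{S}}'}(a)$ for every $a$, and $H(\mathcal{S}')=H(\widetilde{\mathcal{S}}')$ follows.

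The main obstacle is entirely bookkeeping rather than conceptual: one must verify that iterating Proposition~\ref{width_dis} really does produce columns indexed by $M$-tuples with Kronecker-product measures and concatenated labels, that the common block length $h$ lets one recover from a word $a$ exactly which stack of original columns produced it (so that the sum over tuples splits into a product over coordinates), and that the isolated column $\langle\mathcal{C}_0\rangle_M$ together with any label collisions (several $\mathcal{C}_i$ sharing a label, or the all-$\ell(\mathcal{C}_0)$ word occurring both in $\langle\mathcal{C}_0\rangle_M$ and inside the cut-and-stack part) is counted consistently on the two sides. An equivalent and perhaps tidier packaging of the same computation is to observe that the full merging of $\mathcal{S}'$ and the full merging of $\widetilde{\mathcal{S}}'$ are literally one and the same gadget, after which $H(\mathcal{S}')=H(\widetilde{\mathcal{S}}')$ is immediate from Remark~\ref{re_2.9}.
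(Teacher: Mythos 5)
Your proof is correct and follows essentially the same strategy as the paper: both arguments reduce to verifying $\lambda_{\mathcal{S}'}(\boldsymbol{b}_1\cdots\boldsymbol{b}_M)=\lambda_{\widetilde{\mathcal{S}}'}(\boldsymbol{b}_1\cdots\boldsymbol{b}_M)$ for every label word, then factor the sum over $M$-tuples coordinatewise via the Kronecker-product structure of Proposition~\ref{width_dis}, and finally account separately for the isolated column $\langle\mathcal{C}_0\rangle_M$. The one stylistic difference is that you work directly in full generality using the label-to-measure function $\mu(b)=\sum_{i:\,\ell(\mathcal{C}_i)=b}\lambda(\mathcal{C}_i)$, whereas the paper first applies a ``one elementary merge at a time'' WLOG reduction (assuming exactly two columns $\mathcal{C}_{k-1},\mathcal{C}_k$ share $\ell(\mathcal{C}_0)$ and the rest have distinct labels) and then splits into the two cases of whether the label word is the all-$\ell(\mathcal{C}_0)$ word or not; your version absorbs that case split into the single identity $c^{1-M}\prod_t\mu(b^{(t)})$ and so is a bit more uniform. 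Your closing remark---that the full mergings of $\mathcal{S}'$ and $\widetilde{\mathcal{S}}'$ coincide as gadgets, after which Remark~\ref{re_2.9} finishes the job---is a tidy repackaging of the same computation rather than a new argument, but it is exactly the right way to see \emph{why} the lemma must be true.
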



\begin{proof}
Applying merging if necessary, we can assume without loss of generality that among $\mathcal{C}_1, \cdots, \mathcal{C}_k$, only $\mathcal{C}_{k-1}$ and $\mathcal{C}_k$ have the same label as $\mathcal{C}_0$ and furthermore all other columns have distinct labels. 

Under the above assumption, we have 
\begin{align*}
\widetilde{\mathcal{S}}=\{\mathcal{C}_0, \mathcal{C}_1, \cdots,\mathcal{C}_{k-2}, \widetilde{\mathcal{C}}_{k-1}\} \quad \mbox{and} \quad \widetilde{\mathcal{S}}' \triangleq \{\langle\mathcal{C}_0 \rangle_M, \{\mathcal{C}_1, \cdots, \mathcal{C}_{k-2},\widetilde{\mathcal{C}}_{k-1}\}^{\langle M \rangle}\}
\end{align*} 
where $\widetilde{\mathcal{C}}_{k-1}$ is the merging of $\mathcal{C}_{k-1}$ and $\mathcal{C}_k$ (this implies $\lambda(\widetilde{\mathcal{C}}_{k-1})=\lambda(\mathcal{C}_{k-1})+\lambda(\mathcal{C}_{k})$).
We first note from the definition of cutting and stacking that the label of any column $\mathcal{C}\in \mathcal{S}' \cup \widetilde{\mathcal{S}}'$ is given by
$$\ell(\mathcal{C})=\boldsymbol{b}_{1} \boldsymbol{b}_{2} \cdots \boldsymbol{b}_{M},$$ 
where for any $1\leq j\leq M$, $\boldsymbol{b}_{j}\in\{\ell(\mathcal{C}_0),\ell(\mathcal{C}_1) \cdots, \ell(\mathcal{C}_{k-2})\}$ (note that $\boldsymbol{b}_j$ is a block of length $h(\mathcal{S})$.
Then, to prove $H({\mathcal{S}'}) = H({\widetilde{\mathcal{S}}'})$, it suffices to show that
\begin{equation} \label{same_measure}
\lambda_{\mathcal{S}'}(\boldsymbol{b}_{1} \boldsymbol{b}_{2} \cdots \boldsymbol{b}_{M})=\lambda_{\widetilde{\mathcal{S}}'}(\boldsymbol{b}_{1} \boldsymbol{b}_{2} \cdots \boldsymbol{b}_{M})
\end{equation}
for any sequence $\boldsymbol{b}_{1} \boldsymbol{b}_{2} \cdots \boldsymbol{b}_{M}\in \{\ell(\mathcal{C}_0),\ell(\mathcal{C}_1), \cdots, \ell(\mathcal{C}_{k-2})\}^{M}$, where $\lambda_\mathcal{S}$ and $\lambda_{\mathcal{S}'}$ are defined in Definition \ref{label_function}.

To this end, we consider two cases of $\boldsymbol{b}_{1} \boldsymbol{b}_{2} \cdots \boldsymbol{b}_{M}$:

\noindent{\em Case 1:} $\boldsymbol{b}_{j}= \ell(\mathcal{C}_0)$ for any $1\leq j \leq M$.

First note that for a column in $\mathcal{S}'$ ({\em resp.} $\widetilde{\mathcal{S}}'$) with the label $\underbrace{\ell(\mathcal{C}_0) \cdots \ell(\mathcal{C}_0)}_{M}$, it is either the column $\langle \mathcal{C}_0 \rangle_M$ or a column in $\{\mathcal{C}_1, \cdots, \mathcal{C}_k\}^{\langle M \rangle}$({\em resp.} $\{\mathcal{C}_1, \cdots, \mathcal{C}_{k-2},\widetilde{\mathcal{C}}_{k-1}\}^{\langle M \rangle}$). Hence, making use of Proposition \ref{width_dis}, we have 
\begin{align} \label{Case1Lc0S}
&\lambda_{\mathcal{S}'}(\underbrace{\ell(\mathcal{C}_0) \cdots \ell(\mathcal{C}_0)}_{M}) \notag \\
&= \lambda(\langle\mathcal{C}_0\rangle_M)+(1-\lambda(\mathcal{C}_0))\left( \sum_{\substack{i_1, \cdots, i_M: \\\ell(\mathcal{C}_{i_1})\cdots \ell(\mathcal{C}_{i_M})=w_1 \cdots w_M }} \frac{\lambda(\mathcal{C}_{i_1})}{1-\lambda(\mathcal{C}_0)}\cdots \frac{\lambda(\mathcal{C}_{i_M})}{1-\lambda(\mathcal{C}_0)}\right) \notag \\
&= \lambda(\langle\mathcal{C}_0\rangle_M)+(1-\lambda(\mathcal{C}_0))\left( \sum_{i_1 \cdots i_M\in \{k-1, k\}^M  } \frac{\lambda(\mathcal{C}_{i_1})}{1-\lambda(\mathcal{C}_0)}\cdots \frac{\lambda(\mathcal{C}_{i_M})}{1-\lambda(\mathcal{C}_0)}\right) \notag \\
&=\lambda(\langle\mathcal{C}_0\rangle_M)+ (1-\lambda(\mathcal{C}_0)) \left(\frac{\lambda(\mathcal{C}_{k-1})+\lambda(\mathcal{C}_k)}{1-\lambda(\mathcal{C}_0)}\right)^M
\end{align}
and
\begin{align} \label{Case1Lc0tildeS}
\lambda_{\widetilde{\mathcal{S}}'}(\underbrace{\ell(\mathcal{C}_0) \cdots \ell(\mathcal{C}_0)}_{M}) = \lambda(\langle \mathcal{C}_0\rangle_M)+ (1-\lambda(\mathcal{C}_0)) \left(\frac{\lambda(\widetilde{\mathcal{C}}_{k-1})}{1-\lambda(\mathcal{C}_0)}\right)^M.
\end{align}
Since 
$\lambda(\mathcal{C}_{k-1})+\lambda(\mathcal{C}_k)=\lambda(\widetilde{\mathcal{C}}_{k-1})$, we infer from (\ref{Case1Lc0S}) and (\ref{Case1Lc0tildeS}) that 
$$
\lambda_{\mathcal{S}'}(\underbrace{\ell(\mathcal{C}_0) \cdots \ell(\mathcal{C}_0)}_{M})=\lambda_{\widetilde{\mathcal{S}}'}(\underbrace{\ell(\mathcal{C}_0) \cdots \ell(\mathcal{C}_0)}_{M}).
$$

\noindent{\em Case 2:} $\boldsymbol{b}_{j}\neq \ell(\mathcal{C}_k)$ for some $1\leq j\leq M$.

Note that in this case, any column in $\mathcal{S}'$ ({\em resp.} $\widetilde{\mathcal{S}}'$) with the label $\boldsymbol{b}_{1}\boldsymbol{b}_{2}\cdots \boldsymbol{b}_{M}$ must be in  $\{\mathcal{C}_1, \cdots, \mathcal{C}_k\}^{\langle M \rangle}$({\em resp.} $\{\mathcal{C}_1, \cdots, \mathcal{C}_{k-2},\widetilde{\mathcal{C}}_{k-1}\}^{\langle M \rangle}$). Define the set 
$E\triangleq \{1\leq j\leq k:  \boldsymbol{b}_j=\ell(\mathcal{C}_0)\}$ and ${E}^c\triangleq \{1,2,\cdots M\} \setminus E$. Since $\mathcal{C}_1, \cdots, \mathcal{C}_{k-2}$ have distinct labels, we see that for any $j\in E^c$, there is only one column labelled $\boldsymbol{b}_{j}$. Let $f: E^c \rightarrow \{1, 2, \cdots k-2\}$ be a function such that for any $j\in E^c$, $\mathcal{C}_{f(j)}$ is the unique column with $\ell(\mathcal{C}_{f(j)})=\boldsymbol{b}_j$. 
Then, according to Proposition \ref{width_dis}, we have 
\begin{align} \label{inS'}
&\lambda_{\mathcal{S}'}(\boldsymbol{b}_{1} \boldsymbol{b}_{2} \cdots \boldsymbol{b}_{M}) \notag \\
& = (1-\lambda(\mathcal{C}_0)) \left(\sum_{\substack{i_1, \cdots, i_M: \\
\ell(\mathcal{C}_{i_1})\cdots \ell(\mathcal{C}_{i_M})=\boldsymbol{b}_{1} \cdots \boldsymbol{b}_{M}}}\frac{\lambda(\mathcal{C}_{i_1})}{1-\lambda(\mathcal{C}_0)} \cdots \frac{\lambda(\mathcal{C}_{i_M})}{1-\lambda(\mathcal{C}_0)}\right)  \notag \\
&=( 1-\lambda(\mathcal{C}_0))\left(\sum_{\substack{i_j\in \{k-1,k\}: j\in E \\ i_l=f(l): l\in E^c}} \frac{\lambda(\mathcal{C}_{i_1})}{1-\lambda(\mathcal{C}_0)} \cdots \frac{\lambda(\mathcal{C}_{i_M})}{1-\lambda(\mathcal{C}_0)} \right) \notag  \\
&=( 1-\lambda(\mathcal{C}_0)) \left(\frac{\lambda(\mathcal{C}_{k-1})+\lambda(\mathcal{C}_{k})}{1-\lambda(\mathcal{C}_0)}\right)^{\vert E \vert} \left( \prod_{l\in {E}^c} \frac{\lambda(\mathcal{C}_{f(l)})}{1-\lambda(\mathcal{C}_0)}\right).
\end{align}
Similarly, noting that all the columns in $\{\mathcal{C}_1, \cdots, \mathcal{C}_{k-2}, \widetilde{\mathcal{C}}_{k-1}\}$ have distinct labels, we have
\begin{align} \label{intildeS'}
\lambda_{\widetilde{\mathcal{S}}'}(\boldsymbol{b}_{1} \boldsymbol{b}_{2} \cdots \boldsymbol{b}_{M})=( 1-\lambda(\mathcal{C}_0)) \left(\frac{\lambda(\widetilde{\mathcal{C}}_{k-1})}{1-\lambda(\mathcal{C}_0)}\right)^{\vert E \vert} \left( \prod_{l\in {E}^c} \frac{\lambda(\mathcal{C}_{f(l)})}{1-\lambda(\mathcal{C}_0)}\right).
\end{align}
Recalling the fact that $\lambda(\widetilde{\mathcal{C}}_{k-1}) = \lambda(\mathcal{C}_{k-1}) + \lambda (\mathcal{C}_k)$, we conclude from (\ref{inS'}) and (\ref{intildeS'}) that
\begin{align} 
\lambda_{\mathcal{S}'}(\boldsymbol{b}_{1} \boldsymbol{b}_{2} \cdots \boldsymbol{b}_{M})=\lambda_{\widetilde{\mathcal{S}}'}(\boldsymbol{b}_{1} \boldsymbol{b}_{2} \cdots \boldsymbol{b}_{M}) \notag
\end{align}
for Case 2.

Combining the above two cases, we conclude from Definition \ref{label_function} that $H({\mathcal{S}'}) = H({\widetilde{\mathcal{S}}'})$.
\end{proof}

\begin{rem} \label{distribution_is_stable}
It can be immediately checked from the above proof that we indeed have $\boldsymbol{w}(\mathcal{S}') = \boldsymbol{w}(\widetilde{\mathcal{S}}')$. In other words, the width distribution of the resulting gadget will be kept if we interchange the order of merging and independent cutting and stacking  .
\end{rem}

The next lemma reveals that the change of the normalized Shannon entropy will be small as long as the independent cutting and stacking is done on a large fraction of the gadget.

\begin{lem} \label{entropy_rate_of_gadget_1}
Let $\mathcal{S}=\{\mathcal{C}_0, \mathcal{C}_1, \cdots, \mathcal{C}_k\}$ be a gadget labelled over a finite alpbabet such that $\lambda(\mathcal{S})=1, \lambda(\mathcal{C}_0)=\alpha_0$, $h(\mathcal{C}_i)=h$ for any $1\leq i\leq k$ and
$\ell(\mathcal{C}_i)\neq \ell(\mathcal{C}_j)$ for any $1\leq i\neq j \leq k$. We also assume that there exists $1\leq i'\leq k$ such that $\ell(\mathcal{C}_{i'})=\ell(\mathcal{C}_0)$. Let $e$ denote the Euler number and $\mathcal{S}'\triangleq \{\langle \mathcal{C}_0 \rangle_M, \{\mathcal{C}_1, \cdots, \mathcal{C}_k\}^{\langle M \rangle}\}$. 
If
$$
\lambda(\mathcal{C}_0 \cup\{\mathcal{C}_{i'}: \ell(\mathcal{C}_{i'})=\ell(\mathcal{C}_0)\})\leq \frac{1}{e},
$$
then
$$H({\mathcal{S}'}) \geq H({\mathcal{S}})- H_{\textnormal{b}}(\alpha_0),$$
where $H_{\textnormal{b}}(\alpha_0) \triangleq -\alpha_0 \log \alpha_0 - (1-\alpha_0) \log (1-\alpha_0)$ is the binary entropy function.
\end{lem}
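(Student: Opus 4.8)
The plan is to evaluate both $H(\mathcal{S})$ and $H(\mathcal{S}')$ in closed form and then reduce the asserted bound to an elementary one‑variable estimate. Throughout, write $p_j\triangleq\lambda(\mathcal{C}_j)$ for $1\le j\le k$, $r\triangleq 1-\alpha_0=\sum_{j=1}^{k}p_j$, and $q_j\triangleq p_j/r$; let $i'$ be the (unique, since $\mathcal{C}_1,\dots,\mathcal{C}_k$ carry distinct labels) index with $\ell(\mathcal{C}_{i'})=\ell(\mathcal{C}_0)$, and note $h(\mathcal{C}_0)=h$ as well, since otherwise $\mathcal{S}'$ would not have a well-defined height. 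Write $H(\boldsymbol{q})\triangleq-\sum_{j=1}^{k}q_j\log q_j$.

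First I would compute $H(\mathcal{S})$. In $\mathcal{S}$ the only columns sharing a label are $\mathcal{C}_0$ and $\mathcal{C}_{i'}$, so a short computation (group those two via the identity $-a\log a-b\log b=-(a+b)\log(a+b)-(a+b)H_{\textnormal{b}}\!\big(\tfrac{a}{a+b}\big)\cdot(-1)$, i.e. $-(a{+}b)\log(a{+}b)=-a\log a-b\log b-(a{+}b)H_{\textnormal{b}}(\tfrac{a}{a+b})$, then substitute $p_j=rq_j$ and $\alpha_0+r=1$) gives
$$hH(\mathcal{S})=H_{\textnormal{b}}(\alpha_0)+r\,H(\boldsymbol{q})-(\alpha_0+p_{i'})\,H_{\textnormal{b}}(\theta),\qquad\theta\triangleq\frac{\alpha_0}{\alpha_0+p_{i'}}.$$
Next I would compute $H(\mathcal{S}')$ with the help of Proposition~\ref{width_dis}: the gadget $\{\mathcal{C}_1,\dots,\mathcal{C}_k\}^{\langle M\rangle}$ has $k^M$ columns of height $Mh$ with pairwise distinct labels, the one indexed by $(i_1,\dots,i_M)$ carrying label $\ell(\mathcal{C}_{i_1})\cdots\ell(\mathcal{C}_{i_M})$ and measure $r\prod_{t}q_{i_t}$, while the extra column $\langle\mathcal{C}_0\rangle_M$ has height $Mh$, measure $\alpha_0$, and label $\ell(\mathcal{C}_0)^M$, which equals the label of exactly one column of $\{\mathcal{C}_1,\dots,\mathcal{C}_k\}^{\langle M\rangle}$, namely the one indexed by $(i',\dots,i')$ (measure $rq_{i'}^M$). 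Merging only these two and using the i.i.d.\ expansion $\sum_{(i_1,\dots,i_M)}\big(r\prod_{t}q_{i_t}\big)\log\big(r\prod_{t}q_{i_t}\big)=r\log r-rM\,H(\boldsymbol{q})$, one gets
$$MhH(\mathcal{S}')=H_{\textnormal{b}}(\alpha_0)+rM\,H(\boldsymbol{q})-(\alpha_0+rq_{i'}^M)\,H_{\textnormal{b}}(\theta'),\qquad\theta'\triangleq\frac{\alpha_0}{\alpha_0+rq_{i'}^M}.$$
Subtracting, $Mh\big(H(\mathcal{S}')-H(\mathcal{S})\big)=-(M-1)H_{\textnormal{b}}(\alpha_0)+M(\alpha_0+p_{i'})H_{\textnormal{b}}(\theta)-(\alpha_0+rq_{i'}^M)H_{\textnormal{b}}(\theta')$, so the claim $H(\mathcal{S}')\ge H(\mathcal{S})-H_{\textnormal{b}}(\alpha_0)$ is equivalent to
$$M(\alpha_0+p_{i'})\,H_{\textnormal{b}}(\theta)-(\alpha_0+rq_{i'}^M)\,H_{\textnormal{b}}(\theta')\ \ge\ -\big(M(h-1)+1\big)H_{\textnormal{b}}(\alpha_0).$$

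Finally I would close this inequality. Its right‑hand side is $\le-H_{\textnormal{b}}(\alpha_0)\le 0$ because $h\ge 1$. For the left‑hand side, observe that $g(v)\triangleq(\alpha_0+v)H_{\textnormal{b}}\!\big(\tfrac{\alpha_0}{\alpha_0+v}\big)=-\alpha_0\log\alpha_0-v\log v+(\alpha_0+v)\log(\alpha_0+v)$ is nondecreasing on $[0,\infty)$, since $g'(v)=\log\tfrac{\alpha_0+v}{v}\ge 0$, and that $rq_{i'}^M=p_{i'}\,q_{i'}^{\,M-1}\le p_{i'}$; hence $(\alpha_0+rq_{i'}^M)H_{\textnormal{b}}(\theta')=g(rq_{i'}^M)\le g(p_{i'})=(\alpha_0+p_{i'})H_{\textnormal{b}}(\theta)$, so the left‑hand side is at least $(M-1)(\alpha_0+p_{i'})H_{\textnormal{b}}(\theta)\ge 0$, which settles the lemma.

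The step I expect to be the real obstacle is not the final estimate but the two closed‑form computations: one must correctly recognize that within $\mathcal{S}'$ the \emph{only} pair of columns sharing a label is $\langle\mathcal{C}_0\rangle_M$ together with the $(i',\dots,i')$‑column of the $M$‑fold product, and one must keep the height normalization ($h$ versus $Mh$) consistent when passing through Proposition~\ref{width_dis} and through the merging step. I would also note that an argument of this shape uses only $h\ge 1$: the hypothesis $\lambda\big(\mathcal{C}_0\cup\{\mathcal{C}_{i'}:\ell(\mathcal{C}_{i'})=\ell(\mathcal{C}_0)\}\big)\le 1/e$ is not needed for it, and presumably reflects coarser bookkeeping in the authors' intended proof rather than a genuine constraint.
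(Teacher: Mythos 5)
Your proof is correct, and it takes a genuinely different (and tighter) route than the paper's. Both arguments start from the same closed form for $hH(\mathcal{S})$, but where you compute $MhH(\mathcal{S}')$ exactly and then control the merged mass $\alpha_0+rq_{i'}^M$ via the monotonicity of $g(v)=(\alpha_0+v)H_{\mathrm{b}}\bigl(\alpha_0/(\alpha_0+v)\bigr)$, the paper instead replaces the merged term
$-\bigl(\alpha_0+\tfrac{\alpha_{i'}^M}{(1-\alpha_0)^{M-1}}\bigr)\log\bigl(\alpha_0+\tfrac{\alpha_{i'}^M}{(1-\alpha_0)^{M-1}}\bigr)$
by the larger ``unmerged'' quantity
$-\tfrac{\alpha_{i'}^M}{(1-\alpha_0)^{M-1}}\log\tfrac{\alpha_{i'}^M}{(1-\alpha_0)^{M-1}}$,
justified by $-x\log x$ being increasing on $(0,1/e]$ together with $\alpha_0+\tfrac{\alpha_{i'}^M}{(1-\alpha_0)^{M-1}}\le\alpha_0+\alpha_{i'}\le 1/e$ (its ``$\overset{(o)}{=}$'' should really be ``$\overset{(o)}{\ge}$''); this is precisely where the $1/e$ hypothesis is used, and afterwards the sum telescopes to $-(M-1)(1-\alpha_0)\log(1-\alpha_0)+M\sum_i\alpha_i\log\alpha_i$, so that only subadditivity of $-x\log x$ remains. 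The trade-off is that the paper's step is a bit shorter to write, at the price of the extra $1/e$ hypothesis and a looser intermediate bound, whereas your version requires a slightly heavier closed-form bookkeeping but is strictly sharper and confirms your closing observation: only $h\ge 1$ is needed, and the $1/e$ condition is an artifact of the paper's coarser estimate (harmless where the lemma is invoked, in Corollary~\ref{S(m)_and_S(m+1)}, since $\beta_{m+1}$ is taken small there). One minor point worth making explicit, which you do flag: the lemma only asserts $h(\mathcal{C}_i)=h$ for $1\le i\le k$, but $h(\mathcal{C}_0)=h$ is implicitly required for $H(\mathcal{S})$ and $H(\mathcal{S}')$ to be defined at all.
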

\begin{proof}
Without loss of generality, we assume that in the set $\{\mathcal{C}_1, \mathcal{C}_2, \cdots, \mathcal{C}_k\}$, only $\mathcal{C}_k$ has the same label as $\mathcal{C}_0$. Otherwise we can merge columns having the same label as $\mathcal{C}_0$ before performing cutting and stacking, and the resulting normalized Shannon entropy $H(\mathcal{S}')$ will not change due to Lemma~\ref{entropy_rate_of_gadget_2}. 

For any $1\leq i\leq k$, let $\alpha_i \triangleq \lambda(\mathcal{C}_i)$. Then, by definition, we have
\begin{align} \label{H_S=l}
H({\mathcal{S}})=-\frac{1}{h} \left( (\alpha_0+\alpha_k) \log (\alpha_0+\alpha_k) + \sum_{i=1}^{k-1} \alpha_i \log \alpha_i\right).
\end{align}
On the other hand, we see from the definition of cutting and stacking that for any column $\mathcal{D}\in \mathcal{S}'$, there is a sequence $\boldsymbol{b}_1 \boldsymbol{b}_2\cdots \boldsymbol{b}_M\in \{\ell(\mathcal{C}_0), \ell(\mathcal{C}_1) \cdots, \ell(\mathcal{C}_k)\}^M$ such that $\ell(\mathcal{D})=\boldsymbol{b}_1 \boldsymbol{b}_2 \cdots \boldsymbol{b}_M$. Moreover, noting that $\ell(\mathcal{C}_0)=\ell(\mathcal{C}_k)$ and $\ell(\mathcal{C}_i)\neq \ell( \mathcal{C}_j)$ for any $1\leq i\neq j\leq k$, we have 
\begin{align} \label{two_formula}
\begin{cases}
\lambda_{\mathcal{S}'}(\boldsymbol{b}_1 \boldsymbol{b}_2 \cdots \boldsymbol{b}_M)=\alpha_0+(1-\alpha_0)\displaystyle\left(\frac{\alpha_k}{1-\alpha_0}\right)^M \mbox{if $\boldsymbol{b}_j=\ell(\mathcal{C}_0)$ for any $1\leq j\leq M$}  \\
\\
 \lambda_{\mathcal{S}'}(\boldsymbol{b}_1 \boldsymbol{b}_2 \cdots \boldsymbol{b}_M)=(1-\alpha_0)\displaystyle\frac{\alpha_{f(1)}\alpha_{f(2)}\cdots \alpha_{f(k)}}{(1-\alpha_0)^M}\mbox{ if $\boldsymbol{b}_j\neq \ell(\mathcal{C}_0)$ for some $1 \leq j \leq M$},
 \end{cases}
\end{align}
where $\lambda(\mathcal{S}')$ and the function $f$ are defined as in the proof of Lemma \ref{entropy_rate_of_gadget_2}.
Therefore, we can lower bound the normalized Shannon entropy of $\mathcal{S}'$ as
\begin{align}
H({\mathcal{S}'}) &= -\frac{1}{Mh} \Bigg(\lambda_{\mathcal{S}'}(\underbrace{\ell(\mathcal{C}_0) \cdots \ell(\mathcal{C}_0)}_{M})\log(\lambda_{\mathcal{S}'}(\underbrace{\ell(\mathcal{C}_0) \cdots \ell(\mathcal{C}_0)}_{M}))  \notag \\
& \qquad \qquad \quad+ \sum_{\substack{\boldsymbol{b}_1\cdots \boldsymbol{b}_M \in \{\ell(\mathcal{C}_1),\cdots, \ell(\mathcal{C}_k)\}^M: \\ \boldsymbol{b}_j\neq \ell(\mathcal{C}_0)  \mbox{ for some $j$}}} \lambda_{\mathcal{S}'}(\boldsymbol{b}_1 \cdots \boldsymbol{b}_M) \log \lambda_{\mathcal{S}'}(\boldsymbol{b}_1\cdots \boldsymbol{b}_M)\Bigg) \notag
\end{align}
{\small
\begin{align}
&\overset{(g)}{=}-\frac{1}{Mh}\Bigg(\left( \alpha_0+\frac{\alpha_k^M}{(1-\alpha_0)^{M-1}} \right)\log\left( \alpha_0+\frac{\alpha_k^M}{(1-\alpha_0)^{M-1}} \right)  \notag \\
& \qquad \qquad \quad+ \sum_{\substack{1\leq {i_1}, i_2,\cdots, {i_M} \leq k: \\ i_j\neq k  \mbox{ for some $j$}}} \frac{\alpha_{i_1}\alpha_{i_2}\cdots \alpha_{i_M}}{(1-\alpha_0)^{M-1}} \log \frac{\alpha_{i_1}\alpha_{i_2}\cdots \alpha_{i_M}}{(1-\alpha_0)^{M-1}} \Bigg)\notag \\
&\overset{(o)}{=} -\frac{1}{Mh} \Bigg(  \sum_{1\leq i_1, i_2\cdots, i_M\leq k} \frac{\alpha_{i_1} \alpha_{i_2} \cdots \alpha_{i_M}}{(1-\alpha_0)^{M-1}} \log \frac{\alpha_{i_1} \alpha_{i_2} \cdots \alpha_{i_M}}{(1-\alpha_0)^{M-1}}\Bigg) \notag \\
&= -\frac{1}{Mh} \Bigg(  \sum_{1\leq i_1, \cdots, i_M\leq k} \frac{\alpha_{i_1} \cdots \alpha_{i_M}}{(1-\alpha_0)^{M-1}} \log \frac{1}{(1-\alpha_0)^{M-1}} + \sum_{1\leq i_1, \cdots, i_M\leq k} \frac{\alpha_{i_1} \cdots \alpha_{i_M}}{(1-\alpha_0)^{M-1}} \log \alpha_{i_1} \alpha_{i_2} \cdots \alpha_{i_M}\Bigg) \notag \\
&\overset{(p)}{=} -\frac{1}{Mh} \Bigg(  - (M-1) (1-\alpha_0) \log (1-\alpha_0) + M \sum_{i=1}^k \alpha_i \log \alpha_i \Bigg) \notag \\
&=-\frac{1}{Mh} \Bigg( M (\alpha_0+\alpha_k) \log (\alpha_0+ \alpha_k) +M \sum_{i=1}^{k-1} \alpha_i \log \alpha_i -M (\alpha_0+\alpha_k) \log (\alpha_0+ \alpha_k) \notag \\
& \qquad \qquad + M \alpha_k \log \alpha_k -(M-1) (1-\alpha_0) \log (1-\alpha_0) \Bigg) \notag \\
&\overset{(q)}{=} H({\mathcal{S}}) - \frac{M(- (\alpha_0+\alpha_k) \log (\alpha_0+ \alpha_k)
+  \alpha_k \log \alpha_k) -(M-1) (1-\alpha_0) \log (1-\alpha_0)}{Mh} \notag \\
&\overset{(r)}{\geq} H({\mathcal{S}}) - \frac{M(- \alpha_0 \log \alpha_0) -(M-1) (1-\alpha_0) \log (1-\alpha_0)}{Mh} \notag \\
&\geq H(\mathcal{S}) - H_{\textnormal{b}}(\alpha_0) \notag,
\end{align}
}
\!\!where $(g)$ follows from (\ref{two_formula}), $(o)$ follows from the fact that $\alpha_0+\alpha_k^M/(1-\alpha_0)^{M-1}<\lambda(\mathcal{C}_0)+\lambda(\mathcal{C}_k)\leq 1/e$ and $-x\log x$ is an increasing function for $0<x<1/e$, 
$(p)$ follows from the fact that $\sum_{{j}=1}^k \alpha_{j} =1-\alpha_0$, $(q)$ follows from (\ref{H_S=l}), and $(r)$ follows from the fact that $-(x+y) \log (x+y) \leq -x \log x - y \log y$ for any $x, y>0$. Hence the lemma is proved.
\end{proof}

The above two lemmas immediately imply the following corollary, which gives a useful relationship between the normalized Shannon entropy of $\mathcal{G}(m)$ and $\mathcal{G}(m+1).$

\begin{co} \label{S(m)_and_S(m+1)}
Let $\{\mathcal{G}(m)\}_{m=1}^\infty$ be the sequence of gadgets constructed in Section~\ref{constru_of_counter}. Then,
$$
H({\mathcal{G}(m+1)}) \geq H({\mathcal{G}(m)})-H_{\textnormal{b}}(\beta_{m+1}).
$$
\end{co}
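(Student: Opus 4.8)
The plan is to unwind the construction in Section~\ref{constru_of_counter} so that Lemma~\ref{entropy_rate_of_gadget_2} and Lemma~\ref{entropy_rate_of_gadget_1} apply directly to the pair $(\mathcal{G}(m), \mathcal{G}(m+1))$. Recall that $\mathcal{G}(m) = \{\mathcal{L}(m), \mathcal{R}(m)\}$ with $\lambda(\mathcal{L}(m)) = \beta_m$, and that $\mathcal{G}(m+1) = \{\mathcal{L}(m+1), \mathcal{R}(m+1)\}$ where $\mathcal{L}(m+1) = \langle \mathcal{L}(m,1)\rangle_{l_{m+1}/l_m}$ and $\mathcal{R}(m+1) = \{\mathcal{L}(m,2), \mathcal{R}(m)\}^{\langle l_{m+1}/l_m\rangle}$, with $\mathcal{L}(m)$ cut into $\mathcal{L}(m,1)$ of measure $\beta_{m+1}$ and $\mathcal{L}(m,2)$ of measure $\beta_m - \beta_{m+1}$. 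First I would observe that cutting a single column in two (the passage from $\mathcal{L}(m)$ to $\{\mathcal{L}(m,1), \mathcal{L}(m,2)\}$) does not change the normalized Shannon entropy by Remark~\ref{re_2.9}, so $H(\mathcal{G}(m)) = H(\{\mathcal{L}(m,1), \mathcal{L}(m,2), \mathcal{R}(m)\})$, viewing the latter as a gadget all of whose columns have height $l_m$. Thus it suffices to bound $H(\mathcal{G}(m+1))$ below in terms of $H(\{\mathcal{L}(m,1), \mathcal{L}(m,2), \mathcal{R}(m)\})$.

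Next I would identify this gadget with the hypotheses of Lemma~\ref{entropy_rate_of_gadget_1}. Set $\mathcal{C}_0 \triangleq \mathcal{L}(m,1)$, so $\alpha_0 = \lambda(\mathcal{C}_0) = \beta_{m+1}$, and let $\mathcal{C}_1, \ldots, \mathcal{C}_k$ be $\mathcal{L}(m,2)$ together with the columns of $\mathcal{R}(m)$; after a preliminary merging (which by Remark~\ref{re_2.9} changes nothing) we may assume these have distinct labels. Since $\mathcal{L}(m,1)$ carries the all-$1$ label (inherited from $\mathcal{L}(m)$, which by Step~2 and the inductive structure is always the all-$1$ column), and $\mathcal{R}(m)$ contains a column with the all-$1$ label (guaranteed in Step~2 for $m=1$ and preserved since $\mathcal{R}(m)$ arises from cutting-and-stacking of a gadget containing $\mathcal{R}(1)$'s last column), the hypothesis that some $\mathcal{C}_{i'}$ has $\ell(\mathcal{C}_{i'}) = \ell(\mathcal{C}_0)$ holds. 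Then $\mathcal{G}(m+1) = \{\langle \mathcal{C}_0\rangle_M, \{\mathcal{C}_1, \ldots, \mathcal{C}_k\}^{\langle M\rangle}\}$ with $M = l_{m+1}/l_m$, which is exactly $\mathcal{S}'$ in Lemma~\ref{entropy_rate_of_gadget_1}. Applying that lemma yields $H(\mathcal{G}(m+1)) \geq H(\{\mathcal{L}(m,1), \mathcal{L}(m,2), \mathcal{R}(m)\}) - H_{\textnormal{b}}(\beta_{m+1}) = H(\mathcal{G}(m)) - H_{\textnormal{b}}(\beta_{m+1})$, which is the claim.

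The one nontrivial hypothesis of Lemma~\ref{entropy_rate_of_gadget_1} that must be checked is the smallness condition $\lambda(\mathcal{C}_0 \cup \{\mathcal{C}_{i'}: \ell(\mathcal{C}_{i'}) = \ell(\mathcal{C}_0)\}) \leq 1/e$, i.e. the total measure of all columns carrying the all-$1$ label at stage $m$ is at most $1/e$. Here is where I expect the main obstacle to lie: one needs to verify, from the construction, that the sequences $\{\alpha_m\}, \{\beta_m\}$ and the integer $l_1$ can be (and are) chosen so that this bound holds at every stage. The all-$1$-labelled mass consists of $\mathcal{L}(m,1)$ (mass $\beta_{m+1}$, hence small since $\beta_m \to 0$) plus whatever all-$1$ columns sit inside $\{\mathcal{L}(m,2), \mathcal{R}(m)\}$; one would track this quantity inductively through the $M$-fold independent cutting and stacking, noting that the all-$1$-labelled fraction of $\mathcal{R}(m+1)$ is governed by the $M$-fold Kronecker power of the width distribution of $\{\mathcal{L}(m,2), \mathcal{R}(m)\}$ (Remark after Definition~\ref{M_fold_icas}). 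Since this is precisely the kind of bookkeeping deferred to the proof of Proposition~\ref{property_of_mu} — where $l_1$ and $\{\beta_m\}$ are finally pinned down — in the statement of the corollary I would simply invoke that the constants are chosen so that the hypothesis of Lemma~\ref{entropy_rate_of_gadget_1} is met at every stage, and otherwise the argument is the clean two-step reduction described above. (In particular, since $\mathcal{G}(1)$'s columns all have small width $\tfrac{1}{l_1 2^{2l_1/3}}$ and only one extra all-$1$ column appears at each stage while widths shrink geometrically under the folding, the $1/e$ budget is comfortably respected once $l_1$ is large.)
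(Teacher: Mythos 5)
Your proof follows essentially the same route as the paper's: cut $\mathcal{L}(m)$ into $\mathcal{L}(m,1),\mathcal{L}(m,2)$, merge the remaining columns into distinct-label form, identify $\mathcal{G}(m+1)$ (up to equal normalized entropy) with the gadget $\mathcal{S}'$ of Lemma~\ref{entropy_rate_of_gadget_1}, and apply that lemma with $\alpha_0=\lambda(\mathcal{L}(m,1))=\beta_{m+1}$. Two small remarks: the step where you justify merging \emph{before} the $M$-fold independent cutting and stacking rests on Lemma~\ref{entropy_rate_of_gadget_2}, not on Remark~\ref{re_2.9} (the latter only covers cutting/merging within a single fixed gadget, not the interchange with $\langle M\rangle$-folding), though you do correctly list Lemma~\ref{entropy_rate_of_gadget_2} in your opening plan; and your caution about verifying the $1/e$ hypothesis of Lemma~\ref{entropy_rate_of_gadget_1} at every stage is legitimate and in fact goes beyond what the paper's own proof of this corollary spells out, which simply invokes the lemma without checking that condition.
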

\begin{proof}
For any $m\geq 1$, let $\mathcal{G}(m)$, $\mathcal{L}(m),$ $\mathcal{R}(m), \mathcal{L}(m,1)$ and $\mathcal{L}(m,2)$ be defined as in Section \ref{constru_of_counter}. We further define 
\begin{align*}
\mathcal{G}^{(1)}(m)&\triangleq \{\mathcal{L}(m,1), \{\mathcal{L}(m,2), \mathcal{R}(m)\}\}, \\
\mathcal{G}^{(2)}(m)&\triangleq \{\mathcal{L}(m,1), \mathcal{R}'(m)\}, \\
\mathcal{G}^{(3)}(m)&\triangleq \{\langle\mathcal{L}(m,1)\rangle_{l_{m+1}/m_m}, \mathcal{R}'(m)^{\langle l_{m+1}/l_m\rangle}\},
\end{align*}
where $\mathcal{R}'$ is the merging of 
$\{\mathcal{L}(m,2) , \mathcal{R}(m)\}$. 
By Definition \ref{label_function}, we see that $H({\mathcal{S}(m)}) \triangleq H(\mathcal{G}(m))= H({\mathcal{G}^{(1)}(m)})=H({\mathcal{G}^{(2)}(m)})$; on the other hand, 
we infer from Lemma \ref{entropy_rate_of_gadget_2} that $H({\mathcal{G}^{(3)}(m)})=H({\mathcal{G}(m+1)})$. Finally, noting that Lemma \ref{entropy_rate_of_gadget_1} implies
$H({\mathcal{G}^{(3)}(m)})\geq H({\mathcal{G}^{(2)}(m)})-H_{\textnormal{b}}(\lambda(\mathcal{L}{(m,1)})),$ we conclude that
\begin{align*}
H({\mathcal{G}(m+1)})= H({\mathcal{G}^{(3)}(m)}) \geq H({\mathcal{G}^{(2)}(m)})-H_{\textnormal{b}}(\lambda(\mathcal{L}{(m,1)}))=H({\mathcal{G}(m)})- H_{\textnormal{b}}(\beta_{m+1}),
\end{align*}
as desired.
\end{proof}
Our next goal is to develope the relationship between the normalized Shannon entropy of a sequence of gadgets and that of its corresponding final process. We begin with some definitions.

\begin{de} \label{conca_def}
Let $X_1^N=(X_1, X_2, \cdots, X_N)$ be a random vector taking values in $\mathcal{A}^N$ and let $\mu$ be the distribution of $X_1^N$ and therefore $\mu$ is a measure on $\mathcal{A}^N$. 
Moreover, let $\psi: (\mathcal{A}^N)^\infty \rightarrow \mathcal{A}^\infty$ be the mapping given by $\psi(y_1^{N}, y_{N+1}^{2N}, y_{2N+1}^{3N} \cdots) = y_1y_2y_3\cdots$ for any choice of blocks $y_1^{N}, y_{N+1}^{2N}, y_{2N+1}^{3N} \cdots$ in $\mathcal{A}^N$. Then the {\em concatenated-block process defined by $X_1^N$} is a process $\{\widetilde{Y}_n\}_{n=1}^\infty$ such that for any $k$ and any $a_1^k\in \mathcal{A}^k$, $\widetilde{Y}_1^k=a_1^k$ with distribution $\nu(a_1^k)$, where 
\begin{align} \label{measure_relation}
\nu(a_1^k)\triangleq \frac{1}{N} \sum_{i=1}^{N-1} \mu^*(\psi^{-1} (\varphi^{-i} \{x_1^\infty: x_1^k=a_1^k\})),
\end{align}
$\mu^*$ is the infinite-product measure of $\mu$, $\varphi$ is the shift operator on the sequence space $\mathcal{A}^\infty$ and for any measurable set $B$, $\varphi^{-1} B \triangleq \{x: \varphi x \in B\}$. 
\end{de}

\begin{rem} \label{block-inde}
Let $X_1^N$ be defined as in Definition \ref{conca_def} and let $\{Y_n\}_{n=1}^\infty$ be a process such that the blocks $\{(Y_{iN+1}, Y_{iN+2}, \cdots, Y_{(i+1)N}):i\geq 0\}$ are independent and for each $i$ and each $a_1^N\in \mathcal{A}^N$, 
$Y_{iN+1}^{(i+1)N}=a_1^N$ with probability $\mu(a_1^N)$. Then, the concatenated-block process defined by $X_1^N$ can be obtained from $\{Y_n\}_{n=1}^\infty$ by ``randomizing the start". The process $\{Y_n\}_{n=1}^\infty$ here is sometimes called {\em the block-independent process of $X_1^N$} (see, for example, \cite{sh96}).
\end{rem}

Concatenated-block processes are intimately related to gadgets in the following manner. Let 
$\mathcal{S}=\{\mathcal{C}_1, \mathcal{C}_2, \cdots, \mathcal{C}_k\}$ be a gadget labelled over the alphabet $\mathcal{A}$ such that $h(\mathcal{C}_i)=h$ for any $1\leq i\leq k$ and $\lambda(\mathcal{S})=1$.
Define a special sequence of gadgets $\{\mathcal{S}(m)\}_{m=1}^\infty$ by letting $\mathcal{S}(1) \triangleq \mathcal{S}$ and $\mathcal{S}(m+1)\triangleq \mathcal{S}(m)^{\langle 2 \rangle}$ 
for any $m\geq 1$.
As we mentioned before, for any $m\geq 1$, the label of any column in $\mathcal{S}(m)$ is a concatenation of the label of $2^m$ shorter blocks in $\mathcal{S}$; and moreover, we infer from Proposition \ref{width_dis} that the measure distribution of $\mathcal{S}(m+1)$ is the
$2^m$-fold Kronecher product of the measure distribution $\boldsymbol{\lambda}(\mathcal{S})$. Hence, 
letting $X_1^h$ be the random vector such that $X_1^h=\ell(\mathcal{C}_i)$ with probability $\lambda(\mathcal{C}_i)$ for any $1\leq i\leq k$, 
we see that the $(T, \mathcal{P}_{\mathcal{S}(1)})$ process (denoted by $\nu_{\mathcal{S}}$) as elaborated in Theorem \ref{pro_of_gad} is the concatenated-block process defined by $X_1^h$.

\begin{rem}
It is worth noting that the process $\nu_{\mathcal{S}}$ is completely determined by the gadget $\mathcal{S}$. To highlight this dependence, in the remainder of this section, we  call $\nu_{\mathcal{S}}$ {\em the concatenated-block process given by $\mathcal{S}$}.
\end{rem}

Our next result says that the Shannon entropy rate of the concatenated-block process given by the gadget ${\mathcal{S}}$ is equal to the normalized Shannon entropy of $\mathcal{S}.$
\begin{lem} \label{conca_pro_entropy}
Let $\mathcal{S}=\{\mathcal{C}_1,\mathcal{C}_2,\cdots,\mathcal{C}_k\}$ be a labelled gadget such that $\lambda(\mathcal{S})=1$ and $h(\mathcal{C}_i)=h$ for any $1\leq i\leq k$. We further let $\nu_{\mathcal{S}}$ be the concatenated-block process given by $\mathcal{S}$. Then $$H(\nu_{\mathcal{S}}) = H(\mathcal{S}),$$
where $H(\nu_{\mathcal{S}})$ is the Shannon entropy rate of
$\nu_{\mathcal{S}}$ and $H(\mathcal{S})$ is the normalized Shannon entropy of $\mathcal{S}$.
\end{lem}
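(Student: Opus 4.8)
The plan is to compute the Shannon entropy rate of $\nu_{\mathcal{S}}$ directly from the defining sequence $\{\mathcal{S}(m)\}$, where $\mathcal{S}(m+1) = \mathcal{S}(m)^{\langle 2 \rangle}$, and to match it against the normalized Shannon entropy $H(\mathcal{S})$. First I would recall the structure established above: the process $\nu_{\mathcal{S}}$ is the concatenated-block process defined by the random vector $X_1^h$ with $X_1^h = \ell(\mathcal{C}_i)$ with probability $\lambda(\mathcal{C}_i)$, and by Remark \ref{block-inde} it is obtained by ``randomizing the start'' of the block-independent process of $X_1^h$, whose consecutive blocks of length $h$ are i.i.d.\ copies of $X_1^h$. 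Since randomizing the start does not change the entropy rate (shifting is measure-preserving and the two processes agree on the tail $\sigma$-algebra that governs the entropy rate), it suffices to compute the entropy rate of the block-independent process.

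Next I would use the formula from Theorem \ref{joint_dis} together with the width/measure-distribution description: because $\mathcal{S}(m)$ is obtained from $\mathcal{S}$ by $2^m$-fold independent cutting and stacking, the label of each column of $\mathcal{S}(m)$ is a concatenation of $2^m$ blocks drawn from $\{\ell(\mathcal{C}_1),\dots,\ell(\mathcal{C}_k)\}$, and by Proposition \ref{width_dis} the measure distribution of $\mathcal{S}(m)$ is the $2^m$-fold Kronecker product $\boldsymbol{\lambda}(\mathcal{S})^{\otimes 2^m}$. Hence, for $n$ a multiple of $h$, say $n = rh$, the probability that a length-$n$ string is realized as a block of $r$ consecutive labels (starting at a block boundary) is exactly $\prod_{j=1}^{r} \lambda(\mathcal{C}_{i_j})$, so the entropy of such an aligned length-$n$ window is $r \cdot H(X_1^h) = r \sum_{i=1}^k \big(-\lambda(\mathcal{C}_i)\log\lambda(\mathcal{C}_i)\big)$ when the labels are distinct, and more generally $r\sum_{a_1^h}\big(-\lambda_{\mathcal{S}}(a_1^h)\log\lambda_{\mathcal{S}}(a_1^h)\big)$ after merging columns with equal labels (cf.\ Definition \ref{label_function}). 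Dividing by $n = rh$ gives exactly $H(\mathcal{S})$ in the limit.

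The remaining (minor) obstacle is the boundary/alignment effect: an arbitrary length-$n$ window need not start at a multiple of $h$, and for the randomized-start process one must average over the $h$ possible offsets. I would handle this by the standard sandwiching argument: for any $n$, let $r = \lfloor n/h \rfloor$; the entropy $H(\widetilde Y_1^n)$ of the randomized-start process differs from $r\, h\, H(\mathcal{S})$ by at most an additive term of order $h\log|\mathcal{A}|$ (the uncertainty contributed by the random offset plus at most two partially-covered blocks), since conditioning on the offset and on the at-most-two partial blocks is an $O(1)$-bounded correction. Dividing by $n$ and letting $n\to\infty$, the correction vanishes and one obtains $H(\nu_{\mathcal{S}}) = \lim_{n\to\infty} H(\widetilde Y_1^n)/n = H(\mathcal{S})$. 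I expect the only place requiring care is making the offset/partial-block bookkeeping precise; everything else is a direct consequence of the i.i.d.-block structure and the Kronecker-product description of the measure distributions.
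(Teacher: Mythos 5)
Your proposal is correct and follows essentially the same route as the paper: compute the entropy rate of the block-independent process directly from the i.i.d.\ block structure (getting $H(X_1^h)/h = H(\mathcal{S})$), then argue that randomizing the start does not change the entropy rate. The only substantive difference is that the paper simply cites Shields for the randomized-start step, whereas you spell it out via the sandwiching argument over offsets and partially-covered boundary blocks — that argument is correct and is exactly the standard proof of the cited fact. One small caution: your parenthetical justification that the randomized-start and block-independent processes ``agree on the tail $\sigma$-algebra that governs the entropy rate'' is not a valid reason (the entropy rate of a process is not a function of its tail $\sigma$-algebra), but since you then prove the same conclusion by the sandwiching bookkeeping, this loose remark is harmless and should simply be dropped.
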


\begin{proof}
Without loss of generality, we assume that columns in $\mathcal{S}$ have distinct labels (otherwise, before performing the independent cutting and stacking, we can merge columns with the same label and Remark \ref{distribution_is_stable} implies that this keeps the width distribution of the resulting gadget). Let $\{Y_n\}$ be the block-independent process of $\mathcal{S}$ as defined in Remark \ref{block-inde}. For any $n$, write $n=k l + r$ where $k, r$ are nonnegative integers and $1\leq r \leq l-1$. Then by definition,
\begin{align}
H(Y) &= \lim_{n\rightarrow \infty}\frac{1}{n} H(Y_1, Y_2, \cdots Y_{n}) \notag \\
&= \lim_{k\rightarrow \infty}  \frac{k H(Y_1, Y_2, \cdots, Y_l)+H(Y_{k l+1}, \cdots Y_{kl +r}) }{k l + r}  \notag  \\
&= \frac{H(Y_1,Y_2, \cdots Y_l)}{l} \notag \\
&= H(\mathcal{S}),
\end{align}
where $H(Y)$ is the Shannon entropy rate of $\{Y_n\}$.
On the other hand, recalling from Remark \ref{block-inde} that $\nu_S$ is obtained from $\{Y_n\}$ by randomizing the start, 
we derive that $\nu_{\mathcal{S}}$ and $\{Y_n\}$ have the same Shannon entropy rate \cite{sh96}. Hence, we have $H(\nu_{\mathcal{S}})= H(Y) = H({\mathcal{S}}),$ proving the lemma.
\end{proof}

Now suppose the sequence of gadgets $\{\mathcal{G}(m)\}_{m=1}^\infty$ (labelled over the alphabet $\{0,1\}$) constructed in Section \ref{constru_of_counter} satisfies the conditions in Theorem \ref{cut_and_erg}. On the one hand, each fixed gadget $\mathcal{G}(m)$ from this sequence gives a binary block-independent process $\nu_{\mathcal{G}(m)}$; on the other hand, Theorem \ref{cut_and_erg} implies that there is a binary ergodic final process $\mu_\infty^{\mathcal{G}}$ corresponding to this sequence. Indeed, we have the following lemma.

\begin{lem} \label{weak_conv}
$\nu_{\mathcal{G}(m)}$ converges weakly to $\mu_\infty^{\mathcal{G}}$ as $m$ goes to infinity.
\end{lem}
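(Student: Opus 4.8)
The plan is to identify the two processes through their finite-dimensional distributions and invoke the characterization of the final-process distribution from Theorem~\ref{joint_dis}. Recall that $\nu_{\mathcal{G}(m)}$ is the concatenated-block process given by the gadget $\mathcal{G}(m)$, i.e.\ the $(T,\mathcal{P}_{\mathcal{G}(m)})$-process of the ``doubling'' sequence $\mathcal{G}(m), \mathcal{G}(m)^{\langle 2\rangle}, \mathcal{G}(m)^{\langle 4\rangle},\dots$, while $\mu_\infty^{\mathcal{G}}$ is the final process of $\{\mathcal{G}(m)\}_{m=1}^\infty$ itself. Since weak convergence of processes on a finite alphabet is exactly convergence of all finite-dimensional distributions, it suffices to show that for every fixed $k$ and every $a_1^k\in\{0,1\}^k$, $\nu_{\mathcal{G}(m)}(a_1^k)\to\mu_\infty^{\mathcal{G}}(a_1^k)$ as $m\to\infty$.

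First I would apply Theorem~\ref{joint_dis} to both sides. For the final process,
$$\mu_\infty^{\mathcal{G}}(a_1^k)=\lim_{n\to\infty}\sum_{\mathcal{C}\in\mathcal{G}(n)} p_k(a_1^k\mid\mathcal{C})\,\lambda(\mathcal{C}),$$
so for any $\varepsilon>0$ there is $N$ with $\bigl|\mu_\infty^{\mathcal{G}}(a_1^k)-\sum_{\mathcal{C}\in\mathcal{G}(n)} p_k(a_1^k\mid\mathcal{C})\lambda(\mathcal{C})\bigr|<\varepsilon$ for all $n\ge N$. For the concatenated-block process $\nu_{\mathcal{G}(m)}$, the same theorem applied to the doubling sequence gives $\nu_{\mathcal{G}(m)}(a_1^k)=\lim_{j\to\infty}\sum_{\mathcal{C}\in\mathcal{G}(m)^{\langle 2^j\rangle}} p_k(a_1^k\mid\mathcal{C})\lambda(\mathcal{C})$. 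The key point is that $p_k(a_1^k\mid\mathcal{C})$ is an empirical $k$-block frequency along the label word $\ell(\mathcal{C})$, and this frequency is essentially preserved under independent cutting and stacking up to a boundary error: if $\mathcal{C}'\in\mathcal{S}\ast\mathcal{S}'$ is formed by stacking $M$ blocks of length $h$, then $\ell(\mathcal{C}')$ is a concatenation of $M$ label words drawn from $\mathcal{S},\mathcal{S}'$, and the $k$-block count of $a_1^k$ in $\ell(\mathcal{C}')$ differs from the $\lambda$-weighted sum of the individual $k$-block counts by at most $(M-1)(k-1)$ — the number of ``seam'' positions. Averaging against $\lambda(\mathcal{C}')$ over the whole gadget and using that all columns have height $h$, one gets $\bigl|\sum_{\mathcal{C}\in\mathcal{G}(m)^{\langle 2^j\rangle}} p_k(a_1^k\mid\mathcal{C})\lambda(\mathcal{C}) - \sum_{\mathcal{C}\in\mathcal{G}(m)} p_k(a_1^k\mid\mathcal{C})\lambda(\mathcal{C})\bigr| = O\!\bigl(\tfrac{k}{h_m}\bigr)$, where $h_m=l_m$ is the height of $\mathcal{G}(m)$; crucially this bound is uniform in $j$. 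Letting $j\to\infty$ yields $\bigl|\nu_{\mathcal{G}(m)}(a_1^k)-\sum_{\mathcal{C}\in\mathcal{G}(m)} p_k(a_1^k\mid\mathcal{C})\lambda(\mathcal{C})\bigr|=O(k/l_m)$.

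Combining the two displays: for $m\ge N$,
$$\bigl|\nu_{\mathcal{G}(m)}(a_1^k)-\mu_\infty^{\mathcal{G}}(a_1^k)\bigr|\le \bigl|\nu_{\mathcal{G}(m)}(a_1^k)-\textstyle\sum_{\mathcal{C}\in\mathcal{G}(m)} p_k(a_1^k\mid\mathcal{C})\lambda(\mathcal{C})\bigr| + \varepsilon = O(k/l_m)+\varepsilon,$$
and since $l_m\to\infty$ by construction (Step 4 forces $l_{m+1}/l_m$ to be a positive integer $\ge 2$, so $l_m\ge l_1 2^{m-1}$), the first term vanishes; as $\varepsilon$ was arbitrary, $\nu_{\mathcal{G}(m)}(a_1^k)\to\mu_\infty^{\mathcal{G}}(a_1^k)$. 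Doing this for every $k$ and every $a_1^k$ establishes weak convergence.

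I expect the main obstacle to be making the ``seam error'' estimate fully rigorous and genuinely uniform in $j$: one must track how the independent cutting-and-stacking distributes mass, check that every column of $\mathcal{G}(m)^{\langle 2^j\rangle}$ really is a concatenation of exactly $2^j$ length-$l_m$ blocks labelled by columns of $\mathcal{G}(m)$, and verify that the $\lambda$-weighted average of per-block frequencies over $\mathcal{G}(m)^{\langle 2^j\rangle}$ reduces exactly to the average over $\mathcal{G}(m)$ (this is where Proposition~\ref{width_dis}, giving the Kronecker-product structure of the measure distribution, does the real work). Everything else — invoking Theorem~\ref{joint_dis} twice and the $\varepsilon/2$ bookkeeping — is routine.
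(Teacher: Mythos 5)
Your proof is correct, and the seam-error estimate you identify as the crux (the count of $a_1^k$ in a concatenated label word differs from the sum of per-block counts by at most $(k-1)$ per seam, hence an $O(k/l_m)$ discrepancy uniformly in the number $2^j$ of concatenated blocks) is sound; I checked that
$\sum_{\mathcal{C}'\in\mathcal{G}(m)^{\langle 2^j\rangle}}p_k(a_1^k\mid\mathcal{C}')\lambda(\mathcal{C}')
=\tfrac{2^j(l_m-k+1)}{2^jl_m-k+1}\sum_{\mathcal{C}\in\mathcal{G}(m)}p_k(a_1^k\mid\mathcal{C})\lambda(\mathcal{C})+O\!\left(\tfrac{k}{l_m}\right)$
using Proposition~\ref{width_dis}, and the prefactor and error are indeed bounded independently of $j$. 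However, your route differs from the paper's. You invoke Theorem~\ref{joint_dis} \emph{twice}: once for $\mu_\infty^{\mathcal{G}}$ along its defining sequence, and once for $\nu_{\mathcal{G}(m)}$ along the auxiliary doubling sequence $\mathcal{G}(m)^{\langle 2^j\rangle}$, and then reconcile the two limits with the uniform-in-$j$ seam estimate. The paper instead computes $\nu_{\mathcal{G}(m)}(a_1^k)$ in one shot from the explicit formula for concatenated-block measures (Definition~\ref{conca_def}, equation~(\ref{measure_relation})): it splits $\frac{1}{l_m}\sum_{i=0}^{l_m-1}\mu_{\mathcal{G}(m)}^*(\psi^{-1}\varphi^{-i}\{\cdots\})$ into the interior range $i\le l_m-k$, which reduces exactly to $\frac{l_m-k+1}{l_m}\sum_{\mathcal{C}\in\mathcal{G}(m)}p_k(a_1^k\mid\mathcal{C})\lambda(\mathcal{C})$ via $\mu^{(l_m)}$ alone, and the boundary range $i>l_m-k$, which is controlled by $(\mu^{(l_m)}\times\mu^{(l_m)})$ and bounded by $k/l_m$; Theorem~\ref{joint_dis} is then needed only once, for the final process. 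The paper's computation is therefore shorter and avoids the double-limit bookkeeping, but your version is just as valid and has the minor advantage of treating both processes symmetrically through the same theorem, which makes the role of the seam error conceptually transparent; the real work in either approach is the same $O(k/l_m)$ estimate.
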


\begin{proof}
Since both $\nu_{\mathcal{G}(m)}$ and $\mu_f^{\mathcal{G}}$ are binary stationary processes, it suffices for us to show that for any $k$ and any $a_1^k\in \{0,1\}$,
\begin{align*}
\lim_{m\rightarrow \infty} \nu_{\mathcal{G}(m)} (a_1^k) = \mu_\infty^{\mathcal{G}}(a_1^k).
\end{align*}
To this end, for any $\epl>0$ and any fixed $k\in \mathbb{N}^+$, we first choose $m\in \mathbb{N}+$ such that $k/l_m<\epl$ where $l_m$ is the height of columns in $\mathcal{G}(m)$. Then, we let $\mu^{(l_m)}$ be the measure distribution (which is defined on $\{0,1\}^{l_m}$) of 
$\mathcal{G}(m)$ and let $\mu_{\mathcal{G}(m)}^*$ be the distribution of the block-independent process defined by $\mathcal{G}(m)$, which, by definition, is the infinite-product measure of $\mu^{(l_m)}$ on $\{0,1\}^\infty$. We further let $\psi^{(m)}: (\{0,1\}^{l_m})^\infty \rightarrow  \{0,1\}^\infty$ be the mapping given by $\psi(w(1),w(2),\cdots)=w(1)w(2)\cdots$ for any choice of $w(1),w(2),\cdots$ in $\mathcal{A}^{l_m}$ and
$\varphi$ be the left shift. 
Then, we deduce from (\ref{measure_relation}) that for any $a_1^k\in \{0,1\}^k$,
{\small
\begin{align}\label{measure_1}
\nu_{\mathcal{G}(m)}(a_1^k)&= \frac{1}{l_m} \sum_{i=0}^{l_m-1} \mu_{\mathcal{G}(m)}^* (\psi^{-1}(\varphi^{-i}\{x_1^\infty: x_1^k=a_1^k\})) \notag \\
&=\frac{1}{l_m} \left( \sum_{i=0}^{l_m-k} \mu_{\mathcal{G}(m)}^* (\psi^{-1} (\varphi^{-i} \{x_1^\infty: x_1^k=a_1^k\})) + \sum_{i=l_m-k+1}^{l_m-1}  \mu_{\mathcal{G}(m)}^* (\psi^{-1} (\varphi^{-i} \{x_1^\infty: x_1^k=a_1^k)) \right) \notag \\
&\overset{(s)}{=} \frac{1}{l_m} \left( \sum_{i=0}^{l_m-k} \mu^{(l_m)} (\{x_1^{l_m}: x_{i+1}^{i+k} = a_1^k\})  + \sum_{i=l_m-k+1}^{l_m-1}  (\mu^{(l_m)} \times \mu^{(l_m)})(\{x_1^{2 l_m}: x_{i+1}^{i+k} = a_1^k\}) \right) \notag \\
&\overset{(t)}{=} \frac{1}{l_m}  \sum_{i=0}^{l_m-k} \sum_{\mathcal{C}\in \mathcal{G}(m)} \mathbf{1}_{\{\ell(\mathcal{C})_{i+1}^{i+k}=a_1^k\}}(\mathcal{C})  \lambda(\mathcal{C}) + \frac{1}{l_m} \sum_{i=l_m-k+1}^{l_m-1} \sum_{\mathcal{D} \in \mathcal{G}(m)^{\langle 2 \rangle}}\mathbf{1}_{\{\ell(\mathcal{D})_{i+1}^{i+k}=a_1^k\} } (\mathcal{D})\lambda(D),
\end{align}
}
\!\!where $\ell(\mathcal{C})_{i}^j$ is the subsequence consists of entries from the $i$-th position to the $j$-th position of the sequence $\ell(\mathcal{C})$ for any $1\leq i\leq j\leq l_m$, $(s)$ follows from the fact that $\mu_{\mathcal{G}(m)}^*$ is the product measure of $\mu^{(l_m)}$ and $(t)$ follows from the definition of the distribution of labeled sequences in a gadget.
Observing that
$$\frac{1}{l_m} \sum_{i=l_m-k+1}^{l_m-1} \sum_{\mathcal{D} \in \mathcal{G}(m)^2}\mathbf{1}_{\{\ell(\mathcal{D})_{i+1}^{i+k}=a_1^k\} } \lambda(\mathcal{D})\leq \frac{k}{l_m} \sum_{\mathcal{D}\in \mathcal{G}(m)^2} \lambda(D) =\frac{k}{l_m} <\epl,$$
we derive that
$$
\frac{1}{l_m} \sum_{i=l_m-k+1}^{l_m-1} \sum_{\mathcal{D} \in \mathcal{G}(m)^2}\mathbf{1}_{\{\ell(\mathcal{D})_{i+1}^{i+k}=a_1^k\} } \lambda(\mathcal{D}) \rightarrow 0 \quad \mbox{as $m\rightarrow \infty$ }
$$
due to the arbitrariness of $\epl$. It then follows from (\ref{measure_1}) that
\begin{align}
\lim_{m\rightarrow \infty} \nu_{\mathcal{G}(m)} (a_1^k) &=\lim_{m\rightarrow \infty} \frac{1}{l_m}  \sum_{i=0}^{l_m-k} \sum_{\mathcal{C}\in \mathcal{G}(m)} \mathbf{1}_{\{\ell(\mathcal{C})_{i+1}^{i+k}=a_1^k\}}(\mathcal{C})  \lambda(\mathcal{C})  \notag \\
&=\lim_{m\rightarrow \infty} \frac{l_m-k+1}{l_m} \sum_{\mathcal{C}\in \mathcal{G}(m)} \frac{\sum_{i=0}^{l_m-k} \mathbf{1}_{\{\ell(\mathcal{C})_{i+1}^{i+k} =a_1^k\}}(\mathcal{C})}{l_m-k+1} \lambda(\mathcal{C}) \notag \\
&=\lim_{m\rightarrow \infty} \frac{l_m-k+1}{l_m} \sum_{\mathcal{C}\in \mathcal{G}(m)} p_k(a_1^k\vert \mathcal{C}) \lambda(\mathcal{C}) \notag \\
&= \mu_\infty^\mathcal{G}(a_1^k), \notag
\end{align}
where the last equality follows from Theorem \ref{joint_dis}, proving the lemma.
\end{proof}

Based on all the lemmas and results given above, we are prepared to complete the proof of Proposition \ref{property_of_mu}.
\medskip

\noindent\textbf{{Proof of Proposition \ref{property_of_mu}:} }
The proofs of $(B)$ and $(C)$ for any $\{\alpha_m\}_{m=1}^\infty$ are similar to those given in Section III.1.c of \cite{sh96}. We sketch the proof here for completeness. First, noting that for any $m$ and sample point $\omega$ in the support of $\mathcal{L}(m)$ yet not in the top $m$ levels, $T^j \omega\in \mathcal{P}_1$ for any $0\leq j \leq m-1$, where $\mathcal{P}_1$ is defined as in Definition \ref{par_map_def}.
Hence, we derive by definition that
$$\mu_\infty^\mathcal{G}(\{x_1^\infty: x_1^m= \underbrace{1 \cdots 1}_{m}\}) \geq \left( 1-\frac{m}{l_m}\right) \beta_m\geq \alpha_m,$$
where the last inequality follows from Step 4 of the construction in Section \ref{constru_of_counter}. This proves $(B)$.
To prove $(C)$, recall from Condition $(c)$ in Step 4 of the construction in Section \ref{constru_of_counter} that $\mathcal{R}(m+1)$ and $\{\mathcal{L}(m,2), \mathcal{R}(m)\}$ are $\epl_m$-independent, where $\epl_m$ goes to $0$ as $m$ goes to infinity. This, by direct calculation, implies that $\mathcal{G}(m)$ and $\mathcal{G}(m+1)$ are $(\epl_m+\beta_m+\beta_{m+1})$-independent. Since $\epl_m+\beta_m+\beta_{m+1}$ goes to $0$ as $m$ goes to infinity, Theorem \ref{cut_and_erg} implies that $\mu_\infty^{\mathcal{G}}$ is an ergodic process, proving $(C)$.

The proofs of $(A)$ and $(D)$ are more involved. First we note from the construction of $\mu_\infty^{\mathcal{G}}$ that it is the $(T, \mathcal{P}_{\mathcal{G}(1)})$-process given by the sequence $\{\mathcal{G}(m)\}_{m=1}^\infty$. For each $m$, letting $\nu_{\mathcal{G}(m)}$ be the concatenated-block process given by $\mathcal{G}(m)$, we deduce from
Lemma \ref{weak_conv} that
$\nu_{\mathcal{G}(m)}$ converges weakly to $\mu_\infty^{\mathcal{G}}$, as $m$ goes to infinity. Recalling that the entropy rate is weakly upper semi-continuous with respect to the weak topology on the space of stationary probability measures (see Theorem I.9.1 of \cite{sh96}), we have
\begin{align*}
H(\mu_\infty^{\mathcal{G}}) \geq \limsup_{n\rightarrow \infty} H(\nu_{\mathcal{G}(m)})=\limsup_{n\rightarrow \infty} H({\mathcal{G}(m)}),
\end{align*}
where the last equality follows from Lemma \ref{conca_pro_entropy}. Hence, in order to prove $H(\mu_\infty^{\mathcal{G}})>1/2$, it suffices to show $H_{\mathcal{G}(m)}>1/2+\delta$ for some $\delta>0$.

To this end,
we first note from the definition of $\mathcal{G}(1)$ 
that the normalized Shannon entropy of $\mathcal{G}(1)$ can be computed as
\begin{align*}
H({\mathcal{G}(1)}) &= -\frac{1}{l_1} \left( \frac{2}{2^{2 l_1/3}} \log \frac{2}{2^{2l_1/3}} + (2^{2l_1/3}-2)\frac{1}{2^{2l_1/3}} \log \frac{1}{2^{2 l_1 /3}}\right) \\
&=-\frac{1}{l_1} \left( \frac{2}{2^{2l_1 /3}} \left(\frac{2 l_1}{3} -1 \right) +\frac{2 l_1}{3} -\frac{2}{2^{2l_1/3}} \frac{2 l_1}{3}\right) \\
&= \frac{2}{3}- \frac{1}{ l_1 2^{2l_1/3 -1}}.
\end{align*}
Then, recalling from Corollary \ref{S(m)_and_S(m+1)} that
 $H({\mathcal{G}(m+1)}) \geq H({\mathcal{G}(m)})-H_{\textnormal{b}}(\beta_{m+1})$, we have
\begin{align} \label{H_{sm}}
H({\mathcal{G}(m)}) \geq H({\mathcal{G}(1)}) - \sum_{m=2}^\infty H_{\textnormal{b}}(\beta_m) \geq \frac{2}{3}-\frac{1}{l_1 2^{2l_1/3}-1}-\sum_{m=2}^\infty H_{\textnormal{b}}(\beta_m)
\end{align}
for any $m\geq 1$ by an induction argument.
We also observe that
\begin{align} 
\sum_{m=1}^\infty H_{\textnormal{b}}(1/m^2) =\sum_{m=1}^\infty \left(-\frac{1}{m^2} \log \frac{1}{m^2} -\frac{m^2-1}{m^2} \log \frac{m^2-1}{m^2}\right) \notag 
\end{align}
\begin{align} \label{sum_of_H(1/m^2)}
&= \sum_{m=1}^\infty \left( \frac{2}{m^2} \log m + \frac{m^2-1}{m^2} \log \left(1+ \frac{1}{m^2-1} \right) \right) \notag \\
&= \sum_{m=1}^\infty \left( \frac{2}{m^2} \log m + \frac{m^2-1}{m^2} \log \left(1+ \frac{1}{m^2-1} \right) \right) \notag \\
& \overset{(k)}{\leq} \sum_{m=1}^\infty \left( \frac{4}{m^{3/2}} + \frac{1}{m^2-1} \right) \notag \\
&<\infty,
\end{align}
where $(k)$ follows from the fact that $\log x =2 \log \sqrt{x} \leq 2 \sqrt{x}$ for any $x>0$. Now, (\ref{sum_of_H(1/m^2)}), together with the fact that $\displaystyle\frac{1}{l_1 2^{2 l_1/3} -1}\rightarrow 0$ as $l_1 \rightarrow \infty$, implies that there exist $l_1$ and $N$ such that
\begin{align} \label{less_than_1/6}
\frac{1}{l_1 2^{2l_1/3}-1}+\sum_{m=2}^\infty H_{\textnormal{b}}(1/(m+N)^2) < \frac{1}{6}.
\end{align}
Finally, for any $m\geq 1$, choose $\alpha_m\triangleq 1/(m+N)^3$, $\beta_m\triangleq 1/(m+N)^2$ and let $l_1$ be chosen as above. Then $(A)$ obviously holds and we conclude from (\ref{H_{sm}}) that $H_{\mathcal{G}(m)} >1/2$ for any $m\geq 1$, proving $(D)$.
\qed

\subsection{Gap between $\lim_{\alpha\rightarrow 1^+} H_{\alpha}(\mu_\infty^\mathcal{G})$ and $H(\mu_\infty^\mathcal{G})$} \label{Non-existence_of_rate}
Using the results from the previous section, we show in this subsection that for $l_1$, $\{\alpha_m\}_{m=1}^\infty$ and $\{\beta_m\}_{m=1}^\infty$ chosen as in the proof of Proposition \ref{property_of_mu}, the R\'enyi entropy rate of the final process $\mu_\infty^{\mathcal{G}}$ does not converge to its Shannon entropy rate.
More explicitly, we have the following theorem.

\begin{thm} \label{dis_of_mu}
Let $\mu_\infty^\mathcal{G}$ be the final process of the gadgets $\{\mathcal{G}(m)\}_{m=1}^\infty$ constructed in Section \ref{constru_of_counter}. If $l_1$, $\{\alpha_m\}_{m=1}^\infty$ and $\{\beta_m\}_{m=1}^\infty$ are chosen such that Proposition \ref{property_of_mu} holds,
then $H_{\alpha}(\mu_\infty^\mathcal{G})$ exists for $\alpha\in [1, \infty)$, 
$H_{\alpha}(\mu_\infty^\mathcal{G})=0$ for $\alpha>1$, and $\lim_{\alpha\rightarrow 1^+}H_{\alpha} (\mu_\infty^\mathcal{G})< H(\mu_\infty^\mathcal{G})$.
\end{thm}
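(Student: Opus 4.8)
The plan is to reduce everything to controlling the largest $n$-block probability
$$
M_n \triangleq \max_{z_1^n}\mu_\infty^{\mathcal{G}}(z_1^n),
$$
after which Properties (A), (B) and (D) of Proposition~\ref{property_of_mu} finish the job. The case $\alpha=1$ requires nothing new: $\mu_\infty^{\mathcal{G}}$ is a stationary process (it is a $(T,\mathcal{P})$-process with $T$ measure-preserving), so $H_1(\mu_\infty^{\mathcal{G}})=H(\mu_\infty^{\mathcal{G}})$ exists by the usual subadditivity of $\{H(Z_1^n)\}$. The content lies in the regime $\alpha>1$.

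First I would bound the \renyi sum from both sides. Writing $\mu_n$ for the $n$-dimensional marginal of $\mu_\infty^{\mathcal{G}}$ and using $\alpha>1$,
$$
M_n^{\alpha}\le\sum_{z_1^n}\mu_n^{\alpha}(z_1^n)=\sum_{z_1^n}\mu_n(z_1^n)\,\mu_n^{\alpha-1}(z_1^n)\le M_n^{\alpha-1}\le 1 .
$$
Taking logarithms (every term lies in $(0,1]$) and dividing by $(1-\alpha)n<0$, which reverses the inequalities, gives
$$
0\;\le\;\frac{H_{\alpha}(Z_1^n)}{n}\;\le\;\frac{\alpha}{\alpha-1}\Bigl(-\tfrac{1}{n}\log M_n\Bigr).
$$
Next, Property (B) says $\mu_\infty^{\mathcal{G}}(\{x_1^m=1\cdots1\})\ge\alpha_m$, hence $M_m\ge\alpha_m$, so $-\tfrac1m\log M_m\le-\tfrac1m\log\alpha_m$; and Property (A) gives $-\tfrac1m\log\alpha_m\to0$. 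Since also $-\tfrac1n\log M_n\ge0$ (because $M_n\le1$), we conclude $-\tfrac1n\log M_n\to0$, and the squeeze above yields $\lim_{n\to\infty}H_{\alpha}(Z_1^n)/n=0$. In particular the limit exists, so $H_{\alpha}(\mu_\infty^{\mathcal{G}})$ is well defined and equals $0$ for every $\alpha>1$; combined with the $\alpha=1$ case, $H_{\alpha}(\mu_\infty^{\mathcal{G}})$ exists for all $\alpha\in[1,\infty)$.

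Finally, since $H_{\alpha}(\mu_\infty^{\mathcal{G}})=0$ for every $\alpha>1$, the left limit $\lim_{\alpha\to1^+}H_{\alpha}(\mu_\infty^{\mathcal{G}})$ equals $0$, whereas Property (D) asserts $H(\mu_\infty^{\mathcal{G}})>1/2$; hence $\lim_{\alpha\to1^+}H_{\alpha}(\mu_\infty^{\mathcal{G}})<H(\mu_\infty^{\mathcal{G}})$, which is precisely the failure of Conjecture~\ref{conj}. The only point inside this theorem that calls for care is tracking the sign of $1-\alpha$ when passing to logarithms and checking that \emph{both} ends of the squeeze tend to $0$ (the lower end being immediate from $M_n\le1$). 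All the genuine difficulty has already been discharged in Proposition~\ref{property_of_mu} — above all Property (D), whose proof rests on the normalized-Shannon-entropy bookkeeping for the gadgets in Section~\ref{Properties_of_mu} (Lemmas~\ref{entropy_rate_of_gadget_2}, \ref{entropy_rate_of_gadget_1} and \ref{conca_pro_entropy}, Corollary~\ref{S(m)_and_S(m+1)}) together with the weak convergence $\nu_{\mathcal{G}(m)}\Rightarrow\mu_\infty^{\mathcal{G}}$ of Lemma~\ref{weak_conv} and upper semicontinuity of the entropy rate. So I anticipate no real obstacle beyond careful transcription.
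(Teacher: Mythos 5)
Your proposal is correct and follows essentially the same route as the paper: bound the \renyi sum between $M_n^\alpha$ and $M_n^{\alpha-1}$, invoke Properties (A) and (B) to force $-\tfrac1n\log M_n\to0$ and hence $H_\alpha(\mu_\infty^{\mathcal G})=0$ for all $\alpha>1$, and then apply Property (D) to get the strict gap. The paper phrases the squeeze via $\limsup\le 0$ and $\liminf\ge 0$ directly on $\tfrac{1}{(1-\alpha)m}\log\sum\mu^\alpha$ rather than isolating $M_n$, but the substance is identical.
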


\begin{proof}
Since the case for $\alpha=1$ corresponds to the Shannon entropy rate which is well-defined for all stationary processes, we focus on the case $\alpha>1$ in what follows.

Recalling from Proposition \ref{property_of_mu} that 
$
\mu_\infty^\mathcal{G}(\{x_1^\infty: x_1^m=\underbrace{1\cdots 1}_{m}\})\geq \alpha_m,
$
we have
\begin{align*}
\limsup_{m\rightarrow \infty}\frac{1}{(1-\alpha) m} \log \sum_{x_1^m} (\mu_\infty^\mathcal{G}(x_1^m))^{\alpha} &\leq \limsup_{m\rightarrow \infty}\frac{1}{(1-\alpha) m} \log (\mu_\infty^\mathcal{G}(\{x_1^\infty: x_1^m=\underbrace{1 \cdots 1}_{m}\}))^{\alpha} \\
&\leq \limsup_{m\rightarrow \infty}\frac{\alpha}{(1-\alpha) m} \log \alpha_m \\
&=0.
\end{align*}
Observing that 
$$\liminf_{m\rightarrow \infty}\frac{1}{(1-\alpha) m} \log \sum_{x_1^m} (\mu_\infty^\mathcal{G}(x_1^m))^{\alpha}\geq 0,$$
we conclude $H_{\alpha}(\mu_\infty^\mathcal{G})=0$ for $\alpha>1$ and the theorem follows since $H(\mu_\infty^\mathcal{G})>1/2$ from Proposition \ref{property_of_mu}.
\end{proof}

{\bf Acknowledgement.} The last author would like to thank Venkat Anantharam for insightful discussions and pointing out relevant references.


\begin{thebibliography}{1}


\bibitem{Arikan1998} E.~Arikan and N.~Merhav, ``Guessing Subject to Distortion,'' \emph{IEEE Trans. Inform. Theory}, vol. 44, pp. 1041-1056, May 1998.

\bibitem{Bassat1978} M.~Ben-Bassat and J.~Raviv, ``R\'enyi's Entropy and Probability of Error," \emph{IEEE Trans. Inform. Theory}, vol. 24, no. 3, pp. 324-331, May 1978.

\bibitem{be2006}
I.~Bengtsson and K.~Zyczkowski,
{\em Geometry of Quantum States: An Introduction to Quantum Entanglement,}
Cambridge, UK: Cambridge Univ. Press, Jan. 2008.

\bibitem{Bern1927}
S.~Bernstein,
``Sur L'extension du Th\'eor\`eme Limite du Calcul des Probabilit\'es aux Sommes de Quantit\'es D\'ependantes,''
{\em Math. Ann.}, vol. 97, pp. 1-59, 1927.


\bibitem{cs1995}
I.~Csiszar,
``Generalized Cutoff Rates and R\'enyi's Information Measures,''
{\em IEEE Trans. Inform. Theory}, vol. 41, no. 4, pp. 26-34, Jan. 1995.


\bibitem{ep2002}
Y.~Ephraim and N.~Merhav,
``Hidden Markov Processes,''
{\em IEEE Trans. Inform. Theory}, vol. 48, no. 6, pp. 1518-1569, Jun. 2002.

\bibitem{gr1983}
P.~Grassberger and I.~Procaccia,
``Characterization of Strange Attractors,''
{\em Phys. Rev. Lett.}, vol. 50, no. 5, pp. 346-349, Jan. 1983.

\bibitem{gray} R.~M.~Gray, \emph{Entropy and Information Theory}, New York, NY: Springer, 2nd Ed., 2011.

\bibitem{ha2013}
G.~Han,
``Limit Theorems in Hidden Markov Models,''
{\em IEEE Trans. Inform. Theory}, vol. 59, no. 3, pp. 1311-1328, Mar. 2013.

\bibitem{an2006} G. Han and B. Marcus, ``Analyticity of Entropy Rate of Hidden Markov Chains," \emph{IEEE Trans. Inform. Theory}, vol. 52, no. 12, pp. 5251-5266, 2006.

\bibitem{NH2008} N. Haydn and S. Vaienti, ``The R\' enyi Entropy Function and the Large Deviation of Short Return Times", {\em Ergodic Theorem and Dynamical Systems}, 30(1), pp.159-179, 2010.

\bibitem{ha1975}
P.~E.~Hart,
``Moment Distributions in Economics: An Exposition,''
{\em J. Royal Stat. Soc. Ser. A}, vol. 138, no. 3, pp. 423-434,1975.

\bibitem{horn1985} R. A. Horn and C. R. Johnson, \emph{Matrix Analysis,} Cambridge, UK: Cambridge Univ. Press, 2nd Ed., Oct. 2012.


\bibitem{ji2004}
P.~Jizba and T.~Arimitsu,
``The World according to R\'enyi: Thermodynamics of Multifractal Systems,''
{\em Ann. Phys.}, vol. 312, no. 1, pp. 17-59, Jul. 2004.

\bibitem{ki1968}
J.~F.~C.~Kingman,
``The Ergodic Theory of Subadditive Stochastic Processes,''
{\em J. Roy. Statist. Soc. Ser. B} vol. 30. no. 3, pp. 499-510, 1968.

\bibitem{ki2008}
V.~S.~Kirchanov,
``Using the R\'enyi Entropy to Describe Quantum Dissipative Systems in Statistical Mechanics,''
{\em Theoret. Math. Phys.}, vol. 156, no. 3, pp. 1347-1355, Sep. 2008.

\bibitem{me2001}
C.~D.~Meyer,
{\em Matrix Analysis and Applied Linear Algebra,} Philadelphia, PA: SIAM Press, Feb. 2001.


\bibitem{pa2005}
L.~Pardo,
``Statistical Inference Based on Divergence Measures,''
London, UK: Chapman \& Hall, Oct. 2005.

\bibitem{ra2001}
Z.~Rached, F.~Alajaji, and L.~L.~Campbell,
``R\'enyi's Divergence and Entropy Rates for Finite Alphabet Markov Sources,''
{\em IEEE Trans. Inform. Theory}, vol. 47, no. 4, pp. 1553-1561, May 2001.

\bibitem{ma1999}
Z.~Rached, A.~Fady, and L.~L.~Campbell, ``R\' enyi's Entropy Rate for Discrete Markov Sources," \emph{Proc. Annu. Conf. Inform. Sci. Syst.}, vol. 99, pp. 17-19, Mar. 1999.

\bibitem{re1961}
A.~R\'enyi,
``On Measures of Information and Entropy,''
{\em Proc. 4th Berkeley Symp. Math., Stat. Prob.}, vol. 1, pp. 547-561, 1960.

\bibitem{se2006}
E.~Seneta,
{\em Non-negative Matrices and Markov Chains,} 2nd Ed., New York, NY: Springer, Jan. 2006.

\bibitem{sh91}
Paul.~C.~Shields,
``Cutting and Stacking: A Method for Constructing Stationary Processes,"
{\em IEEE Trans. Inform. Theory}, vol. 37, no. 6, pp. 1605-1617, Nov. 1991.


\bibitem{sh96}
Paul.~C~Shields,
{\em The Ergodic Theory of Discrete Sample Paths,}
Graduate Studies in Mathematics, v. 13, American Mathematical Society, 1996

\bibitem{sz2001}
W.~Szpankowski,
{\em Average Case Analysis of Algorithms on Sequences,} New York, NY: John Wiley \& Sons, Apr. 2001.

\bibitem{tr97}
L.~N.~Trefethen and D.~Bau, {\em Numerical Linear Algebra}, Philadelphia, PA: SIAM Press, Jun. 1997.

\bibitem{wu17}
C.~Wu, L.~Xu and G.~Han,
``R\' enyi Entropy Rate of Hidden Markov Processes,"
{\em the 2017 IEEE international Symposium on Information Theory (ISIT),}
Aachen, Germany, 2017.06.25-2017.06.30.

\end{thebibliography}
\end{document}